    \newenvironment{customlegend}[1][]{
        \begingroup
        \csname pgfplots@init@cleared@structures\endcsname
        \pgfplotsset{#1}
    }{
        \csname pgfplots@createlegend\endcsname
        \endgroup
    }
    \def\addlegendimage{\csname pgfplots@addlegendimage\endcsname}
\newtheorem{theorem}{Theorem}
\newtheorem{proposition}[theorem]{Proposition}
\newtheorem{lemma}[theorem]{Lemma}
\newtheorem{corollary}[theorem]{Corollary}
\newtheorem{remark}[theorem]{Remark}
\numberwithin{equation}{section}
\newcommand{\overhat}[1]{\expandafter\hat#1}
\newcommand{\pur}{\mathtt{P}}
\newcommand{\Fnorm}[1]{\|{#1}\|_2}
\newcommand{\infnorm}[1]{\|{#1}\|_\infty}
\newcommand{\pnorm}[2]{\|{#1}\|_{#2}}
\newcommand{\vecket}[1]{|{#1})}
\newcommand{\vecbra}[1]{({#1}|}
\newcommand{\transpose}{^\intercal}
\newcommand{\order}[1]{\mathcal{O}({#1})}
\newcommand{\E}{\operatorname{\mathbb{E}}}
\newcommand{\Var}{\operatorname{Var}}
\renewcommand{\O}{\mathcal{O}}
\newcommand{\rmi}{\mathrm{i}}
\newcommand{\rme}{\mathrm{e}}
\newcommand{\rmd}{\mathrm{d}}
\newcommand{\tr}{\operatorname{tr}}
\newcommand{\ket}[1]{|{#1}\rangle}
\newcommand{\bra}[1]{\langle{#1}|}
\definecolor{dgreen}{rgb}{0,0.5,0}
\definecolor{delete}{cmyk}{0.5,0,0,0}
\def\@makefnmark{
	\leavevmode
	\raise.9ex\hbox{\fontsize\sf@size\z@\normalfont\tiny\@thefnmark}}
\begin{document}
\title{Unification of Random Dynamical Decoupling and the Quantum Zeno Effect}
\author{Alexander Hahn\thanks{Center for Engineered Quantum Systems, Macquarie University, 2109 NSW, Australia} \textsuperscript{,}\footnote{Institut f\"ur Theoretische Physik, Leibniz Universit\"at Hannover, Appelstra{\ss}e 2, D-30167 Hannover, Germany}\and Daniel Burgarth\footnotemark[1]\and Kazuya Yuasa\thanks{Department of Physics, Waseda University, Tokyo 169-8555, Japan}}

\maketitle

\begin{abstract}
Periodic deterministic bang-bang dynamical decoupling and the quantum Zeno effect are known to emerge from the same physical mechanism. Both concepts are based on cycles of strong and frequent kicks provoking a subdivision of the Hilbert space into independent subspaces. However, previous unification results do not capture the case of random bang-bang dynamical decoupling, which can be advantageous to the deterministic case but has an inherently acyclic structure. Here, we establish a correspondence between random dynamical decoupling and the quantum Zeno effect by investigating the average over random decoupling evolutions. This protocol is a manifestation of the quantum Zeno dynamics and leads to a unitary bath evolution. By providing a framework that we call \emph{equitability of system and bath}, we show that the system dynamics under random dynamical decoupling converges to a unitary with a decoupling error that characteristically depends on the convergence speed of the Zeno limit. This reveals a unification of the random dynamical decoupling and the quantum Zeno effect.
\end{abstract}

\section{Introduction}\label{sec:Introduction}
The inherent interaction of a quantum system with a surrounding environment causes loss of its initial coherences over time~\cite{Lidar2013}. In particular, in the contemporary NISQ era~\cite{Preskill2018}, such environmental noise is a major obstacle in engineering quantum devices. A prosperous technique to overcome this hurdle is to remove undesired couplings by applying strong controls to the system. This procedure falls under the name of \emph{dynamical decoupling} (see e.g.~Refs.~\cite{Viola1998, Ban1998, Viola2002, Cappellaro2006, Uhrig2007}) and historically originates from the field of NMR~\cite{Hahn1950, Waugh1968, Haeberlen1968}. Its intuitive idea is to perform cycles of frequent and strong unitary kicks, e.g.~laser pulses, on the system with a high repetition rate. If the time intervals between the kicks are shorter than the system-bath interaction time, transitions among the system and environmental subspaces are suppressed due to a rotational average effect~\cite{Viola1999}. In turn, one obtains a decoupled evolution generated by an effective projected (Zeno) Hamiltonian~\cite{Facchi2004, Facchi2005, Burgarth2018, ref:OneBoundRWA}. Likewise, the very same idea underlies the \emph{quantum Zeno effect}~\cite{Misra1977, Home1997, Facchi2001a, Facchi2001, ref:PaoloSaverio-QZEreview-JPA}, where projective measurements are performed on the system frequently. This leads to an effective Zeno dynamics, which takes place independently in the subspaces specified by the applied projections~\cite{FP2002}.

The apparent connection between the dynamical decoupling and the quantum Zeno effect suggests the existence of a framework unifying both concepts. Indeed, such a consolidation has been initiated in Ref.~\cite{Facchi2004} via von Neumann's ergodic theorem~\cite[Theorem~II.11]{Reed1980}. This result has been completed and generalized more recently in Ref.~\cite{Burgarth2018} through an adiabatic theorem~\cite{Messiah2017, Kato1950, ref:unity1, ref:EternalAdiabatic}, where it is shown that a quantum Zeno dynamics is obtained by repeatedly kicking the system with a cycle of general quantum operations. In the context of dynamical decoupling, this situation is described by \emph{periodic deterministic decoupling} (PDD). See Refs.~\cite{Viola2006, Santos2008} for an overview of different decoupling schemes. However, interestingly, for larger system dimensions or longer evolution times, randomized decoupling schemes seem to be more efficient in suppressing errors than their deterministic counterparts~\cite{Viola2005, Santos2006, Viola2006, Santos2008}. In its most elementary form, random dynamical decoupling is achieved by drawing the unitary kicks at random from some group at each time step. We make this notion of so-called \emph{naive random decoupling} (NRD)~\cite{Viola2005, Santos2006, Viola2006, Santos2008} rigorous later. For now, we emphasize that such a scheme is intrinsically acyclic~\cite{Viola2005}. Therefore, the previous unifications of dynamical decoupling and the quantum Zeno effect~\cite{Facchi2004, Burgarth2018} do not directly apply to the NRD case. This raises the question whether a similar consideration is still possible for random dynamical decoupling. The goal of this paper is to answer this question by using a symmetry between the system and environmental dynamics, that we call \emph{equitability of system and bath} and that provides a framework for unifying random dynamical decoupling and the quantum Zeno effect.

The idea is to study the \emph{average evolution} under the random dynamical decoupling, which can be described by the Zeno dynamics, where one frequently projects the system onto the maximally mixed state. Unfortunately, by this procedure one loses all information about the initial system state making it rather unsuitable for quantum control purposes. Rather, the goal of dynamical decoupling is to maintain the initial state. However, one can still learn something about random dynamical decoupling when investigating the Zeno dynamics induced by its average evolution. Namely, by removing the interaction Hamiltonian, the latter leads to an environmental unitary dynamics. 
As a consequence, the environmental evolution under non-averaged random dynamical decoupling has to be close to the unitary with a high probability. This statement can be lifted to the system evolution by a Schmidt-decomposition of the Choi-Jamip\l{}kowski state of the total evolution. By this virtue, explicit quantitative error bounds for the random dynamical decoupling are directly obtained from bounding the Zeno error. This shows that the decoupling efficiency of the random dynamical decoupling can be written in terms of the Zeno convergence of its average evolution, which manifests an interesting connection between the random dynamical decoupling and the quantum Zeno effect that we want to explore here in detail. 

This paper is structured as follows. We start by briefly introducing some mathematical preliminaries in Sec.~\ref{sec:Definitions}. It includes a recap on completely positive and trace-preserving (CPTP) maps in Sec.~\ref{sec:CPTP} as well as on randomized dynamical decoupling in Sec.~\ref{sec:Dynamical_Decoupling}. Section~\ref{sec:results} then gives an overview of the main results of this paper. In Sec.~\ref{sec:Zeno_av}, we discuss the convergence of the quantum Zeno limit. The discussion is split into two parts. First, we derive a refined error bound on the convergence of the quantum Zeno limit in Sec.~\ref{sec:Zeno}. We then apply this bound to the case of the average dynamical decoupling protocol to bound the convergence of its environmental evolution in Sec.~\ref{sec:av}. As we wish to shift our discourse from CPTP maps to Choi-Jamio\l{}kowski states, we add a discussion about reduced Choi-Jamio\l{}kowski states in Sec.~\ref{sec:Choi}. Finally, in Sec.~\ref{sec:conergence_tr}, we prove the main theorems presented in Sec.~\ref{sec:results}, and conclude this paper in Sec.~\ref{sec:conclusion}. Appendix~\ref{appendix:lemmas} contains some basic lemmas. The models we use in numerical simulations are specified in Appendix~\ref{appendix:numerics} for transparency reasons.

\section{Mathematical Definitions and Notation}\label{sec:Definitions}
In this section, we introduce the notation we will use throughout the paper. 
We also define what we mean by random dynamical decoupling both in the trajectory and in the average picture.
Let us start with some preliminaries on \emph{completely positive and trace-preserving (CPTP) maps}.

\subsection{Preliminaries on CPTP Maps}\label{sec:CPTP}
Dynamical decoupling protocols are described by CPTP maps, also known as \emph{quantum channels}~\cite{Nielsen2011, Wolf2012, Watrous2018}. 
For completeness and notation, we start by recapitulating some basic properties of CPTP maps. 
This subsection can be skipped by experienced readers.

Recall that a CPTP map $\mathcal{T}$ sends density operators to density operators. 
Its action on an input state $\rho$ can be written in the \emph{Kraus representation}
\begin{equation}
\mathcal{T}(\rho)=\sum_k E_k\rho E_k^\dagger, \label{Kraus-rep}
\end{equation}
where the Kraus operators $E_k$ satisfy $\sum_{k}E_k^\dagger E_k=\mathbb{1}$. 
Now, in a fixed basis $\{\ket{e_i}\}_{i=1,\ldots,d}$ of a $d$-dimensional system, a density operator $\rho$ has a matrix representation
\begin{equation}
\rho =\sum_{i,j} r_{ij}\ket{e_i}\bra{e_j}.
\end{equation}
By concatenating each row of this matrix, we construct an equivalent vector $\vecket{\rho}$ in a $d^2$-dimensional vector space as
\begin{equation}
\vecket{\rho}=\sum_{i,j} r_{ij} \ket{e_i}\otimes\ket{e_j}.
\end{equation}
This procedure falls under the name of \emph{row-vectorization}~\cite{ref:VectorizationHavel, Wood2015, Watrous2018} and is a basis-dependent linear isomorphism $\rho\mapsto\vecket{\rho}$. 
It is closely related to the Hilbert-Schmidt inner product of operators as
\begin{equation}
(A|B)
=\tr(A^\dag B).
\end{equation}
The row-vectorization $A\mapsto\vecket{A}$ can also be done by
\begin{equation}
\vecket{A}=(A\otimes\mathbb{1})|\mathbb{1}),\label{eq:vectorization}
\end{equation}
where $|\mathbb{1})=\sum_i\ket{e_i}\otimes\ket{e_i}$ is a maximally entangled state up to normalization, defined on the given basis $\{\ket{e_i}\}$.
In this framework, a CPTP map $\mathcal{T}:\rho\mapsto\rho'$ is a $d^2\times d^2$ matrix $\hat{\mathcal{T}}:\vecket{\rho}\mapsto\vecket{\rho'}$, acting on vectorized density matrices. 
The action of CPTP maps then just becomes matrix multiplication. This matrix $\hat{\mathcal{T}}$ is a \emph{matrix representation} of the CPTP map $\mathcal{T}$. 
In the following, we denote the matrix representation corresponding to a CPTP map with a hat. 
In order to handle vectorized operators, there is a useful theorem called Roth's lemma~\cite{Ward1999}: for $d\times d$ matrices $A$, $B$, and $C$, we have
\begin{equation}
\vecket{ABC}=(A\otimes C\transpose)\vecket{B},\label{Roth}
\end{equation}
where $\transpose$ denotes the transposition of a matrix with respect to the given basis. 
Using Roth's lemma and the linearity of the vectorization, we get a matrix representation $\hat{\mathcal{T}}$ of the CPTP map $\mathcal{T}$ in terms of the Kraus operators,
\begin{equation}
\hat{\mathcal{T}}=\sum_k E_k\otimes \overline{E_k},\label{Kraus}
\end{equation}
where the bar denotes the entry-wise complex conjugation in the given basis. 
If $\mathcal{T}$ is unitary, the matrix representation~\eqref{Kraus} reduces to
\begin{equation}
\hat{\mathcal{T}}=U\otimes \overline{U},
\end{equation}
with a unitary $U$.

There is another useful representation of CPTP maps, namely~the \emph{Choi-Jamio\l{}kowski state} $\Lambda$ defined by
\begin{equation}
\Lambda=(\mathcal{T}\otimes\mathbb{I})\!\left(\frac{1}{d}|\mathbb{1})(\mathbb{1}|\right),\label{eq:Choi-state}
\end{equation}
where $\mathbb{I}$ represents the identity map.
It is normalized as $\tr\Lambda=1$ for a TP map $\mathcal{T}$, and the map $\mathcal{T}$ is CP iff $\Lambda\ge0$.
Let us consider the spectral decomposition of the Choi-Jamio\l{}kowski state $\Lambda$ of a CPTP map $\mathcal{T}$,
\begin{equation}
\Lambda=\sum_k \lambda_k|v_k)(v_k|.
\label{CJ-spectral}
\end{equation}
We have $\lambda_k\ge0$, $\forall k$, and $\sum_k \lambda_k = 1$.
The eigenvectors $|v_k)$ are orthonormalized as $(v_k|v_\ell)=\delta_{k\ell}$, and provide the Kraus operators
\begin{equation}
E_k=\sqrt{d\lambda_k}\,v_k\label{Kraus-choice}
\end{equation}
of a Kraus representation (\ref{Kraus-rep}) of the given CPTP map $\mathcal{T}$~\cite[Proposition~2.20]{Watrous2018}.
While the Kraus representation is not unique and the set of Kraus operators given by (\ref{Kraus-choice}) is not the only choice, this will be a convenient pick for our analysis in the following.
If only one of the $d^2$ eigenvalues $\lambda_k$ of $\Lambda$ is nonvanishing, namely,~if the Choi-Jamio\l{}kowski state $\Lambda$ is a pure state, the map $\mathcal{T}$ is unitary.

We will be interested in comparing different CPTP maps.
For instance, we will wish to estimate the distance of a CPTP map $\mathcal{T}$ from a unitary evolution in the Zeno limit. 
In order to be able to talk about the distance between CPTP maps, we will need some distance measures, which can be defined via norms of maps. 
We will use Schatten $p$-norms defined for matrices $A$ by
\begin{equation}
\|A\|_p
=\left(
\tr[(A^\dagger A)^{\frac{p}{2}}]
\right)^{\frac{1}{p}},
\end{equation}
where $\dagger$ denotes the adjoint with respect to the Hilbert-Schmidt scalar product.
A useful norm for our purposes will be the Frobenius norm 
\begin{equation}
\Fnorm{A}=\sqrt{\tr(A^\dagger A)}.
\label{eq:Frobenius_norm}
\end{equation}
We will also use the operator norm $\|A\|_\infty$,
which gives the largest singular value of $A$. 
For density operators, the trace norm
\begin{equation}
\|A\|_1=\tr\sqrt{A^\dag A}=\tr|A|
\end{equation}
is useful.
Note that in finite-dimensional vector spaces all norms are equivalent~\cite{Horn2012}. 
In particular, we will use the following norm equivalences~\cite[Eqs.~(1.168) and~(1.169)]{Watrous2018} for the Choi-Jamio\l{}kowski state $\Lambda$, which is a $d^2\times d^2$ matrix for a CPTP map $\mathcal{T}$ acting on a $d$-dimensional system,
\begin{align}
\infnorm{\Lambda}&\leq \Fnorm{\Lambda}\leq d\infnorm{\Lambda},
\label{eqn:NormEquivalence}
\\
	\|\Lambda\|_\infty
	&\le
	\|\Lambda\|_1
	\le
	d^2\|\Lambda\|_\infty,
\label{eqn:NormEquiv1inf}
\\
	\|\Lambda\|_2
	&\le
	\|\Lambda\|_1
	\le
	d\|\Lambda\|_2.
	\label{eq:equivalence_tr_fro}
\end{align}

The operator norm $\|\hat{\mathcal{T}}\|_\infty$ of a matrix representation $\hat{\mathcal{T}}$ of a map $\mathcal{T}$ can also be induced by the Frobenius norm on the vector space $\mathscr{A}$ of matrices on which the map $\mathcal{T}$ acts,
\begin{equation}
\infnorm{\hat{\mathcal{T}}} =\|\mathcal{T}\|_{2\to2},\qquad
\|\mathcal{T}\|_{p\to q}=\sup_{A\in\mathscr{A}}\frac{\|\mathcal{T}(A)\|_q}{\|A\|_p}.\label{eq:22norm_infnorm}
\end{equation}
To bound the convergence to a unitary via dynamical decoupling, the Frobenius norm $\|\hat{\mathcal{T}}\|_2$ will be a natural one, since it is easily related to the purity $\pur(\Lambda)=\tr(\Lambda^2)$ of the Choi-Jamio\l{}kowski state $\Lambda$, which is a measure of the unitarity of the corresponding map $\mathcal{T}$.
Indeed, the Frobenius norm of the Kraus representation (\ref{Kraus}) yields
\begin{align}
\|\hat{\mathcal{T}}\|_2^2
&=
\left\|
\sum_kE_k\otimes\overline{E_k}
\right\|_2^2
\nonumber\\
&=
\tr\!\left[
\left(
\sum_kE_k^\dag\otimes\overline{E_k}^\dag
\right)
\left(
\sum_\ell E_\ell\otimes\overline{E_\ell}
\right)
\right]
\nonumber\\
&=\sum_{k,\ell}|(E_k|E_\ell)|^2
\nonumber\\
&=d^2\sum_{k,\ell}\lambda_k\lambda_\ell|(v_k|v_\ell)|^2
\nonumber\\
&=d^2\sum_k\lambda_k^2
\nonumber\\
&=d^2\pur(\Lambda),\label{eq:Fnorm_purity}
\end{align}
where we have used the Kraus operators (\ref{Kraus-choice}) and the orthonormality $(v_k|v_\ell)=\delta_{k\ell}$ of the eigenvectors $|v_k)$ of $\Lambda$ in (\ref{CJ-spectral}).
To bound the Zeno limit, the operator norm $\infnorm{\hat{\mathcal{T}}}$ will prove useful, as we will have to deal with Hermitian projection operators, whose operator norms are simply $1$. Even though this property also holds for the $1\rightarrow 1$ norm $\pnorm{\mathcal{T}}{1\rightarrow 1}$, which is induced by the trace norm, the operator norm has the advantage that it directly relates to the Frobenius norm via~\eqref{eq:22norm_infnorm}. This makes it more attractive for a Zeno bound in achieving the goal of unifying the Zeno dynamics with random dynamical decoupling.
In the literature, a commonly used norm of maps is the the diamond norm~\cite[Sec.~3.3]{Watrous2018}
\begin{equation}
	\pnorm{\mathcal{T}}{\diamond}
	=
	\sup_{\|A\|_1=1}
	\|
	(\mathcal{T}\otimes\mathbb{I})(A)
	\|_1,
\end{equation}
which is stable under tensoring additional subsystems. We will also convert our error bounds to this norm using a norm equivalence established in Lemma~\ref{lemma:diamond_choi} in Appendix~\ref{appendix:lemmas}\@.

\subsection{Random Dynamical Decoupling}\label{sec:Dynamical_Decoupling}
In this subsection, we define the protocols of dynamical decoupling introduced in Sec.~\ref{sec:Introduction}.

Consider a bipartite Hilbert space $\mathscr{H}=\mathscr{H}_1\otimes\mathscr{H}_2$ with $d_1=\dim\mathscr{H}_1<\infty$ and $d_2=\dim\mathscr{H}_2<\infty$. We model an open quantum system with the Hamiltonian
\begin{equation}
H=H_1\otimes\mathbb{1}_2+\mathbb{1}_1\otimes H_2+H_{12}, 
\label{Hamilton_decomp}
\end{equation}
where $H_1$ is the Hamiltonian of system 1 (``system''), $H_2$ describes system 2 (``environment'' or ``bath''),
and $H_{12}=\sum_i w_i h_1^{(i)}\otimes h_2^{(i)}$ is the interaction, i.e., $H_1$, $H_2$, $h_1^{(i)}$, and $h_2^{(i)}$ are Hermitian operators, and $w_i\in\mathbb{R}$.
Without loss of generality, we neglect global phases and furthermore set $\tr H_1=\tr H_2=\tr h_1^{(i)}=\tr h_2^{(i)}=0$, $\forall i$.
Using an operator version of the Schmidt decomposition~\cite{Tyson2003}, such decomposition is always possible for finite-dimensional systems, and the operators involved in the decomposition are orthogonal with respect to the Hilbert-Schmidt inner product.
The total evolution under this Hamiltonian $H$ is then given by a unitary $\rme^{-\rmi t\mathcal{H}}$ with $\mathcal{H}=[H,{}\bullet{}]$. 
In the following, we will use calligraphic symbols to denote maps acting on operators on the Hilbert space $\mathscr{H}$.

The goal of a dynamical decoupling protocol is to quickly rotate $\mathscr{H}_1$ in order to average out the system-bath interaction $H_{12}$. In the standard periodic deterministic dynamical decoupling protocol~\cite{Viola1999, Viola1999a, Viola2003, Khodjasteh2005}, we apply unitaries on system 1 in a fixed order. The unitaries are taken from an irreducible representation $\mathscr{V}=\{V_1,\ldots,V_{|\mathscr{V}|}\}$ of a finite group, and the total evolution is given by
\begin{equation}
\mathcal{E}_m(t)
=\left(
(\mathcal{V}_{|\mathscr{V}|}^\dagger\rme^{-\rmi\frac{t}{m|\mathscr{V}|}\mathcal{H}}\mathcal{V}_{|\mathscr{V}|})
\cdots
(\mathcal{V}_2^\dagger\rme^{-\rmi\frac{t}{m|\mathscr{V}|}\mathcal{H}}\mathcal{V}_2)
(\mathcal{V}_1^\dagger\rme^{-\rmi\frac{t}{m|\mathscr{V}|}\mathcal{H}}\mathcal{V}_1)
\right)^m,\quad m\in\mathbb{N},
\label{eqn:StandardDD}
\end{equation}
where $\mathcal{V}_i=(V_i\otimes\mathbb{1}_2){}\bullet{}(V_i^\dag\otimes\mathbb{1}_2)$ with $\mathbb{1}_2$ the identity operator on $\mathscr{H}_2$. Observe that this equation is written in terms of quantum channels, which will provide a unified picture with random dynamical decoupling.

In the random dynamical decoupling protocol, on the other hand, we choose a unitary $V_i^{(j)}$ randomly (i.i.d.) from $\mathscr{V}$ for each step and proceed as
\begin{equation}
\mathcal{E}^{(j)}_n(t)
=
\mathcal{V}_{n+1}^{(j)}
\rme^{-\rmi\frac{t}{n}\mathcal{H}}
\mathcal{V}_n^{(j)}
\cdots\,
\rme^{-\rmi\frac{t}{n}\mathcal{H}}
\mathcal{V}_2^{(j)}
\rme^{-\rmi\frac{t}{n}\mathcal{H}}
\mathcal{V}_1^{(j)},\quad n\in\mathbb{N},
\label{eqn:TrajectoryProtocol}
\end{equation}
where $\mathcal{V}_i^{(j)}=(V_i^{(j)}\otimes\mathbb{1}_2){}\bullet{}(V_i^{(j)\dag}\otimes\mathbb{1}_2)$.
A sequence of $(n+1)$ randomly sampled unitaries 
$\{V_1^{(j)},\ldots,V_{n+1}^{(j)}\}$ 
characterizes a ``trajectory,'' and there are $|\mathscr{V}|^{n+1}$ different trajectories, which are labeled by $j=1,\ldots,|\mathscr{V}|^{n+1}$.
Each of the $|\mathscr{V}|^{n+1}$ trajectories is realized by the sequence of random i.i.d.\ samplings.

If we do not know which random unitaries $V_i^{(j)}$ are drawn from the unitary group $\mathscr{V}$, a good description of the protocol is provided by the average picture.
It is obtained by averaging the trajectory protocols $\mathcal{E}_n^{(j)}(t)$ over all possible trajectory realizations.
The channel representing the average picture is given by
\begin{align}
\mathcal{E}_n^\mathrm{av}(t)
&=\E[\mathcal{E}_n^{(j)}(t)]
\nonumber\\
&=
\frac{1}{|\mathscr{V}|^{n+1}}
\sum_{V_{n+1}^{(j)}\in\mathscr{V}}
\sum_{V_{n}^{(j)}\in\mathscr{V}}
\cdots
\sum_{V_1^{(j)}\in\mathscr{V}}
\mathcal{V}_{n+1}^{(j)}
\rme^{-\rmi\frac{t}{n}\mathcal{H}}
\mathcal{V}_n^{(j)}
\cdots\,
\rme^{-\rmi\frac{t}{n}\mathcal{H}}
\mathcal{V}_1^{(j)}
\nonumber\\
&=(\mathcal{D}\rme^{-\rmi\frac{t}{n}\mathcal{H}}\mathcal{D})^n,
\label{eq:average_protocol}
\end{align}
where $\mathcal{D}$ is the group average
\begin{equation}
\mathcal{D}(A)
=\frac{1}{|\mathscr{V}|}\sum_{V\in\mathscr{V}}(V\otimes\mathbb{1}_2)A(V^\dagger\otimes\mathbb{1}_2)
=\frac{1}{d_1}\mathbb{1}_1\otimes(\tr_1A)
\label{DDprotocol}
\end{equation}
over system 1. By irreducibility of $\mathscr{V}$, it projects
system 1 onto the maximally mixed state $\mathbb{1}_1/d_1.$
Note that $\mathcal{D}$ is CPTP~\cite{Wolf2012} and, furthermore, a Hermitian projection, as is clear from its matrix representation 
\begin{equation}
\hat{\mathcal{D}}=\frac{1}{d_1}\vecket{\mathbb{1}_1}\vecbra{\mathbb{1}_1}\otimes\hat{\mathbb{I}}_{22'},
\end{equation}
where $|\mathbb{1}_1)$ is the row-vectorization of the identity operator $\mathbb{1}_1$ on $\mathscr{H}_1$ to an extended Hilbert space $\mathscr{H}_1\otimes\mathscr{H}_{1'}$ of system 1, while $\hat{\mathbb{I}}_{22'}$ is the identity on another extended Hilbert space $\mathscr{H}_2\otimes\mathscr{H}_{2'}$ of system 2.
The average channel $\mathcal{E}_n^\mathrm{av}(t)$ in~\eqref{eq:average_protocol} is formally equivalent to the standard quantum Zeno dynamics by frequent projective measurements~\cite{Misra1977, Home1997, Facchi2001a, Facchi2001, ref:PaoloSaverio-QZEreview-JPA}. 
That is why we will estimate an error bound on the quantum Zeno limit, which immediately gives a bound on the convergence of $\mathcal{E}_n^\mathrm{av}(t)$ as $n\rightarrow\infty$. This bound can then be used to bound the convergence of the trajectory protocol $\mathcal{E}_n^{(j)}(t)$ in the decoupling limit $n\to\infty$.

\section{Main Results}\label{sec:results}
Our main objective is to establish an explicit connection between the quantum Zeno dynamics and the decoupled dynamics by a random dynamical decoupling sequence. As we have seen above, the average dynamical decoupling protocol is a manifestation of the quantum Zeno procedure. We will show that the decoupling efficiency of the random dynamical decoupling protocol $\mathcal{E}_n^{(j)}(t)$ is determined by the convergence speed of the average protocol $\mathcal{E}_n^\mathrm{av}(t)$, which is ruled by the Zeno limit.
We hence first provide an explicit bound on the quantum Zeno limit via frequent projective measurements. 
The Zeno limit of the average protocol $\mathcal{E}_n^\mathrm{av}(t)$ yields quantum Zeno dynamics, in which system 2 evolves unitarily.
The fact that the evolution of system 2 converges to a unitary on average implies that almost all trajectory evolutions of system 2 in the trajectory picture $\mathcal{E}_n^{(j)}(t)$ also converge to the unitary in the decoupling limit.
Since the total evolution $\mathcal{E}_n^{(j)}(t)$ in the trajectory picture is unitary, this further implies that the evolution of system 1 in the trajectory picture also converges to a unitary, and the evolutions of systems 1 and 2 are decoupled.

To prove these statements in mathematically rigorous ways, we look at the Choi-Jamio\l{}kowski states of the evolutions. 
In this section, we present and discuss main theorems, but defer their proofs to later sections.
Let $\Lambda^{(j)}(n)$ be the Choi-Jamio\l{}kowski state of the random trajectory dynamical decoupling protocol $\mathcal{E}_n^{(j)}(t)$ defined in (\ref{eqn:TrajectoryProtocol}), i.e.,
\begin{equation}
\Lambda^{(j)}(n)
=[\mathcal{E}_n^{(j)}(t)\otimes\mathbb{I}_{1'2'}]\!\left(\frac{1}{d}|\mathbb{1}_{12})(\mathbb{1}_{12}|\right),
\end{equation}
where $d=d_1d_2$ is the dimension of the total system.
By linearity, the Choi-Jamio\l{}kowski state $\Lambda^\mathrm{av}(n)$ of the random dynamical decoupling protocol in the average picture $\mathcal{E}_n^\mathrm{av}(t)$ is obtained by taking the average of the Choi-Jamio\l{}kowski states $\Lambda^{(j)}(n)$ of the trajectory picture over all trajectories,
\begin{equation}
\Lambda^\mathrm{av}(n)=\E[\Lambda^{(j)}(n)].
\end{equation}
Taking the trace over the doubled Hilbert space $\mathscr{H}_1\otimes\mathscr{H}_{1'}$ of system 1, we get a reduced Choi-Jamio\l{}kowski state for system 2,
\begin{equation}
\Lambda_2^\mathrm{av}(n)
=\tr_{11'}\Lambda^\mathrm{av}(n)
=\Lambda_{2,\mathbb{1}_1/d_1}^\mathrm{av}(n).
\end{equation}
This is the Choi-Jamio\l{}kowski state of the reduced dynamics $\mathcal{E}_{2,\mathbb{1}_1/d_1}^\mathrm{av}(n)$ of $\mathcal{E}_n^\mathrm{av}(t)$ for system 2 when the initial state of system 1 is the maximally mixed state $\mathbb{1}_1/d_1$ (see Sec.~\ref{sec:Choi}).

\subsection{Convergence of the Average Evolution of System 2}\label{sec:Av2}
We first show that the reduced Choi-Jamio\l{}kowski state $\Lambda_2^\mathrm{av}(n)$ converges to a pure state, i.e., the corresponding reduced dynamics of system 2 converges to a unitary in the decoupling limit $n\to\infty$.
To bound the convergence, we prove the following theorem on the standard Zeno limit via frequent projective measurements.
\begin{theorem}[Explicit error bound on the quantum Zeno limit]
\label{thm:convergence_average}
Let $H=H^\dagger$ be a Hermitian operator and let $P=P^2=P^\dagger$ be a Hermitian projection. 
Then, for any $t\ge0$ and $n\in\mathbb{N}$, we have
\begin{equation}
\|(\rme^{-\rmi \frac{t}{n}H}P)^n-\rme^{-\rmi tPHP}P\|_\infty
\le\frac{t}{n}\|H\|_\infty + \frac{t^2}{n}\|H\|_\infty^2.
\label{Convergence1}
\end{equation}
\end{theorem}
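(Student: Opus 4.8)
The plan is to compare the Zeno product $(\rme^{-\rmi\frac{t}{n}H}P)^n$ with the product $(\rme^{-\rmi\frac{t}{n}PHP}P)^n$ built from the compressed generator. Since $P$ commutes with $PHP$ (both $P\cdot PHP$ and $PHP\cdot P$ equal $PHP$), the operator $\rme^{-\rmi\frac{t}{n}PHP}$ is block diagonal and acts as the identity on $\ker P$, so $(\rme^{-\rmi\frac{t}{n}PHP}P)^n=\rme^{-\rmi tPHP}P$ exactly; this is precisely the target operator. Writing $\tau=t/n$, $A=\rme^{-\rmi\tau H}P$, and $B=\rme^{-\rmi\tau PHP}P$, the quantity to bound is $\infnorm{A^n-B^n}$, and since $\rme^{-\rmi\tau H}$ is unitary and $P$ a projection we have $\infnorm{A}\le1$ and $\infnorm{B}\le1$.

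First I would telescope,
\[
A^n-B^n=\sum_{k=0}^{n-1}A^k(A-B)B^{n-1-k},
\]
and split the one-step error into a \emph{diagonal} and an \emph{off-diagonal} piece,
\[
A-B=\underbrace{(PUP-\rme^{-\rmi\tau PHP}P)}_{=:R}+\underbrace{(1-P)UP}_{=:S},\qquad U:=\rme^{-\rmi\tau H}.
\]
The decisive structural observation is that $PS=P(1-P)UP=0$, while for $k\ge1$ the word $A^k=(UP)^k$ ends in $P$; hence $A^kS=0$ for every $k\ge1$, and the off-diagonal piece survives only in the single $k=0$ term. This cancellation is what turns an a priori $\order{1}$ estimate into the claimed $\order{1/n}$ bound, and I expect identifying it to be the conceptual crux. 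The $S$-contribution is then just $\infnorm{SB^{n-1}}\le\infnorm{S}$, and since $(1-P)P=0$ we may write $S=(1-P)(U-\mathbb{1})P$, so that $\infnorm{S}\le\infnorm{U-\mathbb{1}}\le\tau\infnorm{H}=\frac{t}{n}\infnorm{H}$, producing the first term of the bound.

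The remaining work is to control the diagonal piece $R=P(\rme^{-\rmi\tau H}-\rme^{-\rmi\tau PHP})P$, and here the main technical point is that its naive $\order{\tau}$ size is illusory: the first-order terms $-\rmi\tau PHP$ agree inside the $P$-sandwich, so $R$ is genuinely $\order{\tau^2}$. To make this quantitative I would apply the Duhamel identity
\[
\rme^{-\rmi\tau H}-\rme^{-\rmi\tau PHP}=-\rmi\int_0^\tau\rme^{-\rmi(\tau-s)H}(H-PHP)\rme^{-\rmi s PHP}\,\rmd s,
\]
sandwich it with $P$, and use $\rme^{-\rmi s PHP}P=P\rme^{-\rmi s PHP}P$ to rewrite the middle factor via $(H-PHP)P=(1-P)HP$. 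The extra projector then pairs with the left factor through $P(1-P)=0$, giving $P\rme^{-\rmi(\tau-s)H}(1-P)=P(\rme^{-\rmi(\tau-s)H}-\mathbb{1})(1-P)$, whose norm is at most $(\tau-s)\infnorm{H}$. Bounding the integrand by $(\tau-s)\infnorm{H}^2$ and integrating yields $\infnorm{R}\le\frac{\tau^2}{2}\infnorm{H}^2$. Summing the $n$ diagonal terms gives $n\infnorm{R}\le\frac{t^2}{2n}\infnorm{H}^2$, and adding the $S$-term yields $\frac{t}{n}\infnorm{H}+\frac{t^2}{2n}\infnorm{H}^2$, which in fact is slightly stronger than the stated estimate.
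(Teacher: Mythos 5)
Your proof is correct, and it even sharpens the stated bound to $\frac{t}{n}\infnorm{H}+\frac{t^2}{2n}\infnorm{H}^2$. The overall architecture coincides with the paper's: both arguments rest on the telescoping identity and on the observation that the off-diagonal defect $(\mathbb{1}-P)\rme^{-\rmi\tau H}P=(\mathbb{1}-P)(\rme^{-\rmi\tau H}-\mathbb{1})P$ is killed by any subsequent factor ending in $P$ and therefore contributes only once, giving the $\order{\tau}$ term a single time rather than $n$ times. The paper organizes this as two separate steps (its Lemma on replacing $(\rme^{-\rmi\tau H}P)^n$ by $(P\rme^{-\rmi\tau H}P)^n$, followed by a telescope against $\rme^{-\rmi\tau PHP}P$), whereas you fold both into one telescope via the split $A-B=R+S$; this is a cosmetic difference. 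Where you genuinely diverge is in the diagonal one-step error $R=P(\rme^{-\rmi\tau H}-\rme^{-\rmi\tau PHP})P$: the paper inserts $\mathbb{1}-\rmi\tau H$ and $\mathbb{1}-\rmi\tau PHP$, lets the linear terms cancel inside the $P$-sandwich, and bounds the two second-order Taylor remainders separately by $\frac{\tau^2}{2}\infnorm{H}^2$ each, for a total of $\tau^2\infnorm{H}^2$. You instead use the Duhamel identity and exploit $(H-PHP)P=(\mathbb{1}-P)HP$ together with $P\rme^{-\rmi(\tau-s)H}(\mathbb{1}-P)=P(\rme^{-\rmi(\tau-s)H}-\mathbb{1})(\mathbb{1}-P)$, so that the two "small" factors sit inside a single integral $\int_0^\tau(\tau-s)\,\rmd s=\tau^2/2$. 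This buys a factor of $2$ on the quadratic term at the cost of a slightly less elementary (integral-representation) argument; the paper's version is marginally more self-contained, relying only on its two Taylor-remainder estimates in Lemma~\ref{lemma:aux_lemma_av}.
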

\begin{proof}
This is proved in Sec.~\ref{sec:Zeno}.
\end{proof}
\noindent
Notice that the bound in Theorem~\ref{thm:convergence_average} shows an $\O(1/n)$ behavior for $n\rightarrow\infty$, which is consistent with numerical simulations. See Fig.~\ref{fig:Zeno_full}. It takes a particularly simple form and therefore has a few advantages over previously known error bounds on the quantum Zeno effect. For instance, in comparison to Ref.~\cite{Dominy2013}, this bound looks much simpler. The bound in Ref.~\cite[Theorem~1]{Becker2020} only shows an $\O(n^{-2/3})$ scaling. Reference~\cite[Lemma~4]{Moebus2019} studies an error bound in a more general setting, which still depends on an undetermined constant. This result has recently been improved in Ref.~\cite[Theorem 3.1]{Moebus2021} to an $O(1/n)$ scaling (also see lemmas 5.2, 5.5 and 5.6 therein). Here, the authors receive a similar bound to ours in a more general setting. Nevertheless, the bound presented in Theorem~\ref{thm:convergence_average} is slightly tighter and emerges from a simpler proof.

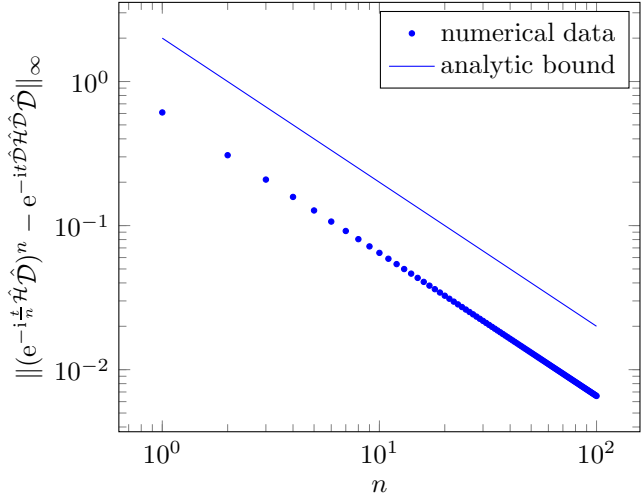
\begin{figure}
\centering
	\begin{tikzpicture}[mark size={1}, scale=1]
	\begin{axis}[
	xmode=log,
	ymode=log,
	xlabel={$n$},
	ylabel={$\|(\rme^{-\rmi\frac{t}{n}\hat{\mathcal{H}}} \hat{\mathcal{D}})^n-\rme^{-\rmi t \hat{\mathcal{D}}\hat{\mathcal{H}}\hat{\mathcal{D}}} \hat{\mathcal{D}}\|_\infty$},
	x post scale=1,
	y post scale=1,
	transpose legend,
	legend columns = 2,
	]
	\addplot[color=blue, only marks] table[x=n, y=DU, col sep=comma]{Dav.csv};
	\addplot[domain=1:100, color=blue, samples=100]{2/x};
	\legend{numerical data, analytic bound}
	\end{axis}
	\end{tikzpicture}
  	\label{fig:Dav}
\caption{
Comparison of the Zeno bound in Theorem~\ref{thm:convergence_average} with a numerical simulation. The projection is chosen to be $\hat{\mathcal{D}}=\frac{1}{d_1}\vecket{\mathbb{1}_1}\vecbra{\mathbb{1}_1}\otimes\hat{\mathbb{I}}_{22'}$ from the random average dynamical decoupling scheme. See Sec.~\ref{sec:av}. The Hamiltonian is a generically chosen traceless Hermitian matrix on two qubits, which is concretely specified in Appendix~\ref{appendix:numerics}\@. The total evolution time $t$ is set so that $T=t\|\hat{\mathcal{H}}\|_\infty=1$.
}
\label{fig:Zeno_full}
\end{figure}

Theorem~\ref{thm:convergence_average} allows us to bound the convergence of the Choi-Jamio\l{}kowski state $\Lambda_{2,\sigma_1}^\mathrm{av}(n)$ of the reduced dynamics $\mathcal{E}_{2,\sigma_1}^\mathrm{av}(n)$ of $\mathcal{E}_n^\mathrm{av}(t)$ for system 2 when the initial state of system 1 is $\sigma_1$.
\begin{proposition}[Distance of the average evolution of system 2 from the Zeno dynamics]
\label{prop:Av2D}
\begin{gather}
\left\|
\Lambda^\mathrm{av}_{2,\sigma_1}(n)
-
\frac{1}{d_2}|\rme^{-\rmi tH_2})(\rme^{-\rmi tH_2}|
\right\|_2
\le
\frac{1}{n}\sqrt{d_1}\,\|\sigma_1\|_2T^2,\label{eq:DU2av}\\
\|
\mathcal{E}_{2,\sigma_1}^\mathrm{av}(n)
-
\rme^{-\rmi t\mathcal{H}_2}
\|_\diamond
\le
\frac{1}{n}\sqrt{d_1}d_2^2\|\sigma_1\|_2T^2,\label{eq:DU2av_diamond}
\end{gather}
for any $t\ge0$ and $n\in\mathbb{N}$, where
\begin{equation}
T=t\|\hat{\mathcal{H}}\|_\infty,
\end{equation}
with $\hat{\mathcal{H}}$ the matrix representation of the generator $\mathcal{H}=[H,{}\bullet{}]$.
\end{proposition}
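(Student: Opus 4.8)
The plan is to pass from the reduced Choi state to the matrix representation of the averaged channel, identify the Zeno limit as the decoupled unitary, and then apply a \emph{compressed} version of Theorem~\ref{thm:convergence_average} in which the linear error term is absent. Recall from~\eqref{eq:average_protocol} that the matrix representation of the averaged channel is $\hat{\mathcal{E}}^\mathrm{av}_n=(\hat{\mathcal{D}}\rme^{-\rmi\frac{t}{n}\hat{\mathcal{H}}}\hat{\mathcal{D}})^n=\hat{\mathcal{D}}(\rme^{-\rmi\frac{t}{n}\hat{\mathcal{H}}}\hat{\mathcal{D}})^n$, i.e.\ exactly the object of Theorem~\ref{thm:convergence_average} with $H\to\hat{\mathcal{H}}$ and $P\to\hat{\mathcal{D}}$ (note $\hat{\mathcal{H}}=H\otimes\mathbb{1}-\mathbb{1}\otimes H\transpose$ is Hermitian, so $\rme^{-\rmi\frac{t}{n}\hat{\mathcal{H}}}$ is unitary), but carrying one extra projection on the left. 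Using the reduced-Choi dictionary of Sec.~\ref{sec:Choi}, the matrix representation of the reduced channel is the contraction $\hat{\mathcal{E}}^\mathrm{av}_{2,\sigma_1}(n)=(\vecbra{\mathbb{1}_1}\otimes\hat{\mathbb{I}}_{22'})\,\hat{\mathcal{E}}^\mathrm{av}_n\,(\vecket{\sigma_1}\otimes\hat{\mathbb{I}}_{22'})$, where the left factor implements the output partial trace $\tr_1$ and the right factor feeds in the system-1 input $\sigma_1$. Applying the same contraction to the difference $\hat{\mathcal{E}}^\mathrm{av}_n-\rme^{-\rmi t\hat{\mathcal{D}}\hat{\mathcal{H}}\hat{\mathcal{D}}}\hat{\mathcal{D}}$, together with $\|\vecbra{\mathbb{1}_1}\|_\infty=\sqrt{d_1}$ and $\|\vecket{\sigma_1}\|_\infty=\Fnorm{\sigma_1}$, already isolates the prefactor $\sqrt{d_1}\,\Fnorm{\sigma_1}$.

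Next I would identify the Zeno generator. Since $\mathcal{D}(A)=\frac{1}{d_1}\mathbb{1}_1\otimes\tr_1 A$ and $\tr H_1=\tr h_1^{(i)}=0$, a short computation shows that the $H_1$ and $H_{12}$ parts of $\mathcal{H}=[H,{}\bullet{}]$ are annihilated by the double projection, leaving $\mathcal{D}\mathcal{H}\mathcal{D}=\mathcal{H}_2\mathcal{D}$ with $\mathcal{H}_2=[\mathbb{1}_1\otimes H_2,{}\bullet{}]$. Hence $\rme^{-\rmi t\hat{\mathcal{D}}\hat{\mathcal{H}}\hat{\mathcal{D}}}\hat{\mathcal{D}}$ acts on states of the form $\frac{1}{d_1}\mathbb{1}_1\otimes\rho_2$ as the decoupled unitary $\rme^{-\rmi t\mathcal{H}_2}$ on system~2, whose reduced Choi state is precisely the subtrahend $\frac{1}{d_2}|\rme^{-\rmi tH_2})(\rme^{-\rmi tH_2}|$ in~\eqref{eq:DU2av}. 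Thus the target of the bound is the reduced Choi state of the Zeno limit, and everything reduces to controlling the reduced Zeno error.

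The heart of the argument, and the step I expect to be the main obstacle, is to show that this reduced error scales like $T^2/n$ rather than carrying the linear term $T/n$ of~\eqref{Convergence1}. The mechanism is that both $\hat{\mathcal{E}}^\mathrm{av}_n$ and its limit are sandwiched by $\hat{\mathcal{D}}$, so only the \emph{in-subspace} part of the Zeno error contributes, while the \emph{leakage} out of the range of $\hat{\mathcal{D}}$---which is exactly what generates the $\frac{t}{n}\infnorm{\hat{\mathcal{H}}}=T/n$ term of Theorem~\ref{thm:convergence_average}---is off-diagonal and is killed by the left projection. I would make this precise by telescoping inside the range of $\hat{\mathcal{D}}$: writing $(\hat{\mathcal{D}}\rme^{-\rmi\frac{t}{n}\hat{\mathcal{H}}}\hat{\mathcal{D}})^n-(\rme^{-\rmi\frac{t}{n}\hat{\mathcal{D}}\hat{\mathcal{H}}\hat{\mathcal{D}}}\hat{\mathcal{D}})^n$ as a telescoping sum of $n$ terms (using $(\rme^{-\rmi\frac{t}{n}\hat{\mathcal{D}}\hat{\mathcal{H}}\hat{\mathcal{D}}}\hat{\mathcal{D}})^n=\rme^{-\rmi t\hat{\mathcal{D}}\hat{\mathcal{H}}\hat{\mathcal{D}}}\hat{\mathcal{D}}$), bounding the flanking factors by contractions, and estimating the per-step difference $\hat{\mathcal{D}}\rme^{-\rmi\frac{t}{n}\hat{\mathcal{H}}}\hat{\mathcal{D}}-\rme^{-\rmi\frac{t}{n}\hat{\mathcal{D}}\hat{\mathcal{H}}\hat{\mathcal{D}}}\hat{\mathcal{D}}$. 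Expanding both to second order, the zeroth- and first-order terms cancel and the leading discrepancy is $-\frac{t^2}{2n^2}\hat{\mathcal{D}}\hat{\mathcal{H}}(\mathbb{1}-\hat{\mathcal{D}})\hat{\mathcal{H}}\hat{\mathcal{D}}=\O((t/n)^2\infnorm{\hat{\mathcal{H}}}^2)$; summing $n$ copies yields $\infnorm{\hat{\mathcal{E}}^\mathrm{av}_n-\rme^{-\rmi t\hat{\mathcal{D}}\hat{\mathcal{H}}\hat{\mathcal{D}}}\hat{\mathcal{D}}}\le T^2/n$, i.e.\ exactly the quadratic term of Theorem~\ref{thm:convergence_average}. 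The delicate point is controlling the higher-order remainder of this expansion uniformly so as to reach the clean constant, which is the same bookkeeping already carried out in the proof of Theorem~\ref{thm:convergence_average} for its in-subspace contribution.

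Finally I would reassemble the estimate, the dimensional factors cancelling cleanly. The difference $\hat{\mathcal{E}}^\mathrm{av}_n-\rme^{-\rmi t\hat{\mathcal{D}}\hat{\mathcal{H}}\hat{\mathcal{D}}}\hat{\mathcal{D}}$ is supported on the range of $\hat{\mathcal{D}}$, of dimension $d_2^2$, so $\Fnorm{\cdot}\le d_2\infnorm{\cdot}\le d_2\,T^2/n$; the output and input contractions contribute $\sqrt{d_1}\,\Fnorm{\sigma_1}$ as above; and the Choi normalization gives $\Fnorm{\Lambda^\mathrm{av}_{2,\sigma_1}(n)-\cdots}=\frac{1}{d_2}\Fnorm{\hat{\mathcal{E}}^\mathrm{av}_{2,\sigma_1}(n)-\cdots}$ via the relation $\Fnorm{\hat{\mathcal{T}}}=d\Fnorm{\Lambda}$ implied by~\eqref{eq:Fnorm_purity}. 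The two factors $d_2$ cancel, leaving $\frac{1}{n}\sqrt{d_1}\,\Fnorm{\sigma_1}\,T^2$, which is~\eqref{eq:DU2av}. The diamond-norm bound~\eqref{eq:DU2av_diamond} then follows at once from~\eqref{eq:DU2av} by the norm equivalence of Lemma~\ref{lemma:diamond_choi}, which relates the diamond norm of the channel difference to the Frobenius norm of the corresponding Choi-state difference at the cost of the extra factor $d_2^2$.
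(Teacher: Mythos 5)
Your proposal is correct and essentially reproduces the paper's proof: the ``compressed'' Zeno bound $T^2/n$ without the linear term, which you identify as the main obstacle, is precisely Proposition~\ref{lemma:Dav_convergence} of the paper (proved there by the same telescoping-plus-second-order-Taylor argument you sketch, and applied to $\hat{\mathcal{D}}$, $\hat{\mathcal{H}}$ in~\eqref{eq:av_error}), the identification of the Zeno generator matches~\eqref{Decoupling_by_av}, the contraction producing the prefactor $\sqrt{d_1}\,\|\sigma_1\|_2$ is the paper's Lemma~\ref{lemma:reduced_map}, and the diamond-norm bound is obtained identically via Lemma~\ref{lemma:diamond_choi}. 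The only difference is cosmetic bookkeeping in the final assembly, where you pass through $\|\hat{\mathcal{T}}\|_2=d\|\Lambda\|_2$ and a rank-$d_2^2$ count while the paper bounds the reduced Choi difference directly by $\sqrt{d_1}\,\|\sigma_1\|_2$ times the operator norm of the full-map difference; both yield the same constant.
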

\begin{proof}
This is proved in Sec.~\ref{sec:av}.
\end{proof}
\begin{remark}
	Notice that the decoupling bound for system 2 given in Proposition~\ref{prop:Av2D} depends on the quantity $T^2/n$, whereas the bound on the Zeno limit in Theorem~\ref{thm:convergence_average} reads $T(1+T)/n$. This is due to the fact that we have introduced an additional unitary $\mathcal{V}_{n+1}$ at the end of the decoupling sequence in~(\ref{eqn:TrajectoryProtocol}). Then, the averaged random dynamical decoupling protocol relates to the Zeno dynamics introduced in Proposition~\ref{lemma:Dav_convergence} instead of the one shown in Theorem~\ref{thm:convergence_average}. See~(\ref{eq:average_protocol}). 
		If the last unitary $\mathcal{V}_{n+1}$ is removed from the sequence in~(\ref{eqn:TrajectoryProtocol}), then Eq.~(\ref{eq:average_protocol}) becomes $\mathcal{E}_n^\mathrm{av}(t)=(\rme^{-\rmi\frac{t}{n}\mathcal{H}}\mathcal{D})^n$. This leads to a change with $T^2$ replaced by $T(1+T)$ in all bounds.
\end{remark}
The convergence of the reduced dynamics of system 2 to the unitary $\rme^{-\rmi tH_2}$ can also be verified by the convergences of the spectral norm $\|\Lambda^\mathrm{av}_{2,\sigma_1}(n)\|_\infty$ and of the purity $\pur\bigl(\Lambda^\mathrm{av}_{2,\sigma_1}(n)\bigr)$ to $1$.
\begin{proposition}[Purity of the average evolution of system 2]
\label{prop:Av2P}
\begin{equation}
\sqrt{\mathtt{P}\bigl(\Lambda^\mathrm{av}_{2,\sigma_1}(n)\bigr)}
\ge
\|\Lambda^\mathrm{av}_{2,\sigma_1}(n)\|_\infty
\ge
\frac{1}{d_2}
(\rme^{-\rmi tH_2}|\Lambda^\mathrm{av}_{2,\sigma_1}(n)|\rme^{-\rmi tH_2})
\ge1-\frac{1}{n}\sqrt{d_1}\,\|\sigma_1\|_2T^2,
\end{equation}
for any $t\ge0$ and $n\in\mathbb{N}$.
\end{proposition}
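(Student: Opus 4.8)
The plan is to read the asserted chain from right to left, peeling off one inequality at a time, so that all the quantitative content is supplied by Proposition~\ref{prop:Av2D} and the remaining work is purely structural. Throughout I would abbreviate $\Lambda=\Lambda^\mathrm{av}_{2,\sigma_1}(n)$ and introduce the rank-one object $\Pi=\frac{1}{d_2}\vecket{\rme^{-\rmi tH_2}}\vecbra{\rme^{-\rmi tH_2}}$. A one-line preliminary is that $\Pi$ is a genuine projection onto the \emph{unit} vector $\frac{1}{\sqrt{d_2}}\vecket{\rme^{-\rmi tH_2}}$: since $(\rme^{-\rmi tH_2}|\rme^{-\rmi tH_2})=\tr\mathbb{1}_2=d_2$, one has $\Pi^2=\Pi$, $\tr\Pi=1$, and $\Fnorm{\Pi}=1$. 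For the leftmost inequality I would note that the purity is the squared Frobenius norm, $\pur(\Lambda)=\tr(\Lambda^2)=\Fnorm{\Lambda}^2$, because $\Lambda$ is a (reduced) Choi-Jamio\l{}kowski state and hence Hermitian and positive; the claim $\sqrt{\pur(\Lambda)}\ge\infnorm{\Lambda}$ is then exactly the universal Schatten bound $\infnorm{\Lambda}\le\Fnorm{\Lambda}$ recorded in~\eqref{eqn:NormEquivalence}.

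For the middle inequality I would use that $\Lambda\ge0$, so that its operator norm equals its largest eigenvalue and therefore dominates every Rayleigh quotient. Evaluating the quotient on the unit vector $\frac{1}{\sqrt{d_2}}\vecket{\rme^{-\rmi tH_2}}$ gives $\infnorm{\Lambda}\ge\frac{1}{d_2}(\rme^{-\rmi tH_2}|\Lambda|\rme^{-\rmi tH_2})=\tr(\Pi\Lambda)$, which is precisely the second link in the chain.

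The last inequality is where Proposition~\ref{prop:Av2D} enters. I would split $\tr(\Pi\Lambda)=\tr(\Pi^2)+\tr\!\big(\Pi(\Lambda-\Pi)\big)=1+\tr\!\big(\Pi(\Lambda-\Pi)\big)$ using $\tr\Pi^2=1$, and then control the deviation term by Cauchy-Schwarz for the Hilbert-Schmidt inner product together with $\Fnorm{\Pi}=1$, giving $|\tr\!\big(\Pi(\Lambda-\Pi)\big)|\le\Fnorm{\Pi}\Fnorm{\Lambda-\Pi}=\Fnorm{\Lambda-\Pi}$. But $\Lambda-\Pi$ is exactly the difference estimated in~\eqref{eq:DU2av}, so $\Fnorm{\Lambda-\Pi}\le\frac{1}{n}\sqrt{d_1}\,\|\sigma_1\|_2T^2$, whence $\tr(\Pi\Lambda)\ge1-\frac{1}{n}\sqrt{d_1}\,\|\sigma_1\|_2T^2$ and the chain closes.

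I do not expect a serious obstacle, since the quantitative heart of the statement already lives in Proposition~\ref{prop:Av2D}; what remains is to recognize the purity as a Frobenius norm, the expectation value as a Rayleigh quotient, and the correction $\tr\!\big(\Pi(\Lambda-\Pi)\big)$ as a Hilbert-Schmidt overlap controllable by Cauchy-Schwarz. The only point demanding a line of care is verifying that $\Pi$ is correctly normalized as a rank-one projection, which is what makes $\Fnorm{\Pi}=1$ and $\tr\Pi^2=1$ hold and thereby lets the single Frobenius estimate~\eqref{eq:DU2av} propagate cleanly through all three inequalities.
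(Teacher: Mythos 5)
Your proposal is correct and follows essentially the same route as the paper's proof: purity versus operator norm, the Rayleigh quotient at the unit vector $\frac{1}{\sqrt{d_2}}\vecket{\rme^{-\rmi tH_2}}$, and then writing the fidelity as $1$ minus a correction controlled by Proposition~\ref{prop:Av2D}. The only cosmetic difference is that you bound the correction term by Hilbert--Schmidt Cauchy--Schwarz against the unit-Frobenius-norm projection, whereas the paper bounds the same Rayleigh quotient by $\|\cdot\|_\infty$ and then invokes the norm equivalence $\|\cdot\|_\infty\le\|\cdot\|_2$; both yield the identical estimate.
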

\begin{proof}
This is proved in Sec.~\ref{sec:av}.
\end{proof}

\subsection{Convergence of the Trajectory Evolution of System 2}\label{sec:Tr2}
The fact that the average evolution of system 2 converges to a unitary implies that almost all the trajectory evolutions of system 2 converge to a unitary.
We prove the following two theorems for the Choi-Jamio\l{}kowski state $\Lambda_{2,\sigma_1}^{(j)}(n)$ of the reduced dynamics $\mathcal{E}_{2,\sigma_1}^{(j)}(n)$ of the trajectory protocol $\mathcal{E}_n^{(j)}(t)$ for system 2 when the initial state of system 1 is $\sigma_1$.
\begin{theorem}[Purity of the trajectory evolution of system 2]
\label{thm:Tr2P}
\begin{equation}
\sqrt{
\mathbb{E}\!\left[
\mathtt{P}\bigl(\Lambda_{2,\sigma_1}^{(j)}(n)\bigr)
\right]
}
\ge
\mathbb{E}\Bigl[\|\Lambda_{2,\sigma_1}^{(j)}(n)\|_\infty\Bigr]
\ge
\mathbb{E}\!\left[
\frac{1}{d_2}
(\rme^{-\rmi tH_2}|\Lambda_{2,\sigma_1}^{(j)}(n)|\rme^{-\rmi tH_2})
\right]
\ge
1-\frac{1}{n}\sqrt{d_1}\,\|\sigma_1\|_2T^2,
\end{equation}
for any $t\ge0$ and $n\in\mathbb{N}$.
\end{theorem}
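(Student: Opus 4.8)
The plan is to reduce the whole chain to the already-established average statement, Proposition~\ref{prop:Av2P}, using two elementary facts: each reduced Choi-Jamio\l{}kowski state $\Lambda_{2,\sigma_1}^{(j)}(n)$ is a genuine (positive semidefinite, unit-trace) Choi state of a CPTP map, and the reduction from the total evolution to the system-2 state is linear, so that $\E[\Lambda_{2,\sigma_1}^{(j)}(n)]=\Lambda_{2,\sigma_1}^\mathrm{av}(n)$. I would treat the three inequalities in the chain separately, arguing per trajectory first and then averaging in whichever direction Jensen's inequality allows.

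For the rightmost inequality I would invoke linearity of the expectation. Since $|\rme^{-\rmi tH_2})$ is a fixed vector, the functional $\Lambda\mapsto(\rme^{-\rmi tH_2}|\Lambda|\rme^{-\rmi tH_2})$ is linear, so
\[
\E\!\left[\frac{1}{d_2}(\rme^{-\rmi tH_2}|\Lambda_{2,\sigma_1}^{(j)}(n)|\rme^{-\rmi tH_2})\right]=\frac{1}{d_2}(\rme^{-\rmi tH_2}|\Lambda_{2,\sigma_1}^\mathrm{av}(n)|\rme^{-\rmi tH_2}),
\]
and the right-hand side is bounded below by $1-\frac{1}{n}\sqrt{d_1}\,\|\sigma_1\|_2T^2$ by the last inequality of Proposition~\ref{prop:Av2P}. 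For the middle inequality I would work per trajectory: because $(\rme^{-\rmi tH_2}|\rme^{-\rmi tH_2})=\tr\mathbb{1}_2=d_2$, the vector $|\rme^{-\rmi tH_2})/\sqrt{d_2}$ is normalized, so its Rayleigh quotient with the positive semidefinite $\Lambda_{2,\sigma_1}^{(j)}(n)$ never exceeds the top eigenvalue $\|\Lambda_{2,\sigma_1}^{(j)}(n)\|_\infty$; taking expectations preserves the inequality.

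The leftmost inequality is the only one needing care with the direction of averaging. Per trajectory, positivity gives $\pur(\Lambda)=\sum_k\lambda_k^2\ge\lambda_{\max}^2=\|\Lambda\|_\infty^2$, hence $\E[\pur(\Lambda_{2,\sigma_1}^{(j)}(n))]\ge\E[\|\Lambda_{2,\sigma_1}^{(j)}(n)\|_\infty^2]$. Writing $X=\|\Lambda_{2,\sigma_1}^{(j)}(n)\|_\infty\ge0$ and taking square roots, Jensen's inequality (equivalently $\E[X]^2\le\E[X^2]$, i.e.\ $\Var X\ge0$) yields $\sqrt{\E[\pur(\Lambda_{2,\sigma_1}^{(j)}(n))]}\ge\sqrt{\E[X^2]}\ge\E[X]$, which is exactly the claim.

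I expect the computation to be routine once the reduction is set up; the one step I would be careful about is the direction of Jensen's inequality in the leftmost bound, since the expectation cannot simply be pulled inside the operator norm or the square root — it is the convexity of $x\mapsto x^2$ together with the per-trajectory purity bound that lets both estimates go the favorable way. The only other ingredient worth stating cleanly is the identity $\E[\Lambda_{2,\sigma_1}^{(j)}(n)]=\Lambda_{2,\sigma_1}^\mathrm{av}(n)$ used in the rightmost step, which follows immediately from linearity of the partial trace and of the contraction against $\sigma_1$ defining the reduction in Sec.~\ref{sec:Choi}.
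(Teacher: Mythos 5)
Your proposal is correct and follows essentially the same route as the paper: the per-trajectory chain $\sqrt{\pur}\ge\|\cdot\|_\infty\ge\frac{1}{d_2}(\rme^{-\rmi tH_2}|\cdot|\rme^{-\rmi tH_2})$ (the paper's Lemma~\ref{lemma:pur_ev} plus the Rayleigh-quotient bound), an application of Jensen's inequality for the purity term (you phrase it via convexity of $x\mapsto x^2$, the paper via concavity of the square root --- these are equivalent), and linearity of the expectation combined with Proposition~\ref{prop:Av2P} for the final bound.
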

\begin{proof}
This is proved in Sec.~\ref{sec:proofTr2}.	
\end{proof}
\noindent
The $\order{1/n}$ scaling of the decoupling limit $n\rightarrow\infty$ is observed in numerical simulations. See the green points in Figs.~\ref{fig:Purity} and~\ref{fig:OpNorm}.
\begin{theorem}[Distance of the trajectory evolution of system 2 from the Zeno dynamics]
\label{thm:Tr2D}
\begin{gather}
\mathbb{E}\!\left[
\left\|
\Lambda_{2,\sigma_1}^{(j)}(n)
-
\frac{1}{d_2}|\rme^{-\rmi tH_2})(\rme^{-\rmi tH_2}|
\right\|_2
\right]
\le
\sqrt{\frac{2}{n}\sqrt{d_1}\,\|\sigma_1\|_2T^2},
\label{eq:DU2tr}
\\
\mathbb{E}\Bigl[
\|
\mathcal{E}_{2,\sigma_1}^{(j)}
-
\rme^{-\rmi t\mathcal{H}_2}
\|_\diamond
\Bigr]
\le
d_2^2\sqrt{\frac{2}{n}\sqrt{d_1}\,\|\sigma_1\|_2T^2},
\label{eq:DU2tr_diamond}
\end{gather}
for any $t\ge0$ and $n\in\mathbb{N}$.
\end{theorem}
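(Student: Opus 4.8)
The plan is to deduce both inequalities from the purity estimate already established in Theorem~\ref{thm:Tr2P}, treating the Frobenius-norm bound~\eqref{eq:DU2tr} as the core statement and then lifting it to the diamond norm~\eqref{eq:DU2tr_diamond} via the norm equivalence of Lemma~\ref{lemma:diamond_choi}. Write $\Xi=\frac{1}{d_2}\vecket{\rme^{-\rmi tH_2}}\vecbra{\rme^{-\rmi tH_2}}$ for the pure target Choi-Jamio\l{}kowski state. Since $\vecket{\rme^{-\rmi tH_2}}$ has squared norm $\tr(\rme^{\rmi tH_2}\rme^{-\rmi tH_2})=d_2$, we have $\tr\Xi=1$ and $\tr\Xi^2=1$, so $\Xi$ is genuinely pure, and $\Lambda_{2,\sigma_1}^{(j)}(n)$ is a density operator as the reduced Choi state of the CPTP map $\mathcal{E}_{2,\sigma_1}^{(j)}$.

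First I would expand the squared Frobenius distance for a single trajectory. Because $\Lambda_{2,\sigma_1}^{(j)}(n)$ and $\Xi$ are both Hermitian,
\begin{equation}
\Fnorm{\Lambda_{2,\sigma_1}^{(j)}(n)-\Xi}^2
=\pur\bigl(\Lambda_{2,\sigma_1}^{(j)}(n)\bigr)+1
-\frac{2}{d_2}\vecbraket{\rme^{-\rmi tH_2}}{\Lambda_{2,\sigma_1}^{(j)}(n)}{\rme^{-\rmi tH_2}}.
\end{equation}
The decisive move is to bound the purity term from above by its trivial maximum, $\pur(\Lambda_{2,\sigma_1}^{(j)}(n))\le1$ (valid since the eigenvalues of a density operator sum to one), rather than using any lower bound. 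This removes the purity and leaves only the overlap, which is exactly the quantity controlled in expectation by Theorem~\ref{thm:Tr2P}. Taking the expectation over trajectories and invoking $\E[\frac{1}{d_2}\vecbraket{\rme^{-\rmi tH_2}}{\Lambda_{2,\sigma_1}^{(j)}(n)}{\rme^{-\rmi tH_2}}]\ge1-\frac{1}{n}\sqrt{d_1}\,\|\sigma_1\|_2T^2$ yields
\begin{equation}
\E\!\left[\Fnorm{\Lambda_{2,\sigma_1}^{(j)}(n)-\Xi}^2\right]
\le\frac{2}{n}\sqrt{d_1}\,\|\sigma_1\|_2T^2.
\end{equation}
The bound~\eqref{eq:DU2tr} then follows from Jensen's inequality in the form $\E[\sqrt{X}]\le\sqrt{\E[X]}$ (concavity of the square root), applied to $X=\Fnorm{\Lambda_{2,\sigma_1}^{(j)}(n)-\Xi}^2$.

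For the diamond-norm bound~\eqref{eq:DU2tr_diamond}, I would apply Lemma~\ref{lemma:diamond_choi} trajectory by trajectory to convert the Choi-state Frobenius distance into a channel diamond distance, picking up the factor $d_2^2$. This is the same factor relating~\eqref{eq:DU2av} to~\eqref{eq:DU2av_diamond} in Proposition~\ref{prop:Av2D}, and it arises from combining the diamond-to-trace-norm estimate with the norm equivalence~\eqref{eq:equivalence_tr_fro}. Since $d_2^2$ is deterministic, it pulls out of the expectation, and the already-established bound on $\E[\Fnorm{\Lambda_{2,\sigma_1}^{(j)}(n)-\Xi}]$ gives~\eqref{eq:DU2tr_diamond} immediately.

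I expect the only genuine subtlety to be the choice made in the second step: one must upper-bound the purity by $1$ rather than attempt to use the purity \emph{lower} bound from Theorem~\ref{thm:Tr2P}, since it is the overlap term, not the purity, whose expectation that theorem controls. Everything else --- the Hermitian expansion, the normalization $\tr\Xi^2=1$, the Jensen step, and the deterministic diamond-norm conversion --- is routine. A minor point to verify is that Lemma~\ref{lemma:diamond_choi} indeed applies to the reduced system-2 maps $\mathcal{E}_{2,\sigma_1}^{(j)}$ with the dimension $d_2$, but this is immediate from the definition of the reduced dynamics in Sec.~\ref{sec:Choi}.
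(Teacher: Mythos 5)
Your proposal is correct and follows essentially the same route as the paper's proof: expand the squared Frobenius distance into purity, overlap, and normalization terms, bound the purity trivially by $1$, control the expected overlap via Proposition~\ref{prop:Av2P} (equivalently, Theorem~\ref{thm:Tr2P}, which is the same quantity by linearity of the expectation), apply Jensen's inequality to pass from the expected squared distance to the expected distance, and convert to the diamond norm via Lemma~\ref{lemma:diamond_choi} together with the norm equivalence~\eqref{eq:equivalence_tr_fro}. The key choice you flag --- upper-bounding the purity by $1$ rather than invoking its lower bound --- is exactly the step the paper takes.
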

\begin{proof}
This is proved in Sec.~\ref{sec:proofTr2}.	
\end{proof}
\noindent
The distance to the unitary of the Zeno dynamics, which is specified by the bath Hamiltonian $H_2$, shrinks to zero in the decoupling limit $n\rightarrow\infty$ as $\order{1/\sqrt{n}}$, which is consistent with numerical simulations. See the green points in Fig.~\ref{fig:DU_Choi}.

\begin{figure}
\centering
	\begin{subfigure}[r]{.45\textwidth}
  	\begin{tikzpicture}[mark size={0.6}, scale=1]
	\begin{axis}[
	xmode=log,
	ymode=log,
	xlabel={$n$},
	ylabel={$1-\mathbb{E}[\pur(\Lambda)]$},
	x post scale=0.8,
	y post scale=0.8,
	]
	\addplot[color=blue, only marks] table[x=n, y=P, col sep=comma] {Purity_1_sigma.csv};
	\addplot[domain=3.54655:100, color=blue, samples=100]{1-(1-2/x)^2};
	\addplot[color=red, only marks] table[x=n, y=P, col sep=comma] {Purity_1.csv};
	\addplot[domain=1.77328:100, color=red, samples=100]{1-(1-1/x)^2};
	\addplot[color=green, only marks] table[x=n, y=P, col sep=comma] {Purity_2_sigma.csv};
	\addplot[domain=2.50779:100, color=green, samples=100]{1-(1-sqrt(2)*1/x)^2};
	\end{axis}
	\end{tikzpicture}
  	\caption{Purity of the reduced Choi-Jamio\l{}kowski state of the trajectory evolution under random dynamical decoupling. Blue: Theorem~\ref{thm:Tr1P} with $\sigma_2=\ket{0}\bra{0}$. Red: Theorem~\ref{thm:Tr1P} with $\sigma_2=\mathbb{1}_2/d_2$. Green: Theorem~\ref{thm:Tr2P} with $\sigma_1=\ket{0}\bra{0}$.}
  	\label{fig:Purity}
	\end{subfigure}
	\hfill
	\begin{subfigure}[l]{.45\textwidth}
  	\begin{tikzpicture}[mark size={0.6}, scale=1]
	\begin{axis}[
	xmode=log,
	ymode=log,
	ylabel near ticks,
	yticklabel pos=right,
	xlabel={$n$},
	ylabel={$1-\mathbb{E}[\infnorm{\Lambda}]$},
	x post scale=0.8,
	y post scale=0.8,
	]
	\addplot[color=blue, only marks] table[x=n, y=OpNorm, col sep=comma] {OpNorm_1_sigma.csv};
	\addplot[domain=1:100, color=blue, samples=100]{2/x};
	\addplot[color=red, only marks] table[x=n, y=OpNorm, col sep=comma] {OpNorm_1.csv};
	\addplot[domain=1:100, color=red, samples=100]{1/x};
	\addplot[color=green, only marks] table[x=n, y=OpNorm, col sep=comma] {OpNorm_2_sigma.csv};
	\addplot[domain=1:100, color=green, samples=100]{sqrt(2)/x};
	\end{axis}
	\end{tikzpicture} 
  	\caption{Operator norm of the reduced Choi-Jamio\l{}kowski state of the trajectory evolution under random dynamical decoupling. Blue: Theorem~\ref{thm:Tr1P} with $\sigma_2=\ket{0}\bra{0}$. Red: Theorem~\ref{thm:Tr1P} with $\sigma_2=\mathbb{1}_2/d_2$. Green: Theorem~\ref{thm:Tr2P} with $\sigma_1=\ket{0}\bra{0}$.}
  	\label{fig:OpNorm}
	\end{subfigure}
	\vspace{0.3cm}
	\newline
	\vspace{0.3cm}
	\begin{subfigure}[r]{.45\textwidth}
	\begin{tikzpicture}[mark size={0.6}, scale=1]
	\begin{axis}[
	xmode=log,
	ymode=log,
	xlabel={$n$},
	ylabel={$\mathbb{E}[\Fnorm{\Lambda-\frac{1}{d}|v)(v|}]$},
	x post scale=0.8,
	y post scale=0.8,
	]
	\addplot[color=blue, only marks] table[x=n, y=DU, col sep=comma] {DU_Choi_1_sigma.csv};
	\addplot[domain=1:100, color=blue, samples=100]{2/x};
	\addplot[color=red, only marks] table[x=n, y=DU, col sep=comma] {DU_Choi_1.csv};
	\addplot[domain=1:100, color=red, samples=100]{1/x};
	\addplot[color=green, only marks] table[x=n, y=DU, col sep=comma] {DU_Choi_2_sigma.csv};
	\addplot[domain=1:100, color=green, samples=100]{sqrt(2*sqrt(2)/x)};
	\end{axis}
	\end{tikzpicture}
	\caption{Distances to a pure Choi-Jamio\l{}kowski state of the reduced Choi-Jamio\l{}kowski state of the trajectory evolution under random dynamical decoupling. Blue: Theorem~\ref{thm:Tr1D} with $\sigma_2=\ket{0}\bra{0}$. Red: Theorem~\ref{thm:Tr1D} with $\sigma_2=\mathbb{1}_2/d_2$. Green: Theorem~\ref{thm:Tr2D} with $\sigma_1=\ket{0}\bra{0}$.}
  	\label{fig:DU_Choi}
	\end{subfigure}
	\hfill
	\begin{subfigure}[l]{.45\textwidth}
	\begin{tikzpicture}[mark size={0.6}, scale=1]
	\begin{axis}[
	xmode=log,
	ymode=log,
	ylabel near ticks,
	yticklabel pos=right,
	xlabel={$n$},
	ylabel={$\mathbb{E}[\Fnorm{\hat{\mathcal{E}} - \hat{\mathcal{U}}}]$},
	x post scale=0.8,
	y post scale=0.8,
	]
	\addplot[color=blue, only marks] table[x=n, y=DU, col sep=comma] {DU_Super_1_sigma.csv};
	\addplot[domain=3.54655:100, color=blue, samples=100]{4*sqrt(1-(1-2/x)^4)};
	\addplot[color=red, only marks] table[x=n, y=DU, col sep=comma] {DU_Super_1.csv};
	\addplot[domain=1.77328:100, color=red, samples=100]{4*sqrt(1-(1-1/x)^4)};
	\addplot[color=green, only marks] table[x=n, y=DU, col sep=comma] {DU_Super_2_sigma.csv};
	\addplot[domain=2.50779:100, color=green, samples=100]{4*sqrt(1-(1-sqrt(2)/x)^4)};
	\end{axis}
	\end{tikzpicture}
	\caption{Distance to unitary of the reduced trajectory evolution under random dynamical decoupling. Blue: Corollary~\ref{cor:Tr1D} with $\sigma_2=\ket{0}\bra{0}$. Red: Corollary~\ref{cor:Tr1D} with $\sigma_2=\mathbb{1}_2/d_2$. Green: Theorem~\ref{thm:Tr2P} combined with Proposition~\ref{thm:dist_unitary_purity} with $\sigma_1=\ket{0}\bra{0}$.\\}
  	\label{fig:DU_Super}
	\end{subfigure}
	\begin{tikzpicture}
	\begin{customlegend}[legend columns=3, legend cell align={left}, legend style={align=center, draw, column sep=3.5ex, row sep=0.4em},
        legend entries={\small$\Lambda^{(j)}_{1,\ket{0}\bra{0}}(n)$,
                        \small$\Lambda^{(j)}_{1}(n)=\Lambda^{(j)}_{2}(n)$,
                        \small$\Lambda^{(j)}_{2,\ket{0}\bra{0}}(n)$,
                        	\small{Bound for $\Lambda^{(j)}_{1,\ket{0}\bra{0}}(n)$},
                        	\small{Bound for $\Lambda^{(j)}_{1}(n)=\Lambda^{(j)}_{2}(n)$},
                        	\small{Bound for $\Lambda^{(j)}_{2,\ket{0}\bra{0}}(n)$}
                        }]
        	\addlegendimage{mark=*, only marks, color = blue}
        	\addlegendimage{mark=*, only marks, color = red}
        	\addlegendimage{mark=*, only marks, color = green}
        	\addlegendimage{mark=none, solid, color = blue} 
        	\addlegendimage{mark=none, solid, color = red} 
        	\addlegendimage{mark=none, solid, color = green}    	 
        \end{customlegend}
\end{tikzpicture}
\caption{Comparison of the bounds for random trajectory dynamical decoupling with a numerical simulation. The dots represent the numerical data whereas the lines are the bounds presented in Secs.~\ref{sec:Tr2} and~\ref{sec:Tr1}. The blue dots and lines are for the reduced evolution of system 1 when the initial state of system 2 is $\sigma_2=\ket{0}\bra{0}$. The red dots and lines are for the reduced evolution of system 1 (or 2) when the initial state of system 2 (or 1) is $\sigma_2=\mathbb{1}_2/d_2$ (or $\sigma_1=\mathbb{1}_1/d_1$). The green dots and lines portray the reduced evolution of system 2 in the case where system 1 starts in $\sigma_1=\ket{0}\bra{0}$. We have considered one qubit on system 1 and one qubit on system 2, i.e.~$d_1=d_2=2$. The Hamiltonian $H$, which generates the dynamics, is a generically chosen traceless Hermitian matrix. See Appendix~\ref{appendix:numerics} for details. The total evolution time $t$ is chosen such that $T=t\|\hat{\mathcal{H}}\|_\infty=1$. At each time step $t/n$, a unitary decoupling operation is drawn at random from $\mathscr{V}=\{\mathbb{1},X,Y,Z\}$, where $X$, $Y$, and $Z$ are the Pauli matrices.}
\label{fig:DD_full}
\end{figure}

The next proposition shows that the trajectory evolutions of system 2 far from the unitary $\rme^{-\rmi tH_2}$ become rare and almost all the trajectories of the reduced dynamics of system 2 get close to the unitary.
\begin{proposition}[Probability of the trajectories of system 2 far from unitary]
\label{prop:Tr2Tail}
The probability $\mathbb{P}(\le r)$ of finding a trajectory of the reduced dynamics of system 2 whose fidelity to the unitary $\rme^{-\rmi tH_2}$ measured by $\frac{1}{d_2}(\rme^{-\rmi tH_2}|\Lambda_{2,\sigma_1}^{(j)}|\rme^{-\rmi tH_2})$ is less than $r$ is bounded by
\begin{equation}
\mathbb{P}(\le r)
\le
\frac{1}{n}\frac{\sqrt{d_1}\,\|\sigma_1\|_2}{1-r}T^2,
\end{equation}
for any $t\ge0$ and $n\in\mathbb{N}$.
\end{proposition}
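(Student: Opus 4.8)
The plan is to derive the tail bound directly from the expectation bound of Theorem~\ref{thm:Tr2P} through a single application of Markov's inequality. To set this up, I introduce the random variable
\begin{equation}
F^{(j)}=\frac{1}{d_2}(\rme^{-\rmi tH_2}|\Lambda_{2,\sigma_1}^{(j)}(n)|\rme^{-\rmi tH_2}),
\end{equation}
which quantifies the fidelity of the reduced trajectory evolution of system 2 to the Zeno unitary $\rme^{-\rmi tH_2}$, with the randomness carried by the trajectory label $j$ sampled from the i.i.d.\ drawing of the decoupling unitaries.

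First I would record that $F^{(j)}$ takes values in $[0,1]$. Indeed, the vector $\frac{1}{\sqrt{d_2}}|\rme^{-\rmi tH_2})$ is a normalized pure state on the doubled Hilbert space of system 2, since $(\rme^{-\rmi tH_2}|\rme^{-\rmi tH_2})=\tr\mathbb{1}_2=d_2$, and $\Lambda_{2,\sigma_1}^{(j)}(n)$ is a density operator; hence $F^{(j)}$ is the expectation value of a density operator in a normalized state and therefore lies between $0$ and $1$. In particular, $1-F^{(j)}\ge0$ is a nonnegative random variable, which is exactly what is required to invoke Markov's inequality.

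Next I would use the rightmost inequality in Theorem~\ref{thm:Tr2P}, namely $\mathbb{E}[F^{(j)}]\ge 1-\frac{1}{n}\sqrt{d_1}\,\|\sigma_1\|_2 T^2$, to obtain
\begin{equation}
\mathbb{E}[1-F^{(j)}]\le\frac{1}{n}\sqrt{d_1}\,\|\sigma_1\|_2 T^2.
\end{equation}
For $r<1$, the event $\{F^{(j)}\le r\}$ coincides with $\{1-F^{(j)}\ge 1-r\}$, so applying Markov's inequality to the nonnegative random variable $1-F^{(j)}$ yields
\begin{equation}
\mathbb{P}(\le r)=\mathbb{P}(1-F^{(j)}\ge 1-r)\le\frac{\mathbb{E}[1-F^{(j)}]}{1-r}\le\frac{1}{n}\frac{\sqrt{d_1}\,\|\sigma_1\|_2}{1-r}T^2,
\end{equation}
which is the claimed bound.

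I do not expect a genuine obstacle here: once the expectation bound of Theorem~\ref{thm:Tr2P} is available, the statement reduces to a one-line Markov estimate. The only points deserving a little care are verifying that $1-F^{(j)}\ge0$ so that Markov's inequality applies legitimately, and observing that the bound is meaningful (and the division by $1-r$ valid) only for $r<1$, which is precisely the regime of interest, as $F^{(j)}$ concentrates near $1$ in the decoupling limit.
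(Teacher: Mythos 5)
Your proof is correct and is essentially the paper's own argument in disguise: the paper's explicit splitting of the convex sum $\sum_j q_j r_j$ at the threshold $r$ is precisely a hand-rolled Markov inequality for the nonnegative variable $1-F^{(j)}$, and the expectation bound you import from Theorem~\ref{thm:Tr2P} is, by linearity, the same lower bound on the average fidelity (Proposition~\ref{prop:Av2P}) that the paper uses. Your added care about $F^{(j)}\in[0,1]$ and the restriction $r<1$ is a welcome but minor tightening of the exposition.
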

\begin{proof}
This is proved in Sec.~\ref{sec:proofTr2}.	
\end{proof}

Since the purity, the spectral norm, the fidelity, and the distance to the unitary are all bounded quantities, we can also bound their variances, instead of the expectation values, by making use of the Bhatia-Davis inequality $\Var[X]\leq(\max[X]-\E[X])(\E[X]-\min[X])$~\cite{BD2000}.
For instance, corresponding to Theorems~\ref{thm:Tr2P} and~\ref{thm:Tr2D}, we have the following corollary.
\begin{corollary}[Variances of the purity and of the distance from the Zeno dynamics of the trajectory evolution of system 2]
\label{cor:Tr2V}
\begin{gather}
\Var\!\left[
\mathtt{P}\bigl(\Lambda_{2,\sigma_1}^{(j)}(n)\bigr)
\right]
\le
(1-1/d_2)
\left[
1-\left(
1-\frac{1}{n}\sqrt{d_1}\,\|\sigma_1\|_2T^2
\right)^2
\right],
\\
\Var\Bigl[\|\Lambda_{2,\sigma_1}^{(j)}(n)\|_\infty\Bigr]
\le
\frac{1}{n}
(1-1/d_2)
\sqrt{d_1}\,\|\sigma_1\|_2T^2,
\\
\Var\!\left[
\frac{1}{d_2}
(\rme^{-\rmi tH_2}|\Lambda_{2,\sigma_1}^{(j)}(n)|\rme^{-\rmi tH_2})
\right]
\le
\frac{1}{n}\sqrt{d_1}\,\|\sigma_1\|_2T^2,
\\
\Var\!\left[
\left\|
\Lambda_{2,\sigma_1}^{(j)}(n)
-\frac{1}{d_2}|\rme^{-\rmi tH_2})(\rme^{-\rmi tH_2}|
\right\|_2
\right]
\le
2 \sqrt{\frac{2}{n}\sqrt{ {d_1}}\, \Fnorm{\sigma_1} T^2},
\\
\Var\Bigl[
\|
\mathcal{E}_{2,\sigma_1}^{(j)}
-
\rme^{-\rmi t\mathcal{H}_2}
\|_\diamond
\Bigr]
\le
2d_2^2\sqrt{\frac{2}{n}\sqrt{d_1}\,\|\sigma_1\|_2T^2},
\end{gather}
for any $t\ge0$ and $n\in\mathbb{N}$.
\end{corollary}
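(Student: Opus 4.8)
The plan is to apply the Bhatia--Davis inequality $\Var[X]\le(\max[X]-\E[X])(\E[X]-\min[X])$ separately to each of the five bounded random variables, reusing the one-sided expectation estimates already established in Theorems~\ref{thm:Tr2P} and~\ref{thm:Tr2D}. For each quantity I only need two ingredients: its a priori range $[m,M]$ (valid for every trajectory $j$), and the expectation bound from those theorems. I would then bound one Bhatia--Davis factor by the full range width $M-m$ and the other by the deviation of the mean from the relevant endpoint. To keep expressions short, write $\Lambda=\Lambda_{2,\sigma_1}^{(j)}(n)$ and $c=\frac1n\Fnorm{\sigma_1}\sqrt{d_1}\,T^2$, so that Theorem~\ref{thm:Tr2P} reads $\E[\infnorm{\Lambda}]\ge1-c$ and $\E[\frac1{d_2}(\rme^{-\rmi tH_2}|\Lambda|\rme^{-\rmi tH_2})]\ge1-c$, together with $\E[\pur(\Lambda)]\ge(1-c)^2$, while Theorem~\ref{thm:Tr2D} gives $\E$ of the Frobenius distance $\le\sqrt{2c}$ and $\E$ of the diamond distance $\le d_2^2\sqrt{2c}$.

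For the purity I would use $\max[\pur(\Lambda)]=1$ and $\min[\pur(\Lambda)]=1/d_2$, so that $(\max-\E)\le1-(1-c)^2$ (from $\E[\pur(\Lambda)]\ge(1-c)^2$) and $(\E-\min)\le1-1/d_2$ (from $\E[\pur(\Lambda)]\le1$); multiplying gives the stated bound $(1-1/d_2)[1-(1-c)^2]$. The spectral-norm case is identical with the same range and $\E[\infnorm{\Lambda}]\ge1-c$, yielding $(\max-\E)\le c$ and $(\E-\min)\le1-1/d_2$, hence $\Var\le\frac1n(1-1/d_2)\sqrt{d_1}\Fnorm{\sigma_1}T^2$. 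For the fidelity $\frac1{d_2}(\rme^{-\rmi tH_2}|\Lambda|\rme^{-\rmi tH_2})$ the range is simply $[0,1]$, so $(\max-\E)\le c$ and $(\E-\min)=\E\le1$, giving $\Var\le c$.

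The two distance variances go the same way but with $\min=0$. For the Frobenius distance the upper endpoint is $M=2$: by the triangle inequality $\|\Lambda-\frac1{d_2}|\rme^{-\rmi tH_2})(\rme^{-\rmi tH_2}|\|_2\le\Fnorm{\Lambda}+1\le2$, since $\Fnorm{\Lambda}=\sqrt{\pur(\Lambda)}\le1$ and the normalized Zeno target has unit Frobenius norm. Then $(\max-\E)\le2$ and $(\E-0)=\E\le\sqrt{2c}$ give $\Var\le2\sqrt{2c}$. For the diamond distance $M=2$ again, because both $\mathcal{E}_{2,\sigma_1}^{(j)}$ and $\rme^{-\rmi t\mathcal{H}_2}$ are channels of unit diamond norm, so their difference has diamond norm at most $2$; combined with $\E\le d_2^2\sqrt{2c}$ this yields $\Var\le2d_2^2\sqrt{2c}$.

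The only genuinely delicate step is pinning down the lower endpoints $m=1/d_2$ for the purity and the spectral norm. The trivial dimension bound only gives $1/d_2^2$; the sharper value $1/d_2$ must come from the structure of the reduced Choi--Jamio\l{}kowski state $\Lambda$ as the Choi matrix of a channel on the $d_2$-dimensional bath, whose Kraus rank is limited by the purification of system $1$ and whose marginal on the reference is maximally mixed, thereby capping the rank and forcing both $\pur(\Lambda)$ and $\infnorm{\Lambda}$ above $1/d_2$. Everything else is direct substitution into Bhatia--Davis, and the five inequalities of the corollary then follow term by term; the analogous variance corollary for system~1 would be obtained identically from the system-1 expectation bounds.
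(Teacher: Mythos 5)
Your route is exactly the paper's: the corollary is obtained by feeding the expectation bounds of Theorems~\ref{thm:Tr2P} and~\ref{thm:Tr2D} into the Bhatia--Davis inequality, bounding one factor by the a priori range of the random variable and the other by the deviation of its mean from the relevant endpoint. Your treatment of the fidelity, the Frobenius distance and the diamond distance (ranges $[0,1]$, $[0,2]$ and $[0,2]$, with the means controlled by the cited theorems) is correct and reproduces the stated constants.

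The step you yourself flag as delicate --- the lower endpoint $1/d_2$ for $\pur\bigl(\Lambda_{2,\sigma_1}^{(j)}(n)\bigr)$ and $\|\Lambda_{2,\sigma_1}^{(j)}(n)\|_\infty$, which is what produces the factor $(1-1/d_2)$ in the first two inequalities --- is not established by the mechanisms you invoke. The Kraus rank of $\mathcal{E}_{2,\sigma_1}^{(j)}$ is capped by the dimension of the purifying system, i.e.\ by $d_1\operatorname{rank}\sigma_1$; this gives $\pur\ge 1/(d_1\operatorname{rank}\sigma_1)$ and $\|\Lambda\|_\infty\ge 1/(d_1\operatorname{rank}\sigma_1)$, which coincides with $1/d_2$ only in special cases (e.g.\ $d_1=d_2$ with $\sigma_1$ pure, as in the paper's numerics), not in general. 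The maximally mixed marginal on the reference space $\mathscr{H}_{2'}$ does not help either: the Choi-Jamio\l{}kowski state of the completely depolarizing channel has exactly that marginal and purity $1/d_2^2$. The only endpoint valid for every trajectory and every $\sigma_1$ is $1/d_2^2$ (a unit-trace positive matrix of rank at most $d_2^2$), and running it through the identical Bhatia--Davis computation yields the first two bounds with $(1-1/d_2^2)$ in place of $(1-1/d_2)$. The paper itself offers no argument for the sharper constant, so you should either restrict to the regime $d_1\operatorname{rank}\sigma_1\le d_2$, where your rank argument genuinely delivers $1/d_2$, or state the first two inequalities with the factor $(1-1/d_2^2)$; all the remaining bounds are unaffected.
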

\noindent
These bounds on the variances all shrink as $\order{1/n}$ in the decoupling limit $n\to\infty$, except for the bounds on the variances of the distances to the unitary, which show $\order{1/\sqrt{n}}$ behaviors.

\subsection{Convergence of the Trajectory Evolution of System 1}\label{sec:Tr1}
Now, recall that each trajectory evolution of the total system $\mathcal{E}_n^{(j)}(t)$ is unitary and the corresponding Choi-Jamio\l{}kowski state $\Lambda^{(j)}(n)$ is pure.
Its reduced Choi-Jamio\l{}kowski states of systems 1 and 2,
\begin{align}
\Lambda_1^{(j)}(n)&=\tr_{22'}\Lambda^{(j)}(n)=\Lambda_{1,\mathbb{1}_2/d_2}^{(j)}(n),
\\
\Lambda_2^{(j)}(n)&=\tr_{11'}\Lambda^{(j)}(n)=\Lambda_{2,\mathbb{1}_1/d_1}^{(j)}(n),
\end{align}
share the same spectrum, which is determined by the Schmidt coefficients of the pure state $\Lambda^{(j)}(n)$.
Therefore, we have
\begin{align}
\|\Lambda_1^{(j)}(n)\|_\infty
&=\|\Lambda_2^{(j)}(n)\|_\infty,\label{eq:Schmidt_OpNorm}\\
\mathtt{P}\bigl(\Lambda_1^{(j)}(n)\bigr)
&=\mathtt{P}\bigl(\Lambda_2^{(j)}(n)\bigr).
\end{align}
The fact that the reduced Choi-Jamio\l{}kowski state $\Lambda_2^{(j)}(n)$ of system 2 approaches a pure state implies that the reduced Choi-Jamio\l{}kowski state $\Lambda_1^{(j)}(n)$ of system 1 also becomes pure with the same purity as $\Lambda_2^{(j)}(n)$.
By exploiting this fact, we prove the following two theorems for the trajectory evolution $\mathcal{E}_{1,\sigma_2}^{(j)}(n)$ of system 1 for a general initial state $\sigma_2$ of system 2, whose Choi-Jamio\l{}kowski state is denoted by $\Lambda_{1,\sigma_2}^{(j)}(n)$.
\begin{theorem}[Purity of the trajectory evolution of system 1]
\label{thm:Tr1P}
\begin{equation}
\sqrt{
\mathbb{E}\!\left[
\mathtt{P}\bigl(\Lambda_{1,\sigma_2}^{(j)}(n)\bigr)
\right]
}
\ge
\mathbb{E}\Bigl[\|\Lambda_{1,\sigma_2}^{(j)}(n)\|_\infty\Bigr]
\ge1-\frac{1}{n}d_2\|\sigma_2\|_\infty T^2,
\end{equation}
for any $t\ge0$ and $n\in\mathbb{N}$.
\end{theorem}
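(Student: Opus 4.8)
The plan is to establish the main (second) inequality $\mathbb{E}[\|\Lambda_{1,\sigma_2}^{(j)}(n)\|_\infty]\ge 1-\frac{1}{n}d_2\|\sigma_2\|_\infty T^2$ first, and then get the first inequality almost for free. For the latter, note that any density operator $\rho$ obeys $\mathtt{P}(\rho)=\tr(\rho^2)=\sum_k\lambda_k^2\ge\|\rho\|_\infty^2$, so trajectory-wise $\mathtt{P}\bigl(\Lambda_{1,\sigma_2}^{(j)}(n)\bigr)\ge\|\Lambda_{1,\sigma_2}^{(j)}(n)\|_\infty^2$; taking the expectation and applying Jensen's inequality $\mathbb{E}[X^2]\ge(\mathbb{E}[X])^2$ gives $\sqrt{\mathbb{E}[\mathtt{P}(\Lambda_{1,\sigma_2}^{(j)}(n))]}\ge\mathbb{E}[\|\Lambda_{1,\sigma_2}^{(j)}(n)\|_\infty]$. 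The substance of the proof is therefore to transfer the system-2 bound of Theorem~\ref{thm:Tr2P} (taken at $\sigma_1=\mathbb{1}_1/d_1$, where $\sqrt{d_1}\,\|\mathbb{1}_1/d_1\|_2=1$, so that $\mathbb{E}[\|\Lambda_2^{(j)}(n)\|_\infty]\ge1-\frac{1}{n}T^2$) over to system~1, using that the total trajectory evolution is unitary.

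The first technical ingredient is the reduced Choi-state representation of Sec.~\ref{sec:Choi}, which expresses the general-$\sigma_2$ reduced state as $\Lambda_{1,\sigma_2}^{(j)}(n)=d_2\tr_{22'}[(\mathbb{1}_{11'2}\otimes\sigma_2\transpose)\Lambda^{(j)}(n)]$ and collapses to $\Lambda_1^{(j)}(n)=\Lambda_{1,\mathbb{1}_2/d_2}^{(j)}(n)$ for $\sigma_2=\mathbb{1}_2/d_2$. Since $\mathcal{E}_n^{(j)}(t)$ is unitary, $\Lambda^{(j)}(n)=\vecket{v^{(j)}}\vecbra{v^{(j)}}$ is pure, and I would write its Schmidt decomposition across the cut $11'\mid 22'$, namely $\vecket{v^{(j)}}=\sum_k\sqrt{\mu_k}\,\vecket{a_k}\otimes\vecket{b_k}$ with $\mu_1\ge\mu_2\ge\cdots\ge0$ and $\sum_k\mu_k=1$. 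The Schmidt vectors $\{\vecket{a_k}\}$ and $\{\vecket{b_k}\}$ are precisely the orthonormal eigenvectors of $\Lambda_1^{(j)}(n)=\sum_k\mu_k\vecket{a_k}\vecbra{a_k}$ and $\Lambda_2^{(j)}(n)=\sum_k\mu_k\vecket{b_k}\vecbra{b_k}$, so in particular $\mu_1=\|\Lambda_1^{(j)}(n)\|_\infty=\|\Lambda_2^{(j)}(n)\|_\infty$, which is the equal-spectrum identity \eqref{eq:Schmidt_OpNorm}.

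Substituting the decomposition into the reduced-state formula gives $\Lambda_{1,\sigma_2}^{(j)}(n)=d_2\sum_{k,\ell}\sqrt{\mu_k\mu_\ell}\,c_{k\ell}\,\vecket{a_k}\vecbra{a_\ell}$ with $c_{k\ell}=(b_\ell|(\mathbb{1}_2\otimes\sigma_2\transpose)|b_k)$. The key estimate is a Rayleigh-quotient lower bound at the dominant Schmidt vector: because $\Lambda_{1,\sigma_2}^{(j)}(n)\ge0$, one has $\|\Lambda_{1,\sigma_2}^{(j)}(n)\|_\infty\ge(a_1|\Lambda_{1,\sigma_2}^{(j)}(n)|a_1)=d_2\mu_1 c_{11}$. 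Combining this with the normalization $\tr\Lambda_{1,\sigma_2}^{(j)}(n)=d_2\sum_k\mu_k c_{kk}=1$ (the reduced channel is trace-preserving since $\tr\sigma_2=1$) yields $1-\|\Lambda_{1,\sigma_2}^{(j)}(n)\|_\infty\le d_2\sum_{k\ge2}\mu_k c_{kk}$. Finally, since each $\vecket{b_k}$ is a unit vector and $\|\mathbb{1}_2\otimes\sigma_2\transpose\|_\infty=\|\sigma_2\|_\infty$, one bounds $0\le c_{kk}\le\|\sigma_2\|_\infty$, so that $1-\|\Lambda_{1,\sigma_2}^{(j)}(n)\|_\infty\le d_2\|\sigma_2\|_\infty\sum_{k\ge2}\mu_k=d_2\|\sigma_2\|_\infty\bigl(1-\|\Lambda_1^{(j)}(n)\|_\infty\bigr)$.

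Taking the expectation over trajectories, replacing $\|\Lambda_1^{(j)}(n)\|_\infty$ by $\|\Lambda_2^{(j)}(n)\|_\infty$ via \eqref{eq:Schmidt_OpNorm}, and inserting $\mathbb{E}[\|\Lambda_2^{(j)}(n)\|_\infty]\ge1-\frac{1}{n}T^2$ from Theorem~\ref{thm:Tr2P} then gives $\mathbb{E}[\|\Lambda_{1,\sigma_2}^{(j)}(n)\|_\infty]\ge1-\frac{1}{n}d_2\|\sigma_2\|_\infty T^2$, completing the argument. I expect the middle step to be the main obstacle: the naive operator inequality $\Lambda_{1,\sigma_2}^{(j)}(n)\le d_2\|\sigma_2\|_\infty\Lambda_1^{(j)}(n)$ coming from $\sigma_2\le\|\sigma_2\|_\infty\mathbb{1}_2$ runs in the \emph{wrong} direction for a lower bound on the operator norm, so one cannot proceed by a single operator-ordering comparison. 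The resolution is to abandon operator ordering and instead combine a Rayleigh-quotient lower bound at the dominant Schmidt vector with the trace normalization; securing the clean constant $d_2\|\sigma_2\|_\infty$ hinges on the uniform bound $c_{kk}\le\|\sigma_2\|_\infty$ on the diagonal weights.
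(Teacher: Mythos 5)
Your proof is correct, and its overall skeleton matches the paper's: both rest on the equal-spectrum identity~\eqref{eq:Schmidt_OpNorm} from the Schmidt decomposition of the pure total Choi-Jamio\l{}kowski state, both transfer the system-2 bound of Theorem~\ref{thm:Tr2P} evaluated at $\sigma_1=\mathbb{1}_1/d_1$ (where $\sqrt{d_1}\,\|\sigma_1\|_2=1$, noting Lemma~\ref{LemmaReducedCJ}), and both finish with Jensen's inequality for the purity. Where you genuinely diverge is in how you establish the key intermediate inequality $1-\|\Lambda_{1,\sigma_2}^{(j)}(n)\|_\infty\le d_2\|\sigma_2\|_\infty\bigl(1-\|\Lambda_1^{(j)}(n)\|_\infty\bigr)$. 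The paper obtains this from Lemma~\ref{lemma:Choi}, whose proof decomposes the maximally mixed state as a convex combination $\frac{1}{d_2}\mathbb{1}_2=\frac{1}{d_2\|\sigma_2\|_\infty}\sigma_2+(1-\frac{1}{d_2\|\sigma_2\|_\infty})\omega_2$ (Lemma~\ref{lemma:max_mixed}), uses linearity of $\sigma_2\mapsto\Lambda_{1,\sigma_2}$, and evaluates a Rayleigh quotient at the top eigenvector of $\Lambda_1$; that argument works for an arbitrary CPTP map. You instead expand $\Lambda_{1,\sigma_2}^{(j)}(n)$ directly in the Schmidt basis of the pure state $\Lambda^{(j)}(n)$ and combine the Rayleigh-quotient lower bound $d_2\mu_1c_{11}$ with the trace normalization $d_2\sum_k\mu_kc_{kk}=1$ and the uniform bound $0\le c_{kk}\le\|\sigma_2\|_\infty$. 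This is tailored to the pure case --- which is all that is needed here, since each trajectory is unitary --- and it makes transparent where the constant $d_2\|\sigma_2\|_\infty$ comes from; the paper's convexity route buys generality (it applies to mixed total Choi states) and is what turns your correctly diagnosed ``wrong-direction'' operator inequality into a usable lower bound. Both yield identical constants, so the final bound is the same.
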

\begin{proof}
This is proved in Sec.~\ref{sec:proofTr1}.	
\end{proof}
\noindent
Both bounds on the purity and on the spectral norm of the Choi-Jamio\l{}kowski state converge to $1$ as $\order{1/n}$ in the decoupling limit $n\rightarrow\infty$. This scaling is also observed numerically. See the red and blue points in Figs.~\ref{fig:Purity} and~\ref{fig:OpNorm}.
\begin{theorem}[Distance of the trajectory evolution of system 1 from a pure Choi-Jamio\l{}kowski state]
\label{thm:Tr1D}
Let $|v_{1,\sigma_2}^{(j)}(n))$ be the normalized eigenvector belonging to the largest eigenvalue of $\Lambda_{1,\sigma_2}^{(j)}(n)$.
Then,
\begin{gather}
\mathbb{E}\!\left[
\left\|
\Lambda_{1,\sigma_2}^{(j)}(n)
-|v_{1,\sigma_2}^{(j)}(n))(v_{1,\sigma_2}^{(j)}(n)|
\right\|_\infty
\right]
\le
\frac{1}{n}d_2\|\sigma_2\|_\infty T^2,\label{eq:D1Choi}
\end{gather}
for any $t\ge0$ and $n\in\mathbb{N}$.
\end{theorem}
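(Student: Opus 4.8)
The plan is to reduce the statement entirely to Theorem~\ref{thm:Tr1P} by means of a single elementary spectral identity, after which nothing remains but taking an expectation. First I would record the following fact about any density operator $\Lambda$ (positive semidefinite with $\tr\Lambda=1$). Writing its spectral decomposition as $\Lambda=\sum_k\lambda_k\vecket{v_k}\vecbra{v_k}$ with $\lambda_1\ge\lambda_2\ge\cdots\ge0$, the operator-norm distance from $\Lambda$ to the projector onto its leading eigenvector is
\[
\infnorm{\Lambda-\vecket{v_1}\vecbra{v_1}}=1-\lambda_1=1-\infnorm{\Lambda}.
\]
Indeed, $\Lambda-\vecket{v_1}\vecbra{v_1}$ is diagonal in the same eigenbasis, with eigenvalues $\lambda_1-1$ and $\lambda_k$ for $k\ge2$. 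Since $\sum_k\lambda_k=1$ and all $\lambda_k\ge0$ we have $\lambda_1\le1$, so $|\lambda_1-1|=1-\lambda_1=\sum_{k\ge2}\lambda_k\ge\lambda_2$; the largest eigenvalue modulus is therefore $1-\lambda_1$, and this equals $1-\infnorm{\Lambda}$ because $\lambda_1=\infnorm{\Lambda}$ for a positive operator.

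Next I would apply this pointwise to each trajectory. The reduced Choi-Jamio\l{}kowski state $\Lambda_{1,\sigma_2}^{(j)}(n)$ is the Choi state of the CPTP reduced dynamics $\mathcal{E}_{1,\sigma_2}^{(j)}(n)$, hence a genuine density operator on $\mathscr{H}_1\otimes\mathscr{H}_{1'}$, and by hypothesis $\vecket{v_{1,\sigma_2}^{(j)}(n)}$ is exactly its leading eigenvector. The identity above then gives, for every trajectory $j$,
\[
\infnorm{\Lambda_{1,\sigma_2}^{(j)}(n)-\vecket{v_{1,\sigma_2}^{(j)}(n)}\vecbra{v_{1,\sigma_2}^{(j)}(n)}}=1-\infnorm{\Lambda_{1,\sigma_2}^{(j)}(n)}.
\]
Taking the expectation over trajectories and inserting the lower bound on $\E[\infnorm{\Lambda_{1,\sigma_2}^{(j)}(n)}]$ furnished by Theorem~\ref{thm:Tr1P} yields
\[
\E\!\left[\infnorm{\Lambda_{1,\sigma_2}^{(j)}(n)-\vecket{v_{1,\sigma_2}^{(j)}(n)}\vecbra{v_{1,\sigma_2}^{(j)}(n)}}\right]=1-\E\!\left[\infnorm{\Lambda_{1,\sigma_2}^{(j)}(n)}\right]\le\frac{1}{n}d_2\infnorm{\sigma_2}T^2,
\]
which is precisely \eqref{eq:D1Choi}.

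I do not expect a genuine obstacle: once the spectral identity is established the theorem is an immediate restatement of Theorem~\ref{thm:Tr1P} as a distance-to-rank-one bound. The only points needing a little care are (i) confirming that $\Lambda_{1,\sigma_2}^{(j)}(n)$ has unit trace, so that $1-\lambda_1=\sum_{k\ge2}\lambda_k$, which holds because the reduced map is trace-preserving; and (ii) the inequality $\lambda_2\le\sum_{k\ge2}\lambda_k$, which ensures the extremal eigenvalue governing the operator norm is $1-\lambda_1$ and not $\lambda_2$. Both are elementary, so effectively all of the analytic work lives in Theorem~\ref{thm:Tr1P}, which this result simply repackages.
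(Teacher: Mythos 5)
Your proof is correct and follows essentially the same route as the paper: both reduce the operator norm of $\Lambda_{1,\sigma_2}^{(j)}(n)-|v_{1,\sigma_2}^{(j)}(n))(v_{1,\sigma_2}^{(j)}(n)|$ to $1-\|\Lambda_{1,\sigma_2}^{(j)}(n)\|_\infty$ via the observation that the subleading eigenvalue is dominated by the sum of all nonleading eigenvalues, and then invoke Theorem~\ref{thm:Tr1P} after taking the expectation. The only cosmetic difference is that the paper phrases the intermediate step as $\max(1-\lambda_0,\lambda_1)$ before resolving the maximum, whereas you resolve it directly.
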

\begin{proof}
This is proved in Sec.~\ref{sec:proofTr1}.	
\end{proof}
\noindent
These bounds shrink as $\order{1/n}$ in the decoupling limit $n\rightarrow\infty$, which are consistent with numerical simulations as can be seen from the blue and red points in Fig.~\ref{fig:DU_Choi}. 
Again, the variances of the above quantities can be bounded by means of the Bhatia-Davis inequality~\cite{BD2000}.
\begin{corollary}[Variances of the purity and of the distance from a pure Choi-Jamio\l{}kowski state of the trajectory evolution of system 1]
\label{cor:Tr1V}
\begin{gather}
\Var\!\left[
\mathtt{P}\bigl(\Lambda_{1,\sigma_2}^{(j)}(n)\bigr)
\right]
\le
(1-1/d_1)
\left[
1-\left(
1-\frac{1}{n}d_2\|\sigma_2\|_\infty T^2
\right)^2
\right],
\\
\Var\Bigl[\|\Lambda_{1,\sigma_2}^{(j)}(n)\|_\infty\Bigr]
\le
\frac{1}{n}(1-1/d_1)
d_2\|\sigma_2\|_\infty T^2,
\\
\Var\!\left[
\left\|
\Lambda_{1,\sigma_2}^{(j)}(n)
-|v_{1,\sigma_2}^{(j)}(n))(v_{1,\sigma_2}^{(j)}(n)|
\right\|_\infty
\right]
\le
\frac{1}{n}d_2\|\sigma_2\|_\infty T^2,
\end{gather}
for any $t\ge0$ and $n\in\mathbb{N}$.
\end{corollary}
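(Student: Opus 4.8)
The plan is to apply the Bhatia--Davis inequality $\Var[X]\le(\max[X]-\E[X])(\E[X]-\min[X])$~\cite{BD2000} separately to each of the three random variables, feeding in the expectation estimates already established in Theorems~\ref{thm:Tr1P} and~\ref{thm:Tr1D}. The only input needed beyond those theorems is the interval $[\min[X],\max[X]]$ into which each quantity is confined over all trajectories $j$. For both the purity $\mathtt{P}(\Lambda_{1,\sigma_2}^{(j)}(n))$ and the operator norm $\|\Lambda_{1,\sigma_2}^{(j)}(n)\|_\infty$ of the $d_1^2\times d_1^2$ density operator $\Lambda_{1,\sigma_2}^{(j)}(n)$, the upper endpoint is $1$, attained precisely when the reduced evolution is unitary (i.e.\ when the Choi state is pure), while the lower endpoint is $1/d_1$.

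For the purity, Theorem~\ref{thm:Tr1P} gives $\E[\mathtt{P}]\ge(1-\frac{1}{n}d_2\|\sigma_2\|_\infty T^2)^2$, so $\max-\E\le 1-(1-\frac{1}{n}d_2\|\sigma_2\|_\infty T^2)^2$, while $\E-\min\le 1-1/d_1$ using $\E\le1$; the product of the two factors is exactly the first claimed bound. The operator-norm case has the identical structure: Theorem~\ref{thm:Tr1P} gives $\E[\|\Lambda_{1,\sigma_2}^{(j)}(n)\|_\infty]\ge 1-\frac{1}{n}d_2\|\sigma_2\|_\infty T^2$, so $\max-\E\le\frac{1}{n}d_2\|\sigma_2\|_\infty T^2$ and $\E-\min\le1-1/d_1$, and their product gives the second bound.

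For the distance I would first record the elementary spectral identity $\|\Lambda_{1,\sigma_2}^{(j)}(n)-|v_{1,\sigma_2}^{(j)}(n))(v_{1,\sigma_2}^{(j)}(n)|\|_\infty=1-\|\Lambda_{1,\sigma_2}^{(j)}(n)\|_\infty$: subtracting the top eigenprojector turns the largest eigenvalue into $\|\Lambda\|_\infty-1\le0$ and leaves the remaining nonnegative eigenvalues in place, and since their sum equals $1-\|\Lambda\|_\infty$ it dominates each of them individually, so the spectral radius of the difference is $1-\|\Lambda\|_\infty$. Hence this variable lies in $[0,1-1/d_1]$ and, by Theorem~\ref{thm:Tr1D}, has expectation at most $\frac{1}{n}d_2\|\sigma_2\|_\infty T^2$; Bhatia--Davis then yields $\Var\le(\max-\E)\E\le\E\le\frac{1}{n}d_2\|\sigma_2\|_\infty T^2$, the third bound.

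The genuinely delicate point is not the Bhatia--Davis substitution --- a two-line computation in each case --- but justifying the lower endpoint $1/d_1$, i.e.\ that the largest eigenvalue (equivalently the purity) of $\Lambda_{1,\sigma_2}^{(j)}(n)$ never drops below $1/d_1$ over the trajectories. This is the one step requiring the structural input of Sec.~\ref{sec:Tr1}, where $\Lambda_1^{(j)}(n)$ and $\Lambda_2^{(j)}(n)$ are shown to share the spectrum set by the Schmidt coefficients of the pure total Choi state $\Lambda^{(j)}(n)$, which is what must be invoked to control the effective rank of $\Lambda_{1,\sigma_2}^{(j)}(n)$. Once that endpoint is secured, the three variance bounds follow immediately, and the companion variances in Corollary~\ref{cor:Tr2V} are obtained by the identical recipe.
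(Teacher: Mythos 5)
Your overall route is exactly the paper's: the authors dispose of this corollary in one sentence, by feeding the expectation bounds of Theorems~\ref{thm:Tr1P} and~\ref{thm:Tr1D} into the Bhatia--Davis inequality, and your substitutions reproduce that faithfully. In particular, your spectral identity $\|\Lambda_{1,\sigma_2}^{(j)}(n)-|v_{1,\sigma_2}^{(j)}(n))(v_{1,\sigma_2}^{(j)}(n)|\|_\infty=1-\|\Lambda_{1,\sigma_2}^{(j)}(n)\|_\infty$ is literally the computation in the paper's proof of Theorem~\ref{thm:Tr1D}, and the third variance bound is complete as you wrote it, since there the trivial lower endpoint $0$ already suffices.

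The gap is the one you flag and then leave open: the claim that $\pur\bigl(\Lambda_{1,\sigma_2}^{(j)}(n)\bigr)$ and $\|\Lambda_{1,\sigma_2}^{(j)}(n)\|_\infty$ never drop below $1/d_1$, which is what produces the prefactor $(1-1/d_1)$ in the first two bounds. I do not believe this endpoint can be ``secured'' by the Schmidt argument you point to. What Sec.~\ref{sec:Tr1} gives is that $\Lambda_1^{(j)}(n)$ and $\Lambda_2^{(j)}(n)$ share their nonzero spectrum, hence have rank at most $\min(d_1^2,d_2^2)$; for a density operator of rank $r$ this yields $\pur\ge1/r$ and $\|{\cdot}\|_\infty\ge1/r$, i.e.\ the endpoint $1/\min(d_1^2,d_2^2)$, which equals or exceeds $1/d_1$ only when $d_2^2\le d_1$ (and the argument as such concerns $\sigma_2=\mathbb{1}_2/d_2$, not general $\sigma_2$). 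The correct a priori endpoint for a $d_1^2\times d_1^2$ Choi state is $1/d_1^2$, and reduced dynamics of unitaries do attain it: for $d_1=d_2$, the total unitary $U=\mathrm{SWAP}$ with $\sigma_2=\mathbb{1}_2/d_2$ gives $\Lambda_{1,\sigma_2}=\mathbb{1}/d_1^2$, whose operator norm and purity are $1/d_1^2$. Ruling such trajectories out for the particular $H$, $t$, $n$ at hand is not possible ``for any $t\ge0$.'' With the defensible endpoint $1/d_1^2$ (or simply $0$) your argument goes through verbatim and yields the corollary with $(1-1/d_1)$ replaced by $(1-1/d_1^2)$ (or by $1$); the $\order{1/n}$ content is unchanged. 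Since the paper offers no justification for $1/d_1$ either, this looks like an imprecision in the corollary's constant that your proposal inherits rather than resolves --- but as written, the step is asserted, not proved, and the sketch you give for proving it does not deliver $1/d_1$.
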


As summarized above, we prove the convergences of the reduced dynamics of systems 1 and 2 to unitaries in terms of the Choi-Jamio\l{}kowski states.
We are also able to bound the distances of the reduced dynamics from unitaries, once we manage to bound the purities of their Choi-Jamio\l{}kowski states as well as their operator norms.
We provide an interesting proposition that allows us to bound the distance of a map from a unitary by the purity and operator norm of the Choi-Jamio\l{}kowski state of the map. It is based on the idea of the leading Kraus approximation~\cite{CarignanDugas2019}.
\begin{proposition}[Distance of a channel from a unitary]
\label{thm:dist_unitary_purity}
Let $\mathcal{T}$ be a CPTP map acting on a $d$-dimensional quantum system and let $\Lambda$ be the Choi-Jamio\l{}kowski state of $\mathcal{T}$.
Define the map $\mathcal{U}=U{}\bullet{}U^\dag$ with $U$ the closest unitary to the Kraus operator $E_\mathrm{max}$ belonging to the largest eigenvalue of $\Lambda$ according to~(\ref{Kraus-choice}).
Then, the distance between the matrix representations $\hat{\mathcal{T}}$ and $\hat{\mathcal{U}}$ of $\mathcal{T}$ and $\mathcal{U}$ is bounded by
\begin{equation}
d(1-\sqrt{\pur})
\leq\|\hat{\mathcal{T}}-\hat{\mathcal{U}}\|_2
\leq d\sqrt{\pur-\pur^2} + d\sqrt{1-\pur^2}
\leq2d\sqrt{1-\pur^2},\label{eq:dist_unitary_purity}
\end{equation}
where $\pur=\pur(\Lambda)$ is the purity of $\Lambda$. Furthermore, the diamond distance between $\mathcal{T}$ and $\mathcal{U}$ is bounded by
\begin{equation}
	\|\mathcal{T}-\mathcal{U}\|_\diamond
	\le
	3d(1-\|\Lambda\|_\infty).\label{eq:diamond_dist_choi}
\end{equation}
\end{proposition}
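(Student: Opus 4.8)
The plan is to work entirely in the Kraus picture $\hat{\mathcal{T}}=\sum_k E_k\otimes\overline{E_k}$ from~\eqref{Kraus}, with the Kraus operators $E_k=\sqrt{d\lambda_k}\,v_k$ chosen as in~\eqref{Kraus-choice} from the spectral decomposition~\eqref{CJ-spectral}, so that $\lambda_{\max}=\|\Lambda\|_\infty$ is the largest Choi eigenvalue and $E_{\max}$ its associated Kraus operator, with $\|E_{\max}\|_2^2=d\lambda_{\max}$ (since $\|v_{\max}\|_2=1$). Because $U$ is the closest unitary to $E_{\max}$, the polar decomposition $E_{\max}=U|E_{\max}|$ gives $\tr(E_{\max}^\dagger U)=\tr|E_{\max}|=\|E_{\max}\|_1$. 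The single structural fact driving everything is the trace-preserving condition $\sum_k E_k^\dagger E_k=\mathbb{1}$ from~\eqref{Kraus-rep}: it forces $E_{\max}^\dagger E_{\max}\le\mathbb{1}$, so all singular values $s_1,\dots,s_d$ of $E_{\max}$ lie in $[0,1]$. With $\sum_i s_i^2=d\lambda_{\max}$, this yields the two estimates I will reuse: $\|E_{\max}\|_1=\sum_i s_i\ge\sum_i s_i^2=d\lambda_{\max}$, and $s_{\min}^2\ge d\lambda_{\max}-(d-1)$, i.e.\ $1-s_{\min}^2\le d(1-\lambda_{\max})$.

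For the Frobenius bounds~\eqref{eq:dist_unitary_purity}, the lower bound follows at once from the reverse triangle inequality with $\|\hat{\mathcal{T}}\|_2=d\sqrt{\pur}$ from~\eqref{eq:Fnorm_purity} and $\|\hat{\mathcal{U}}\|_2=d$. For the upper bound I would split $\hat{\mathcal{T}}-\hat{\mathcal{U}}=(E_{\max}\otimes\overline{E_{\max}}-U\otimes\overline{U})+\sum_{k\ne\max}E_k\otimes\overline{E_k}$ and apply the triangle inequality. The ``rest'' term is computed exactly as in~\eqref{eq:Fnorm_purity}, giving $d\sqrt{\pur-\lambda_{\max}^2}\le d\sqrt{\pur-\pur^2}$, where I use $\lambda_{\max}\ge\pur$ (from $\lambda_{\max}=\lambda_{\max}\sum_k\lambda_k\ge\sum_k\lambda_k^2=\pur$). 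The ``leading'' term expands to $\|E_{\max}\otimes\overline{E_{\max}}-U\otimes\overline{U}\|_2^2=d^2\lambda_{\max}^2-2\|E_{\max}\|_1^2+d^2$, and inserting $\|E_{\max}\|_1\ge d\lambda_{\max}$ bounds it by $d^2(1-\lambda_{\max}^2)\le d^2(1-\pur^2)$. Summing the two gives the middle expression, and the final $2d\sqrt{1-\pur^2}$ follows from $\pur-\pur^2=\pur(1-\pur)\le(1+\pur)(1-\pur)=1-\pur^2$.

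For the diamond bound~\eqref{eq:diamond_dist_choi} I would again split $\mathcal{T}=\mathcal{T}_{\mathrm{lead}}+\mathcal{T}_{\mathrm{rest}}$ with $\mathcal{T}_{\mathrm{lead}}=E_{\max}\bullet E_{\max}^\dagger$ and $\mathcal{T}_{\mathrm{rest}}=\sum_{k\ne\max}E_k\bullet E_k^\dagger$, and bound $\|\mathcal{T}-\mathcal{U}\|_\diamond\le\|\mathcal{T}_{\mathrm{rest}}\|_\diamond+\|\mathcal{T}_{\mathrm{lead}}-\mathcal{U}\|_\diamond$. Since $\mathcal{T}_{\mathrm{rest}}$ is completely positive, the standard identity $\|\Phi\|_\diamond=\|\Phi^*(\mathbb{1})\|_\infty$ for CP maps gives $\|\mathcal{T}_{\mathrm{rest}}\|_\diamond=\|\sum_{k\ne\max}E_k^\dagger E_k\|_\infty=\|\mathbb{1}-E_{\max}^\dagger E_{\max}\|_\infty=1-s_{\min}^2\le d(1-\lambda_{\max})$. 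For the leading piece I use the polar decomposition and the invariance of the diamond norm under composition with the unitary channel $U\bullet U^\dagger$ to reduce to $\|\Psi\|_\diamond$ with $\Psi(\rho)=|E_{\max}|\rho|E_{\max}|-\rho$; setting $Q=\mathbb{1}-|E_{\max}|$ gives the sharp rewriting $\Psi(\rho)=-|E_{\max}|\rho Q-Q\rho$, whence $\|\Psi\|_\diamond\le(\||E_{\max}|\|_\infty+1)\|Q\|_\infty\le 2(1-s_{\min})\le 2(1-s_{\min}^2)\le 2d(1-\lambda_{\max})$. Adding the constants $1$ and $2$ produces exactly $3d(1-\|\Lambda\|_\infty)$ (and when $d(1-\lambda_{\max})\ge 1$ the bound exceeds $2$, so it holds trivially).

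The main obstacle, and the conceptual heart of the proof, is recognising that the trace-preserving constraint is precisely what makes the leading Kraus operator unitary-like in the relevant sense: without $s_i\le 1$, the eigenvalue datum $\lambda_{\max}$ (a statement about $\|E_{\max}\|_2$) cannot be converted into control of $\|E_{\max}\|_1$ or of $s_{\min}$, and a rank-deficient $E_{\max}$ with large $\lambda_{\max}$ would defeat every estimate above. A secondary technical point is tracking the constants so the diamond bound closes at $3$ rather than $4$; this is exactly what the decomposition $\Psi(\rho)=-|E_{\max}|\rho Q-Q\rho$ buys (factor $2$), as opposed to the naive $(\mathbb{1}-Q)\rho(\mathbb{1}-Q)-\rho$ expansion (factor $3$).
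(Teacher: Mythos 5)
Your proof is correct and follows essentially the same route as the paper's: the same splitting of $\hat{\mathcal{T}}-\hat{\mathcal{U}}$ (and of $\mathcal{T}-\mathcal{U}$) into the leading Kraus piece plus the rest, the same polar decomposition $E_{\max}=U|E_{\max}|$, and the same crucial use of the trace-preserving condition to force $|E_{\max}|\le\mathbb{1}$. The only differences are cosmetic: you evaluate the Frobenius cross term exactly as $\|E_{\max}\|_1^2$ and then use $\tr|E_{\max}|\ge\tr|E_{\max}|^2$, where the paper applies the operator inequality $|E_{\max}|\ge|E_{\max}|^2$ directly under the trace, and for the remainder channel you invoke the identity $\|\Phi\|_\diamond=\|\Phi^*(\mathbb{1})\|_\infty$ for completely positive maps where the paper passes through the trace norm of the Choi state via Lemma~\ref{lemma:diamond_choi} --- both yield the identical intermediate bounds $d^2(1-\lambda_{\max}^2)$, $d(1-\lambda_{\max})$, and $2d(1-\lambda_{\max})$.
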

\begin{proof}
This is proved in Sec.~\ref{sec:proofTr1}.	
\end{proof}
\noindent
A bound on the distance of the trajectory evolution of system 1 from a unitary can also be inferred from Theorem~\ref{thm:Tr1P} using Proposition~\ref{thm:dist_unitary_purity}. The Bhatia-Davis inequality~\cite{BD2000} gives a bound on the variance accordingly. For this, notice that the maximum Frobenius norm distance between two matrix representations of CPTP maps acting on a $d$-dimensional quantum system can be bounded by $2d$. This follows from the triangle inequality together with~\eqref{eq:Fnorm_purity}. Furthermore, the maximum diamond norm distance is bounded by $2$, which is inherited from the trace norm distance.
\begin{corollary}[Average and variance of the distance of the trajectory evolution of system 1 from a unitary]
\label{cor:Tr1D}
Let $\hat{\mathcal{E}}_{1,\sigma_2}^{(j)}(n)$ be the matrix representation of the reduced random trajectory dynamical decoupling evolution $\mathcal{E}_{1,\sigma_2}^{(j)}(n)$ on system 1. As in Proposition~\ref{thm:dist_unitary_purity}, define the map $\mathcal{U}_{1,\sigma_2}^{(j)}(n)=U_{1,\sigma_2}^{(j)}(n){}\bullet{}U_{1,\sigma_2}^{(j)\dag}(n)$ with $U_{1,\sigma_2}^{(j)}(n)$ the closest unitary to the Kraus operator $E_\mathrm{max}$ belonging to the largest eigenvalue of $\Lambda_{1,\sigma_2}^{(j)}(n)$ according to~(\ref{Kraus-choice}). Then,
	\begin{align}
		\mathbb{E}\Bigl[\Fnorm{\hat{\mathcal{E}}_{1,\sigma_2}^{(j)}(n)-\hat{\mathcal{U}}_{1,\sigma_2}^{(j)}(n)}\Bigr]
		&\leq
		2d_1\sqrt{1-\left(1-\frac{1}{n}d_2\infnorm{\sigma_2}T^2\right)^4},\label{eq:bound_matrix_rep}\\
		\Var\Bigl[\Fnorm{\hat{\mathcal{E}}_{1,\sigma_2}^{(j)}(n)-\hat{\mathcal{U}}_{1,\sigma_2}^{(j)}(n)}\Bigr]
		&\leq
		4d_1^2\sqrt{1-\left(1-\frac{1}{n}d_2\infnorm{\sigma_2}T^2\right)^4},\\
		\mathbb{E}\Bigl[\|\mathcal{E}_{1,\sigma_2}^{(j)}(n)-\mathcal{U}_{1,\sigma_2}^{(j)}(n)\|_\diamond\Bigr]
		&\leq
		\frac{3}{n}d_1d_2\|\sigma_2\|_\infty T^2,\label{eq:Diamond1}\\
		\Var\Bigl[\|\mathcal{E}_{1,\sigma_2}^{(j)}(n)-\mathcal{U}_{1,\sigma_2}^{(j)}(n)\|_\diamond\Bigr]
		&\leq
		\frac{6}{n}d_1d_2\|\sigma_2\|_\infty T^2,
	\end{align}
where $\hat{\mathcal{U}}_{1,\sigma_2}^{(j)}(n)$ is the matrix representation of $\mathcal{U}_{1,\sigma_2}^{(j)}(n)$.
\end{corollary}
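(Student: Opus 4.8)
The plan is to feed the purity and operator-norm estimates of Theorem~\ref{thm:Tr1P} into the channel-to-unitary distance bounds of Proposition~\ref{thm:dist_unitary_purity}, take expectations over the trajectory label $j$ (using Jensen's inequality where the dependence is nonlinear), and then invoke the Bhatia-Davis inequality to pass from the expectations to the variances. Throughout, the relevant dimension is $d=d_1$, since $\mathcal{E}_{1,\sigma_2}^{(j)}(n)$ acts on system 1, and the only randomness is over $j$.

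For the expected Frobenius-norm distance~\eqref{eq:bound_matrix_rep}, I would start from the upper bound $\Fnorm{\hat{\mathcal{E}}_{1,\sigma_2}^{(j)}(n)-\hat{\mathcal{U}}_{1,\sigma_2}^{(j)}(n)}\le2d_1\sqrt{1-\pur^2}$ furnished by Proposition~\ref{thm:dist_unitary_purity}, with $\pur=\pur(\Lambda_{1,\sigma_2}^{(j)}(n))$. Taking the expectation and using that $x\mapsto\sqrt{1-x^2}$ is concave and decreasing on $[0,1]$, Jensen's inequality gives $\E[\sqrt{1-\pur^2}]\le\sqrt{1-(\E[\pur])^2}$. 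Squaring the purity bound of Theorem~\ref{thm:Tr1P} yields $\E[\pur]\ge(1-\frac1n d_2\infnorm{\sigma_2}T^2)^2$, and inserting this into the decreasing function $\sqrt{1-(\cdot)^2}$ reproduces the claimed bound with the fourth power.

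The expected diamond distance~\eqref{eq:Diamond1} is more direct. From $\|\mathcal{T}-\mathcal{U}\|_\diamond\le3d(1-\infnorm{\Lambda})$ in Proposition~\ref{thm:dist_unitary_purity}, linearity of the expectation reduces the estimate to $\E[\infnorm{\Lambda_{1,\sigma_2}^{(j)}(n)}]$, which Theorem~\ref{thm:Tr1P} bounds below by $1-\frac1n d_2\infnorm{\sigma_2}T^2$; this immediately gives the bound $\frac3n d_1 d_2\infnorm{\sigma_2}T^2$.

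For the two variances I would apply Bhatia-Davis, $\Var[X]\le(\max[X]-\E[X])(\E[X]-\min[X])$. In each case $\min[X]\ge0$, while the remark preceding the corollary supplies $\max[X]\le2d_1$ for the Frobenius distance (via the triangle inequality and~\eqref{eq:Fnorm_purity}) and $\max[X]\le2$ for the diamond distance (inherited from the trace norm). Bounding $\max[X]-\E[X]\le\max[X]$ and $\E[X]-\min[X]\le\E[X]$ then leaves $\Var[X]\le\max[X]\,\E[X]$, into which the two expectation bounds already derived are substituted. The only genuinely nonmechanical ingredient is the Jensen/concavity step for the Frobenius estimate: one must verify concavity of $\sqrt{1-x^2}$ on $[0,1]$ and keep track of its monotone decrease so that the lower bound on $\E[\pur]$ translates into the desired upper bound on the distance. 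The diamond-norm estimates and all variance bounds then follow mechanically from linearity of the expectation and Bhatia-Davis.
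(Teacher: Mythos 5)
Your proposal is correct and follows essentially the same route the paper intends: feed the purity and operator-norm bounds of Theorem~\ref{thm:Tr1P} into the two distance bounds of Proposition~\ref{thm:dist_unitary_purity} (with $d=d_1$), using Jensen's inequality for the concave, decreasing map $x\mapsto\sqrt{1-x^2}$ to obtain the fourth power in~\eqref{eq:bound_matrix_rep} and plain linearity for~\eqref{eq:Diamond1}, then apply Bhatia--Davis with $\max[X]\le 2d_1$ (Frobenius) and $\max[X]\le 2$ (diamond) and $\min[X]\ge 0$. All four resulting bounds match the corollary, so nothing further is needed.
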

\noindent
The bound in~\eqref{eq:bound_matrix_rep} on the distance from the unitary in the matrix representation diminishes in the decoupling limit $n\rightarrow\infty$ as $\order{1/\sqrt{n}}$. Numerically, we observe $\order{1/n}$. See the blue and red points in Fig.~\ref{fig:DU_Super}. For the bound on the diamond distance in~\eqref{eq:Diamond1}, we find a $\order{1/n}$ scaling.

In order to gain some intuition about the random trajectories, an example of the purity for a typical random trajectory as well as an example of the purity for an atypical random trajectory are shown in Fig~\ref{fig:typical_atypical_trajectory}. The purity of the atypical trajectory behaves very differently from the average purity whereas the purity of the typical trajectory exhibits a similar qualitative nature. This circumstance is a consequence of the fact that almost all random trajectories lead to system evolutions close to a unitary in the decoupling limit $n\rightarrow\infty$. This can be seen by looking at the probability of getting ``bad'' trajectories. In Fig.~\ref{fig:Probability_bad_purity}, the estimated probability of achieving only low purity of the reduced Choi-Jamio\l{}kowsi state ($\pur\le 0.99$) is shown as a function of the number $n$ of decoupling steps. The numerical experiment reveals an exponential decay of the probability of ``bad'' trajectories with increasing $n$.
\begin{figure}
\centering
	\begin{subfigure}[r]{.45\textwidth}
  	\begin{tikzpicture}[mark size={0.6}, scale=1]
	\begin{axis}[
	xmode=log,
	ymode=log,
	xlabel={$n$},
	ylabel={$1-\pur\big(\Lambda_{1,\ket{0}\bra{0}}\big)$},
	x post scale=0.8,
	y post scale=0.8,
	legend pos=south west,
	legend cell align={left},
	]
	\addplot[color=blue, only marks] table[x=n, y=P, col sep=comma] {Purity_1_sigma.csv};
	\addlegendentry{\small $\mathbb{E}[\pur(\Lambda_{1,\ket{0}\bra{0}}^{(j)})]$};
	\addplot[color=red, only marks] table[x=n, y=P, col sep=comma] {Purity_atyp_1_sigma.csv};
	\addlegendentry{\small $\pur(\Lambda_{1,\ket{0}\bra{0}}^\mathrm{atypical})$};
	\addplot[color=green, only marks] table[x=n, y=P, col sep=comma] {Purity_typ_1_sigma.csv};
	\addlegendentry{\small $\pur(\Lambda_{1,\ket{0}\bra{0}}^\mathrm{typical})$};
	\end{axis}
	\end{tikzpicture}
  	\caption{Purity of the reduced Choi-Jamio\l{}kowski state $\Lambda_{1,\ket{0}\bra{0}}^{(j)}$ of the trajectory evolution under random dynamical decoupling for the initial state $\sigma_2=\ket{0}\bra{0}$ of system 2. Blue: Average over 100 random realizations. Red: random trajectory through an atypical decoupling sequence. Green: random trajectory through a typical decoupling sequence.}
  	\label{fig:typical_atypical_trajectory}
	\end{subfigure}
	\hfill
	\begin{subfigure}[l]{.45\textwidth}
  	\begin{tikzpicture}[mark size={0.6}, scale=1]
	\begin{axis}[
	ymode=log,
	ylabel near ticks,
	yticklabel pos=right,
	xlabel={$n$},
	ylabel={$\mathbb{P}\Big[\pur\big(\Lambda_{1,\ket{0}\bra{0}}^{(j)}\big)\le 0.99\Big]$},
	x post scale=0.8,
	y post scale=0.8,
	]
	\addplot[color=blue, only marks] table[x=n, y=P, col sep=comma] {ProbPur1s.csv};
	\end{axis}
	\end{tikzpicture} 
  	\caption{Estimated Probability of obtaining $\pur(\Lambda_{1,\ket{0}\bra{0}}^{(j)})\le 0.99$. For each decoupling step $n$, 1000 random sequences are sampled, and the number of trajectories leading to Choi-Jamio\l{}kowski states of purity less than or equal to 0.99 is counted. The plot shows that the probability of such ``bad'' random trajectories decreases exponentially with $n$.}
  	\label{fig:Probability_bad_purity}
	\end{subfigure}
	\begin{tikzpicture}
\end{tikzpicture}
\caption{
Purities through typical and atypical decoupling sequences. 
The model is exactly the same as the one in Fig.~\ref{fig:DD_full}, i.e.\ one qubit on system 1 and one qubit on system 2. The Hamiltonian $H$ is a generically chosen traceless Hermitian matrix and the total evolution time is $t=1/\|\mathcal{\hat{H}}\|_\infty$. See Appendix~\ref{appendix:numerics} for details. System 2 is initialized in the state $\sigma_2=\ket{0}\bra{0}$. In subfigure (a), typical and atypical sample trajectories are compared with the average over 100 different random trajectories. The typical trajectory behaves similarly to the average whereas the atypical trajectory does not even show decoupling. This indicates that almost all trajectories will lead to a system evolution close to unitary. The typical trajectory is just one particular random trajectory through the decoupling group $\mathscr{V}=\{\mathbb{1},X,Y,Y\}$, where each element is chosen uniformly at random. The atypical trajectory is obtained by changing the probability distribution and making the drawing of the element $\mathbb{1}\in\mathscr{V}$ 20 times more likely than all the others. Both trajectories obtained by this procedure are concretely specified in Appendix~\ref{appendix:numerics}\@. In subfigure (b), the estimated probability of getting ``bad'' trajectories is shown as a function of the number of decoupling steps $n$. We observe an exponential decay of this probability, which further clarifies that almost all random trajectories yield system evolutions close to unitary in the decoupling limit $n\rightarrow\infty$.}
\label{fig:probability_full}
\end{figure}
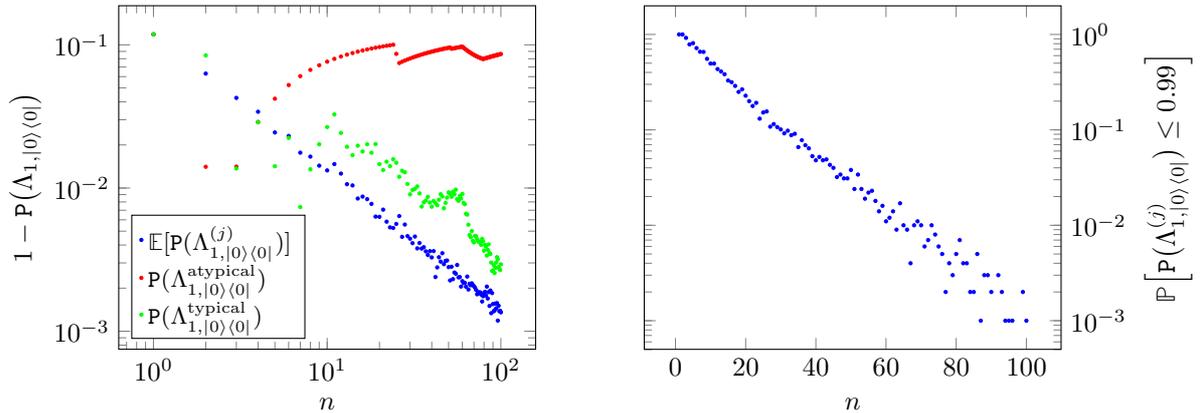

The bounds in Corollary~\ref{cor:Tr1D} specify the distance between the evolution of system 1 and some unknown unitary $\mathcal{U}_{1,\sigma_2}^{(j)}$ under decoupling sequence. If one keeps track of the applied random pulses, one could apply the inverse of the product of all the pulses at the end of the decoupling sequence, i.e.,
\begin{equation}
	\mathcal{\tilde{E}}_n^{(j)}(t)=(\mathcal{V}_{n+1}^{(j)}\cdots\mathcal{V}_1^{(j)})^\dag\mathcal{E}_n^{(j)}(t).\label{eq:pulse_inversion}
\end{equation}
In this case, one would expect that the resulting reduced evolution $\mathcal{\tilde{E}}_{1,\sigma_2}^{(j)}(n)$ of system 1 is close to the identity $\mathbb{I}_1$. Indeed, we numerically find that $\mathcal{\tilde{E}}_{1,\sigma_2}^{(j)}(n)$ converges to $\mathbb{I}_1$ in the decoupling limit $n\rightarrow\infty$. See the blue dots in Fig.~\ref{fig:pulse_inversion}. Notice that the distances plotted in Fig.~\ref{fig:pulse_inversion} in terms of the Choi-Jamio\l{}kowski states are equivalent to the corresponding diamond distances due to Lemma~\ref{lemma:diamond_choi}. Therefore, the green dots in Fig.~\ref{fig:pulse_inversion} shows that the bound in~\eqref{eq:Diamond1} captures the correct asymptotic scaling in $n$, i.e., $\mathcal{E}_{1,\sigma_2}^{(j)}(n)$ converges to $\mathcal{U}_{1,\sigma_2}^{(j)}(n)$ in diamond norm as $\order{1/n}$. However, we numerically find that the reduced evolution $\tilde{\mathcal{E}}_{1,\sigma_2}^{(j)}(n)$ after undoing the random pulses converges to the identity $\mathbb{I}_1$ only as $\order{1/\sqrt{n}}$, as shown by the blue dots in Fig.~\ref{fig:pulse_inversion}. This difference in scaling manifests in the numerical evidence that $\tilde{\mathcal{U}}_{1,\sigma_2}^{(j)}(n)$ also converges to the identity $\mathbb{I}_1$ only as $\order{1/\sqrt{n}}$. See the red dots in Fig.~\ref{fig:pulse_inversion}. This makes sense as by the triangle inequality and the unitary invariance of the trace norm we have
\begin{align}
	\left\|\tilde{\Lambda}_{1,\sigma_2}^{(j)}(n)-\frac{1}{d_1}|\mathbb{1}_{11'})(\mathbb{1}_{11'}|\right\|_1
	\le{}&
	\left\|
	\Lambda_{1,\sigma_2}^{(j)}(n)
	-\lambda_0|v_{1,\sigma_2}^{(j)}(n))(v_{1,\sigma_2}^{(j)}(n)|
	\right\|_1\nonumber\\
	&{}+
	\left\|\lambda_0|\tilde{v}_{1,\sigma_2}^{(j)}(n))(\tilde{v}_{1,\sigma_2}^{(j)}(n)|
	-\frac{1}{d_1}|\mathbb{1}_{11'})(\mathbb{1}_{11'}|
	\right\|_1,
	\label{eq:triangle_pulse_inv}
\end{align}
where $|v_{1,\sigma_2}^{(j)}(n))$ and $|\tilde{v}_{1,\sigma_2}^{(j)}(n))$ are the eigenvectors of $\Lambda_{1,\sigma_2}^{(j)}(n)$ and $\tilde{\Lambda}_{1,\sigma_2}^{(j)}(n)$, respectively, belonging to their largest eigenvalue $\lambda_0$.
Unfortunately, it is not possible by our method to give explicit bounds on the quantities including the pulse inversions. This is because we do not have enough information to relate $\mathcal{\tilde{U}}_{1,\sigma_2}^{(j)}(n)$ to $\mathbb{I}_1$, or equivalently, $\mathcal{U}_{1,\sigma_2}^{(j)}(n)$ to the product of all the applied random pulses, which can be seen by the following simple argument. Consider the extreme case where there is no interaction between systems 1 and 2, and we only aim to completely switch off the Hamiltonian of system 1 by the randomly chosen pulses. In this case, the target evolution of system 1 is the identity map $\mathbb{I}_1$ and the efficiency of the random dynamical decoupling is given by the distance of the reduced evolution of system 1 to the identity $\|\tilde{\mathcal{E}}_1^{(j)}(n)-\mathbb{I}_1\|_\diamond$. However, since there is no interaction between systems 1 and 2, the reduced evolution $\mathcal{E}_1^{(j)}(n)$ of system 1 is for sure unitary, no matter how good or bad the decoupling works. Therefore, the distance of the reduced evolution $\mathcal{E}_1^{(j)}(n)$ to the unitary $\mathcal{U}_1^{(j)}(n)$ specified by the Kraus operator corresponding to the largest eigenvalue of its Choi-Jamio\l{}kowski state is always zero, $\|\mathcal{E}_1^{(j)}(n)-\mathcal{U}_1^{(j)}(n)\|_\diamond=0$. Nevertheless, $\mathcal{U}^{(j)}(n)$ can be far away from the product of all the applied pulses and the distance $\|\tilde{\mathcal{E}}_1^{(j)}(n)-\mathbb{I}_1\|_\diamond$ can be big. In this way, the distance to a unitary is not very informative about the convergence to the target unitary.

\begin{figure}
\centering
	\begin{tikzpicture}[mark size={1}, scale=1]
	\begin{axis}[
	xmode=log,
	ymode=log,
	xlabel={$n$},
	ylabel={$\mathbb{E}\left[\|\Lambda(\mathcal{S})-\Lambda(\mathcal{T})\|_1\right]$},
	x post scale=1.2,
	y post scale=1.2,
	transpose legend,
	legend columns = 3,
	legend pos = south west,
	legend cell align={left},
	]
	\addplot[color=blue, only marks] table[x=n, y=DU, col sep=comma]{DU_Choi_inv_1_map_id_sigma.csv};
	\addplot[color=red, only marks] table[x=n, y=DU, col sep=comma]{DU_Choi_inv_U_id_1_sigma.csv};
	\addplot[color=green, only marks] table[x=n, y=DU, col sep=comma]{DU_Choi_1_sigma_trnorm.csv};
	\legend{$\mathcal{S}=\tilde{\mathcal{E}}^{(j)}_{1,\ket{0}\bra{0}}\text{, }\mathcal{T}=\mathbb{I}_1$, $\mathcal{S}=\tilde{\mathcal{U}}^{(j)}_{1,\ket{0}\bra{0}}\text{, }\mathcal{T}=\mathbb{I}_1$, $\mathcal{S}=\mathcal{E}^{(j)}_{1,\ket{0}\bra{0}}\text{, }\mathcal{T}=\mathcal{U}^{(j)}_{1,\ket{0}\bra{0}}$}
	\end{axis}
	\end{tikzpicture}
\caption{
Numerical evaluation of the quantities in~\eqref{eq:triangle_pulse_inv}. The maps with tildes represent the evolutions with all the random pulses being inverted at the end of the decoupling sequence. See~\eqref{eq:pulse_inversion}. In this case, the reduced evolution $\mathcal{\tilde{E}}_{1,\sigma_2}^{(j)}$ of system 1 converges to the identity map $\mathbb{I}_1$ as $\mathcal{O}(1/\sqrt{n})$, which is shown by the blue dots. Similarly, the closest unitary $\mathcal{\tilde{U}}_{1,\sigma_2}^{(j)}$ to the Kraus operator corresponding to the largest eigenvalue of the Choi-Jamio\l{}kowski state $\tilde{\Lambda}_{1,\sigma_2}^{(j)}$ of $\mathcal{\tilde{E}}_{1,\sigma_2}^{(j)}$ converges to the identity map $\mathbb{I}_1$ as $\mathcal{O}(1/\sqrt{n})$, as can be seen from the red dots. In contrast, the distance between $\mathcal{E}_{1,\sigma_2}^{(j)}$ and $\mathcal{U}_{1,\sigma_2}^{(j)}$ shrinks as $\mathcal{O}(1/n)$. It is depicted with the green dots. Note that this distance is the same as the distance between $\mathcal{\tilde{E}}_{1,\sigma_2}^{(j)}$ and $\mathcal{\tilde{U}}_{1,\sigma_2}^{(j)}$ by the unitary equivalence of the trace norm. By the triangle inequality, $\text{blue}\le\text{red}+\text{green}$. See~\eqref{eq:triangle_pulse_inv}. The distances plotted here, in terms of the trace norm of the Choi-Jamio\l{}kowski state, are equivalent to the diamond distances of the corresponding channels by Lemma~\ref{lemma:diamond_choi}. For this numerical evaluation, we used the same model as in Fig.~\ref{fig:DD_full}, i.e.\ one qubit on system 1 and one qubit on system 2 with a generically chosen Hamiltonian $\mathcal{H}$, and the total evolution time is $t=1/\|\mathcal{\hat{H}}\|_\infty$. See Appendix~\ref{appendix:numerics} for details. The initial state of system 2 is chosen to be $\sigma_2=\ket{0}\bra{0}$. In order to estimate the expectation value, we sampled $100$ random realizations.
}
\label{fig:pulse_inversion}
\end{figure}
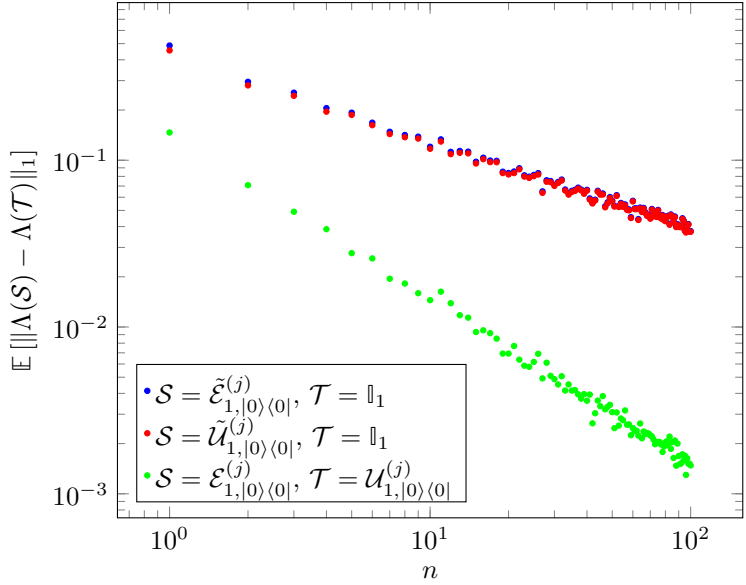

In summary, all decoupling bounds presented here are ruled by the Zeno bound derived in Theorem~\ref{thm:convergence_average}. This shows a deep connection between the quantum Zeno effect and the random dynamical decoupling through a phenomenon that we call \emph{equitability of system and bath}. This is, by using the fact that the bath evolution (system 2) is close to unitary, we infer that the system evolution (system 1) is so as well. Hence, there is no preferred subsystem for quantum control and both $\mathscr{H}_1$ and $\mathscr{H}_2$ can be treated equitable. Physically, this idea is based on the fact that we only need to remove the interaction term in the Hamiltonian to achieve decoupling. If that happens \emph{both} subsystems evolve unitarily. Mathematically, the approach is justified by the fact that the total evolution $\mathcal{E}_n^{(j)}(t)$ is unitary even in the presence of controls. Thus, its Choi-Jamio\l{}kowski state $\Lambda^{(j)}(n)$ is pure. Due to the Schmidt-decomposition, the reduced Choi-Jamio\l{}kowski states then have the same purity. As a consequence, it suffices to enhance the purity of one subsystem to obtain a higher purity on \emph{both} subsystems. On one hand, performing projective measurements on system 1 leads to a unitary Zeno dynamics on system 2. On the other hand, random dynamical decoupling gives rise to a unitary evolution on system 1. However, the idea of equitability of system and bath shows that these two are essentially the same.

The mathematical structure of this approach to the random dynamical decoupling might remind some readers of the procedure of randomized benchmarking. Therefore, it is worth briefly discussing similarities and differences between the two. On one hand, the projection $\mathcal{D}$ that specifies the average random dynamical decoupling evolution in the Zeno limit projects $\mathscr{H}_1$ to the group average of the decoupling group $\mathscr{V}$. Due to the irreducibility assumption, $\mathcal{D}$ is a $\mathscr{V}$-twirl over a unitary quantum $1$-design on $\mathscr{H}_1$, which will average away (switch off) the noise induced by the bath. On the other hand, in randomized benchmarking, one applies random Clifford operations to a quantum system in the presence of noise, i.e., one implements a twirl over a unitary quantum $2$-design. By this procedure, one averages over different noise sources to make the noise more isotropic or less correlated. Whilst there is this apparent mathematical similarity between random dynamical decoupling and randomized benchmarking of twirling the noise over a unitary quantum $t$-design, these two methods are different from a physical perspective. The goal of randomized benchmarking is to characterize the noise whereas random dynamical decoupling aims to remove it. Furthermore, the description of random dynamical decoupling relies on the full model of a quantum system together with a quantum bath. In contrast, in randomized benchmarking, the bath is not specified and only a noisy quantum system is considered. A study of explicit error bounds for the average gate fidelities in the context of randomized benchmarking can be found in Ref.~\cite{Wallman_2014}.

\section{Convergence of the Quantum Zeno Limit and the Average Dynamical Decoupling}\label{sec:Zeno_av}
Ultimately, our goal is to show that the decoupling error of random dynamical decoupling is essentially determined by the convergence of the quantum Zeno limit, and hence to prove Theorems~\ref{thm:Tr1P} and~\ref{thm:Tr1D}. The route we take is to infer information from the average evolution. This section is dedicated to the average protocol, hence we prove the statements of Sec.~\ref{sec:Av2}. The evolution under the average protocol can be understood as a special case of quantum Zeno dynamics. A very general treatment of the quantum Zeno dynamics by arbitrary quantum operations has been developed in Ref.~\cite{Burgarth2018}. Here, we only focus on the standard procedure with repeated projective measurements. We first discuss the convergence of this scheme and prove a bound, which scales as $\order{1/n}$ (Theorem~\ref{thm:convergence_average}). Afterwards, we show how this relates to the average dynamical decoupling protocol. This discussion directly gives us a convergence bound on the average evolution under the random dynamical decoupling (Propositions~\ref{prop:Av2D} and~\ref{prop:Av2P}).

\subsection{Convergence of the Quantum Zeno Limit}\label{sec:Zeno}
In this subsection, we discuss the convergence of the quantum Zeno limit and prove Theorem~\ref{thm:convergence_average}. We will extensively make use of the fact that unitaries as well as Hermitian projection operators have norm $1$ in the operator norm, e.g.~$\infnorm{\rme^{-\rmi tH}}=\infnorm{P}=\infnorm{\mathbb{1}-P}=1$. Let us start by recalling an auxiliary lemma.
\begin{lemma}\label{lemma:aux_lemma_av}
For any Hermitian operator $X=X^\dagger$ and for any $t\geq 0$, we have
\begin{align}
\|\rme^{-\rmi tX}-\mathbb{1}\|_\infty
&\leq t\|X\|_\infty,
\label{eq:aux_lemma_1}\\
\|\rme^{-\rmi tX}-\mathbb{1}+\rmi tX\|_\infty
&\leq\frac{1}{2}t^2\|X\|_\infty^2.
\label{eq:aux_lemma_2}
\end{align}
\end{lemma}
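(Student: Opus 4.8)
The plan is to express each quantity as an operator-valued integral via the fundamental theorem of calculus and then bound it using submultiplicativity of the operator norm together with the fact that the propagator $\rme^{-\rmi sX}$ is unitary, so that $\infnorm{\rme^{-\rmi sX}}=1$ for all $s$. Since $X$ is Hermitian it commutes with $\rme^{-\rmi sX}$, which makes these integral representations particularly clean.

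For \eqref{eq:aux_lemma_1}, I would write
\begin{equation}
\rme^{-\rmi tX}-\mathbb{1}
=\int_0^t\frac{\rmd}{\rmd s}\rme^{-\rmi sX}\,\rmd s
=-\rmi X\int_0^t\rme^{-\rmi sX}\,\rmd s,
\end{equation}
and then estimate, using the triangle inequality for integrals, submultiplicativity, and $\infnorm{\rme^{-\rmi sX}}=1$,
\begin{equation}
\infnorm{\rme^{-\rmi tX}-\mathbb{1}}
\le\int_0^t\infnorm{X}\,\infnorm{\rme^{-\rmi sX}}\,\rmd s
=\int_0^t\infnorm{X}\,\rmd s
=t\infnorm{X}.
\end{equation}

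For \eqref{eq:aux_lemma_2}, I would recentre the integrand: differentiating the left-hand side gives $\frac{\rmd}{\rmd s}(\rme^{-\rmi sX}-\mathbb{1}+\rmi sX)=-\rmi X(\rme^{-\rmi sX}-\mathbb{1})$, so that
\begin{equation}
\rme^{-\rmi tX}-\mathbb{1}+\rmi tX
=-\rmi X\int_0^t(\rme^{-\rmi sX}-\mathbb{1})\,\rmd s.
\end{equation}
Feeding in the bound \eqref{eq:aux_lemma_1} already established (with $t$ replaced by $s$) then yields
\begin{equation}
\infnorm{\rme^{-\rmi tX}-\mathbb{1}+\rmi tX}
\le\int_0^t\infnorm{X}\,s\infnorm{X}\,\rmd s
=\tfrac{1}{2}t^2\infnorm{X}^2.
\end{equation}

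This argument is essentially routine and I do not anticipate a genuine obstacle; the only points requiring care are that the operator-valued integrals are legitimate (guaranteed by continuity of $s\mapsto\rme^{-\rmi sX}$ in finite dimension) and that submultiplicativity is used to split off $\infnorm{X}$ from the unitary factor. An equivalent route, which avoids integrals entirely, is to diagonalise the Hermitian $X$ and reduce both claims to the elementary scalar estimates $|\rme^{\rmi\theta}-1|\le|\theta|$ and $|\rme^{\rmi\theta}-1-\rmi\theta|\le\tfrac{1}{2}\theta^2$, evaluated at each eigenvalue, since the operator norm of a function of a Hermitian matrix equals the maximum modulus of that function over its spectrum.
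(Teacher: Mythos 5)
Your proof is correct and follows essentially the same route as the paper: both start from the integral representation $\rme^{-\rmi tX}-\mathbb{1}=-\rmi X\int_0^t\rme^{-\rmi sX}\,\rmd s$ and exploit unitarity of the propagator. The only (inessential) difference is in the second bound, where the paper continues by integration by parts to get the remainder $-\int_0^t(t-s)X^2\rme^{-\rmi sX}\,\rmd s$, while you bootstrap the first estimate into the identity $\rme^{-\rmi tX}-\mathbb{1}+\rmi tX=-\rmi X\int_0^t(\rme^{-\rmi sX}-\mathbb{1})\,\rmd s$; both give the same constant, and your spectral-calculus remark is a valid alternative as well.
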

\begin{proof}
Observe first that
\begin{equation}
\rme^{-\rmi tX}-\mathbb{1}
=\int_0^t\rmd s\,\frac{\partial}{\partial s}\rme^{-\rmi sX}
=-\rmi\int_0^t\rmd s\,X\rme^{-\rmi sX},
\label{eq:e-1}
\end{equation}
which yields the first bound (\ref{eq:aux_lemma_1}).
Continuing (\ref{eq:e-1}) by integration by parts,
\begin{align}
\rme^{-\rmi tX}-\mathbb{1}
&=
\rmi(t-s)X\rme^{-\rmi sX}\biggr|_{s=0}^{s=t}
-\int_0^t\rmd s\,(t-s)X^2\rme^{-\rmi sX}
\nonumber\\
&=-\rmi tX
-\int_0^t\rmd s\,(t-s)X^2\rme^{-\rmi sX},
\label{eq:X=H_X=PHP}
\end{align}
this gives the second bound (\ref{eq:aux_lemma_2}).
\end{proof}
\noindent
We apply Lemma~\ref{lemma:aux_lemma_av} to prove a bound on the following quantum Zeno limit.
\begin{proposition}\label{lemma:Dav_convergence}
Let $H=H^\dagger$ be a Hermitian operator and let $P=P^2=P^\dagger$ be a Hermitian projection. 
Then, for any $t\ge0$ and $n\in\mathbb{N}$, we have
\begin{equation}
\|(P\rme^{-\rmi \frac{t}{n}H}P)^n-\rme^{-\rmi tPHP}P\|_\infty
\le\frac{t^2}{n}\|H\|_\infty^2.
\label{eq:Dav_prop}
\end{equation}
\end{proposition}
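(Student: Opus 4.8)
The plan is to establish \eqref{eq:Dav_prop} by a telescoping (``chain'') argument reduced to a single-step estimate, the latter supplied by the second-order bound of Lemma~\ref{lemma:aux_lemma_av}. Set $\tau=t/n$ and abbreviate $A=P\rme^{-\rmi\tau H}P$, so that the left-hand operator is $A^n$. The first move is to recognize the target $\rme^{-\rmi tPHP}P$ as an $n$-th power of a comparable single-step operator. Since $P$ commutes with $PHP$ and $P^2=P$, the operator $B:=\rme^{-\rmi\tau PHP}P$ satisfies $B^n=\rme^{-\rmi n\tau PHP}P^n=\rme^{-\rmi tPHP}P$, and moreover $B=P\rme^{-\rmi\tau PHP}P$. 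Thus the quantity to control is $\infnorm{A^n-B^n}$, a difference of two $n$-fold products.

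Next I would invoke the telescoping identity $A^n-B^n=\sum_{k=0}^{n-1}A^k(A-B)B^{n-1-k}$. Because $P$ is a Hermitian projection and $\rme^{-\rmi\tau H}$, $\rme^{-\rmi\tau PHP}$ are unitary, both $A$ and $B$ have operator norm at most $1$; by submultiplicativity, every factor $A^k$ and $B^{n-1-k}$ then has operator norm at most $1$ as well. Taking norms and applying the triangle inequality collapses the sum to $\infnorm{A^n-B^n}\le n\,\infnorm{A-B}$. It therefore remains only to prove the single-step bound $\infnorm{A-B}\le\tau^2\infnorm{H}^2$, since this yields $n\tau^2\infnorm{H}^2=(t^2/n)\infnorm{H}^2$, exactly the claimed estimate.

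The heart of the argument is this single-step estimate, and the key point — which is also where a naive bound fails — is a cancellation forcing $A-B$ to be \emph{second} order in $\tau$ rather than first. Writing $A-B=P(\rme^{-\rmi\tau H}-\rme^{-\rmi\tau PHP})P$, I would expand each exponential through \eqref{eq:aux_lemma_2} as $\rme^{-\rmi\tau X}=\mathbb{1}-\rmi\tau X+R_X$ with $\infnorm{R_X}\le\frac12\tau^2\infnorm{X}^2$. The first-order contribution is then $-\rmi\tau\,P(H-PHP)P$, which vanishes identically: using $P^2=P$ one has $P(H-PHP)P=PHP-P(PHP)P=PHP-PHP=0$. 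Hence only the remainder terms survive, and $\infnorm{A-B}\le\infnorm{R_H}+\infnorm{R_{PHP}}\le\frac12\tau^2\infnorm{H}^2+\frac12\tau^2\infnorm{PHP}^2\le\tau^2\infnorm{H}^2$, where the last step uses $\infnorm{PHP}\le\infnorm{H}$. Combining with the telescoping bound finishes the proof. I expect the only genuinely nontrivial step to be spotting the vanishing of the first-order term $P(H-PHP)P$; the telescoping and the norm bookkeeping are routine once the second-order nature of $A-B$ is in hand.
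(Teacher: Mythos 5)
Your proposal is correct and follows essentially the same route as the paper: a telescoping reduction to a single-step estimate, with the single step controlled by the second-order bound \eqref{eq:aux_lemma_2} after observing that the first-order terms cancel because $P(H-PHP)P=0$. The paper phrases this cancellation by inserting $-\mathbb{1}+\rmi\frac{t}{n}H$ and $-\mathbb{1}+\rmi\frac{t}{n}PHP$ directly into the difference, but this is the same identity you exploit.
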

\begin{proof}
First, observe that
\begin{align}
\|
P\rme^{-\rmi\frac{t}{n}H}P
-\rme^{-\rmi\frac{t}{n}PHP}P
\|_\infty
&=\left\|
P
\left(
\rme^{-\rmi\frac{t}{n}H}-\mathbb{1}+\rmi\frac{t}{n}H
\right)
P
-
\left(
\rme^{-\rmi\frac{t}{n}PHP}
-\mathbb{1}+\rmi\frac{t}{n}PHP
\right)
P
\right\|_\infty
\nonumber\\
&\leq
\left\|
\rme^{-\rmi\frac{t}{n}H}
-\mathbb{1}
+\rmi\frac{t}{n}H
\right\|_\infty
+
\left\|
\rme^{-\rmi\frac{t}{n}PHP}
-\mathbb{1}+\rmi\frac{t}{n}PHP
\right\|_\infty
\nonumber\\
&\leq
\frac{t^2}{2n^2}\|H\|_\infty^2
+
\frac{t^2}{2n^2}\|PHP\|_\infty^2
\nonumber\\
&\leq
\frac{t^2}{n^2}\|H\|_\infty^2,
\label{eq:step2intermediate}
\end{align}
where we have used Lemma~\ref{lemma:aux_lemma_av} for the second inequality.
We then use a standard telescope sum trick
\begin{equation}
A^{n}-B^{n}=\sum_{k=0}^{n-1}A^{k}(A-B)B^{n-1-k},
\label{eq:Telescope_trick}
\end{equation}
for any operators $A$ and $B$ for any $n\in\mathbb{N}$. This is proved in Lemma~\ref{lemma:telescope} in Appendix~\ref{appendix:lemmas}\@.
By applying \eqref{eq:Telescope_trick} to $A=P\rme^{-\rmi\frac{t}{n}H}P$
		and $B=\rme^{-\rmi\frac{t}{n}PHP}P$ and using \eqref{eq:step2intermediate}, we get
\begin{align}
\|
(P\rme^{-\rmi\frac{t}{n}H}P)^n
-
\rme^{-\rmi tPHP}P
\|_\infty
&\leq
\sum_{k=0}^{n-1}
\|P\rme^{-\rmi\frac{t}{n}H}P\|_\infty^k
\|P\rme^{-\rmi\frac{t}{n}H}P-\rme^{-\rmi tPHP}P\|_\infty
\|\rme^{-\rmi tPHP}P\|_\infty^{n-1-k}
\nonumber\\
&\le
\frac{t^2}{n}\infnorm{H}^2,
\end{align}
which proves the proposition.
\end{proof}
\noindent
To prove Theorem~\ref{thm:convergence_average} presented in Sec.~\ref{sec:Av2}, we need to adjust the end of the sequence of repeated projections with the help of the following lemma.
\begin{lemma}
\label{lem:GoldenLemmaH}
Let $H=H^\dag$ be a Hermitian operator and let $P=P^2=P^\dag$ be a Hermitianity projection.
Then, for any $t\ge0$ and $n\in\mathbb{N}$, we have
\begin{equation}
\|
(\rme^{-\rmi\frac{t}{n}H}P)^n
-
(P\rme^{-\rmi\frac{t}{n}H}P)^n
\|_\infty
\le
\frac{t}{n}
\|H\|_\infty.
\label{eqn:GoldenLemmaH}
\end{equation}
\end{lemma}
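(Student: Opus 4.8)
The plan is to exploit the idempotency $P^2=P$ to collapse the entire difference into a single boundary term, thereby avoiding a telescope-sum estimate (which would produce $n$ terms each of size $\tfrac{t}{n}\infnorm{H}$ and hence only the useless bound $t\infnorm{H}$). The crucial observation is that, writing $W=\rme^{-\rmi\frac{t}{n}H}$, the repeated interior projections in $(PWP)^n$ merge: since $(PWP)(PWP)=PW(PP)WP=PWPWP$, one finds by induction that $(PWP)^n=P(WP)^n$. Consequently the two products differ only by the missing leftmost projection,
\begin{equation}
(\rme^{-\rmi\frac{t}{n}H}P)^n-(P\rme^{-\rmi\frac{t}{n}H}P)^n
=(WP)^n-P(WP)^n
=(\mathbb{1}-P)(\rme^{-\rmi\frac{t}{n}H}P)^n.
\end{equation}

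Next I would isolate the single problematic factor at the front. Splitting off the leftmost $WP=\rme^{-\rmi\frac{t}{n}H}P$ and using $(\mathbb{1}-P)P=0$ gives
\begin{equation}
(\mathbb{1}-P)\rme^{-\rmi\frac{t}{n}H}P
=(\mathbb{1}-P)\bigl(\rme^{-\rmi\frac{t}{n}H}-\mathbb{1}\bigr)P,
\end{equation}
so that the full difference factorizes as $(\mathbb{1}-P)(\rme^{-\rmi\frac{t}{n}H}-\mathbb{1})P\,(\rme^{-\rmi\frac{t}{n}H}P)^{n-1}$. Now submultiplicativity of the operator norm finishes the estimate: $\infnorm{\mathbb{1}-P}=1$ and $\infnorm{P}=1$ because $P$ is a Hermitian projection, the trailing product $P(\rme^{-\rmi\frac{t}{n}H}P)^{n-1}$ is a product of contractions and hence has operator norm at most $1$, and the middle factor is controlled by the auxiliary estimate \eqref{eq:aux_lemma_1} of Lemma~\ref{lemma:aux_lemma_av}, namely $\infnorm{\rme^{-\rmi\frac{t}{n}H}-\mathbb{1}}\le\frac{t}{n}\infnorm{H}$. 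Multiplying these yields exactly $\frac{t}{n}\infnorm{H}$, as claimed in \eqref{eqn:GoldenLemmaH}.

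The main obstacle is conceptual rather than computational: one must resist applying the telescope identity \eqref{eq:Telescope_trick} used in Proposition~\ref{lemma:Dav_convergence}, since here it overcounts by a factor of $n$ and destroys the sharp $\order{1/n}$ scaling. The trick that makes everything work is recognizing that, because only the \emph{leftmost} projection is absent, the discrepancy is genuinely a single-step boundary effect; the identity $(PWP)^n=P(WP)^n$ is what formalizes this and reduces an $n$-fold product mismatch to one application of Lemma~\ref{lemma:aux_lemma_av}. Everything else is routine submultiplicativity together with the fact that unitaries and Hermitian projections are norm-one operators.
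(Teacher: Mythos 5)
Your proof is correct and follows essentially the same route as the paper: the identity $(PWP)^n=P(WP)^n$ reduces the difference to the single boundary term $(\mathbb{1}-P)(\rme^{-\rmi\frac{t}{n}H}-\mathbb{1})P(\rme^{-\rmi\frac{t}{n}H}P)^{n-1}$, which is then bounded by submultiplicativity and the estimate $\|\rme^{-\rmi\frac{t}{n}H}-\mathbb{1}\|_\infty\le\frac{t}{n}\|H\|_\infty$ from Lemma~\ref{lemma:aux_lemma_av}. This is exactly the paper's argument, including the observation that a telescope sum must be avoided here.
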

\begin{proof}
We first split as
\begin{align}
(\rme^{-\rmi\frac{t}{n}H}P)^n
&=
P(\rme^{-\rmi\frac{t}{n}H}P)^n
+
(\mathbb{1}-P)(\rme^{-\rmi\frac{t}{n}H}P)^n
\nonumber\\
&=
(P\rme^{-\rmi\frac{t}{n}H}P)^n
+
(\mathbb{1}-P)\rme^{-\rmi\frac{t}{n}H}P(\rme^{-\rmi\frac{t}{n}H}P)^{n-1}
\nonumber\\
&=
(P\rme^{-\rmi\frac{t}{n}H}P)^n
+
(\mathbb{1}-P)(\rme^{-\rmi\frac{t}{n}H}-\mathbb{1})P(\rme^{-\rmi\frac{t}{n}H}P)^{n-1}.
\end{align}
Therefore,
\begin{align}
\|
(\rme^{-\rmi\frac{t}{n}H}P)^n
-
(P\rme^{-\rmi\frac{t}{n}H}P)^n
\|_\infty
&=
\|
(\mathbb{1}-P)(\rme^{-\rmi\frac{t}{n}H}-\mathbb{1})P(\rme^{-\rmi\frac{t}{n}H}P)^{n-1}
\|_\infty
\vphantom{\frac{t}{n}}
\nonumber\\
&\le
\|
\rme^{-\rmi\frac{t}{n}H}-\mathbb{1}
\|_\infty
\vphantom{\frac{t}{n}}
\nonumber\\
&\le
\frac{t}{n}\|H\|_\infty,
\label{eqn:GoldenLemmaH}
\end{align}
where we have used Lemma~\ref{lemma:aux_lemma_av}.
\end{proof}
\noindent
Theorem~\ref{thm:convergence_average} in Sec.~\ref{sec:Av2} then follows by combining Proposition~\ref{lemma:Dav_convergence} and Lemma~\ref{lem:GoldenLemmaH} through the triangle inequality. 
\begin{proof}[Proof of Theorem~\ref{thm:convergence_average}]
Using Proposition~\ref{lemma:Dav_convergence} and Lemma~\ref{lem:GoldenLemmaH},
\begin{align}
\|(\rme^{-\rmi \frac{t}{n}H}P)^n-\rme^{-\rmi tPHP}P\|_\infty
&\le
\|(\rme^{-\rmi \frac{t}{n}H}P)^n-(P\rme^{-\rmi \frac{t}{n}H}P)^n\|_\infty
+
\|(P\rme^{-\rmi \frac{t}{n}H}P)^n-\rme^{-\rmi tPHP}P\|_\infty
\nonumber\\
&\le\frac{t}{n}\|H\|_\infty + \frac{t^2}{n}\|H\|_\infty^2.
\end{align}
\end{proof}
\begin{remark}
	The dynamics considered here starts with a projective measurement followed by a free evolution for a short time $t/n$. This procedure is repeated $n$ times. Alternatively, one could start with the free evolution and perform projective measurements $n$ times at regular time intervals $t/n$ during the evolution. Notice that this only makes a difference in the first and the last step. Our bound is still valid in this case, i.e.,
	\begin{equation}
		\|(P\rme^{-\rmi \frac{t}{n}H})^n-\rme^{-\rmi tPHP}P\|_\infty
		\le\frac{t}{n}\|H\|_\infty + \frac{t^2}{n}\|H\|_\infty^2.
	\end{equation}
	To see this, we only need to change Lemma~\ref{lem:GoldenLemmaH} accordingly.
\end{remark}

We now turn our attention to the average dynamical decoupling protocol, which is a variant of the quantum Zeno dynamics.

\subsection{Convergence of the Average Dynamical Decoupling}\label{sec:av}
As per~\eqref{eq:average_protocol}, the matrix representation of the average evolution under the random dynamical decoupling protocol takes the form
\begin{equation}
\hat{\mathcal{E}}_n^\mathrm{av}(t)
=(\hat{\mathcal{D}}\rme^{-\rmi\frac{t}{n}\hat{\mathcal{H}}}\hat{\mathcal{D}})^n,
\end{equation}
where $\hat{\mathcal{D}}=\frac{1}{d_1}\vecket{\mathbb{1}_1}\vecbra{\mathbb{1}_1}\otimes\hat{\mathbb{I}}_{22'}$ is a Hermitian projection. Therefore, Proposition~\ref{lemma:Dav_convergence} can be applied to obtain a bound on its convergence,
\begin{equation}
	\|
	(\hat{\mathcal{D}}\rme^{-\rmi \frac{t}{n}\hat{\mathcal{H}}}\hat{\mathcal{D}})^n
	-
	\rme^{-\rmi t\hat{\mathcal{D}} \hat{\mathcal{H}}\hat{\mathcal{D}}}\hat{\mathcal{D}}
	\|_\infty
	\leq
	\frac{t^2}{n}\|\hat{\mathcal{H}}\|_\infty^2
	.\label{eq:av_error}
\end{equation}
Now, we see why this procedure is called dynamical decoupling. In the Zeno Hamiltonian $\hat{\mathcal{H}}_Z=\hat{\mathcal{D}}\hat{\mathcal{H}}\hat{\mathcal{D}}$, all coupling terms in the original Hamiltonian $\hat{\mathcal{H}}$ are removed. 
Indeed, by inserting $\mathcal{H}(\bullet)=[H,{}\bullet{}]$ with the decomposition of the Hamiltonian $H$ in~\eqref{Hamilton_decomp}, we get
\begin{align}
	(\mathcal{D}\mathcal{H}\mathcal{D})(\rho)
	&=\frac{1}{d_1}\mathbb{1}_1\otimes\tr_1\!\left[
	H_1\otimes\mathbb{1}_2+\mathbb{1}_1\otimes H_2+H_{12},\frac{1}{d_1}\mathbb{1}_1\otimes\tr_1\rho
	\right]\nonumber\\
	&=\frac{1}{d_1}\mathbb{1}_1\otimes[H_2,\tr_1\rho]\nonumber\\
	&=[(\mathbb{I}_1\otimes\mathcal{H}_2)\mathcal{D}](\rho)\label{Decoupling_by_av},
\end{align}
where $\mathcal{H}_2(\bullet)=[H_2,{}\bullet{}]$.
Therefore, 
\begin{equation}
\hat{\mathcal{E}}_n^\mathrm{av}(t)=\rme^{-\rmi t(\hat{\mathbb{I}}_1\otimes\hat{\mathcal{H}}_2)}\hat{\mathcal{D}}+\order{1/n},\label{eq:convergence_av}
\end{equation}
where the error is bounded by~\eqref{eq:av_error}. Notice that the Zeno dynamics $\mathcal{U}_Z(t)=\rme^{-\rmi t(\mathbb{I}_1\otimes\mathcal{H}_2)}\mathcal{D}$ gives rise to a unitary evolution of system 2.

We now prove Propositions~\ref{prop:Av2D} and~\ref{prop:Av2P}, which provide bounds on the evolution of system 2 in the Zeno limit.
\begin{proof}[Proof of Proposition~\ref{prop:Av2D}]
The target evolution $\mathcal{U}_Z(t)=\rme^{-\rmi t(\mathbb{I}_1\otimes\mathcal{H}_2)}\mathcal{D}$ is unitary on $\mathscr{H}_2$, and the Choi-Jamio\l{}kowski state of the reduced evolution of $\mathcal{U}_Z(t)$ for system 2 is pure, given by $\frac{1}{d_2}|\rme^{-\rmi tH_2})(\rme^{-\rmi tH_2}|$ according to~\eqref{eq:vectorization} and~\eqref{eq:Choi-state}. By definition, we have
\begin{align}
\left\|
\Lambda^\mathrm{av}_{2,\sigma_1}(n)
-
\frac{1}{d_2}|\rme^{-\rmi tH_2})(\rme^{-\rmi tH_2}|
\right\|_2
&=
\left\|
\tr_1\!\left[
\left(
[
\mathcal{E}^\mathrm{av}_n(t)
-
\rme^{-\rmi t(\mathbb{I}_1\otimes\mathcal{H}_2)}\mathcal{D}
]_{12}
\otimes
\mathbb{I}_{2'}
\right)\!
\left(
\sigma_1\otimes\frac{1}{d_2}|\mathbb{1}_2)_{22'}(\mathbb{1}_2|
\right)
\right]
\right\|_2
\nonumber\\
&\le
\sup_{\|A_{22'}\|_2=1}
\left\|
\tr_1\!\left[
\left(
[
\mathcal{E}^\mathrm{av}_n(t)
-
\rme^{-\rmi t(\mathbb{I}\otimes\mathcal{H}_2)}\mathcal{D}
]_{12}
\otimes
\mathbb{I}_{2'}
\right)\!
(
\sigma_1\otimes A_{22'}
)
\right]
\right\|_2,
\intertext{where in the second line the projection onto the maximally entangled state has been replaced by a supremum over all operators with norm 1. Now, by using Lemma~\ref{lemma:reduced_map} in Appendix~\ref{appendix:lemmas}, it is further bounded by}
&\le
\sqrt{d_1}\,
\|\sigma_1\|_2
\|
[
\hat{\mathcal{E}}^\mathrm{av}_n(t)
-
\rme^{-\rmi t(\hat{\mathbb{I}}_1\otimes\hat{\mathcal{H}}_2)}\hat{\mathcal{D}}
]_{12}
\otimes
\hat{\mathbb{I}}_{2'}
\|_\infty
\nonumber\\
&=
\sqrt{d_1}\,
\|\sigma_1\|_2
\|
\hat{\mathcal{E}}^\mathrm{av}_n(t)
-
\rme^{-\rmi t(\hat{\mathbb{I}}_1\otimes\hat{\mathcal{H}}_2)}\hat{\mathcal{D}}
\|_\infty
\nonumber\\
&=
\sqrt{d_1}\,
\|\sigma_1\|_2
\|
(\hat{\mathcal{D}}\rme^{-\rmi\frac{t}{n}\hat{\mathcal{H}}}\hat{\mathcal{D}})^n
-
\rme^{-\rmi t\hat{\mathcal{D}}\hat{\mathcal{H}}\hat{\mathcal{D}}}\hat{\mathcal{D}}
\|_\infty.
\end{align}
Then, the inequality~\eqref{eq:av_error} gives~\eqref{eq:DU2av} of Proposition~\ref{prop:Av2D}. The bound on the diamond distance in~\eqref{eq:DU2av_diamond} follows by using Lemma~\ref{lemma:diamond_choi} in Appendix~\ref{appendix:lemmas} together with the norm equivalence between the trace norm and the Frobenius norm in~\eqref{eq:equivalence_tr_fro}. 
\end{proof}

\begin{proof}[Proof of Proposition~\ref{prop:Av2P}]
By using Lemma~\ref{lemma:pur_ev} in Appendix~\ref{appendix:lemmas} and recalling the fact that $\infnorm{\Lambda}$ gives the largest singular value of $\Lambda$, we proceed as
\begin{align}
\sqrt{\mathtt{P}\bigl(\Lambda^\mathrm{av}_{2,\sigma_1}(n)\bigr)}
&\ge
\|\Lambda^\mathrm{av}_{2,\sigma_1}(n)\|_\infty
\nonumber\\
&\ge
\frac{1}{d_2}
(\rme^{-\rmi tH_2}|\Lambda^\mathrm{av}_{2,\sigma_1}(n)|\rme^{-\rmi tH_2}).
\nonumber\\
&=
1-
\frac{1}{d_2}
(\rme^{-\rmi tH_2}|
\left(
\frac{1}{d_2}|\rme^{-\rmi tH_2})(\rme^{-\rmi tH_2}|-\Lambda^\mathrm{av}_{2,\sigma_1}(n)
\right)
|\rme^{-\rmi tH_2})
\nonumber\\
&\ge
1-
\left\|
\frac{1}{d_2}|\rme^{-\rmi tH_2})(\rme^{-\rmi tH_2}|-\Lambda^\mathrm{av}_{2,\sigma_1}(n)
\right\|_\infty
\nonumber\\
&\ge
1-
\left\|
\frac{1}{d_2}|\rme^{-\rmi tH_2})(\rme^{-\rmi tH_2}|-\Lambda^\mathrm{av}_{2,\sigma_1}(n)
\right\|_2,
\nonumber\\
&\ge1-\frac{1}{n}\sqrt{d_1}\,\|\sigma_1\|_2T^2,
\end{align}
where we have used the norm equivalence~\eqref{eqn:NormEquivalence} for the second last inequality and Proposition~\ref{prop:Av2D} for the last inequality.
Proposition~\ref{prop:Av2P} is thus proved. 
\end{proof}

These results on the average dynamical decoupling evolution will help us to bound the efficiency of the random \emph{trajectory} dynamical decoupling evolution, by using the fact that the average evolution is a convex combination of different trajectory evolutions. Before turning to the trajectory case, let us briefly discuss the reduced Choi-Jami\l{}kowski states. They are the key players in the bounds presented in Sec.~\ref{sec:results}, and there are some subtleties to consider.

\section{Reduced Choi-Jamio\l{}kowski States}\label{sec:Choi}
For the sake of clarity, we introduce some notation first. Let $\mathcal{T}:\mathscr{H}\rightarrow\mathscr{H}$ be a CPTP map on a bipartite quantum system $\mathscr{H}=\mathscr{H}_1\otimes \mathscr{H}_2$. On one hand, the reduced maps $\mathcal{T}_{1,\sigma_2}:\mathscr{H}_1\rightarrow\mathscr{H}_1$ and $\mathcal{T}_{2,\sigma_1}:\mathscr{H}_2\rightarrow\mathscr{H}_2$ are given by
\begin{align}
\mathcal{T}_{1,\sigma_2}(\rho_1)&=\tr_2[\mathcal{T}(\rho_1\otimes\sigma_2)],\\
\mathcal{T}_{2,\sigma_1}(\rho_2)&=\tr_1[\mathcal{T}(\sigma_1\otimes\rho_2)],
\end{align}
and let $\Lambda_{1,\sigma_2}$ and $\Lambda_{2,\sigma_1}$ denote their corresponding Choi-Jamio\l{}kowski states. On the other hand, the Choi-Jamio\l{}kowski state $\Lambda$ of $\mathcal{T}$ lives on an enlarged Hilbert space $(\mathscr{H}_1\otimes\mathscr{H}_{1'})\otimes(\mathscr{H}_2\otimes\mathscr{H}_{2'})$, and we consider its reduced states
\begin{align}
\Lambda_{1}&=\tr_{22'}\Lambda,\\
\Lambda_{2}&=\tr_{11'}\Lambda.
\end{align}
These reduced Choi-Jamio\l{}kowski states $\Lambda_1$ and $\Lambda_2$ are related to the Choi-Jamio\l{}kowski states $\Lambda_{1,\sigma_2}$ and $\Lambda_{2,\sigma_1}$ of the reduced maps by the following lemma.
\begin{lemma}\label{LemmaReducedCJ}
	For the reduced Choi-Jamio\l{}kowski states of a CPTP map $\mathcal{T}:\mathscr{H}\rightarrow\mathscr{H}$ on a bipartite Hilbert space $\mathscr{H}=\mathscr{H}_1\otimes \mathscr{H}_2$, we have
\begin{align}
	\Lambda_1&=\Lambda_{1,\sigma_2}\quad\mathrm{with}\quad\sigma_2=\frac{1}{d_2}\mathbb{1}_2,\label{eq:reduced_choi_cond1}\\
	\Lambda_2&=\Lambda_{2,\sigma_1}\quad\mathrm{with}\quad\sigma_1=\frac{1}{d_1}\mathbb{1}_1.\label{eq:reduced_choi_cond2}
\end{align}
\end{lemma}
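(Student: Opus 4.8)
The plan is to prove the second identity \eqref{eq:reduced_choi_cond2}, namely $\Lambda_2 = \Lambda_{2,\sigma_1}$ with $\sigma_1 = \mathbb{1}_1/d_1$, directly from the Choi definition \eqref{eq:Choi-state}; the first identity \eqref{eq:reduced_choi_cond1} then follows verbatim by interchanging the roles of systems $1$ and $2$. The starting point is the observation that the unnormalized maximally entangled state on $\mathscr{H}\otimes\mathscr{H}'$ factorizes across the bipartition: reordering the tensor factors from $(\mathscr{H}_1\otimes\mathscr{H}_2)\otimes(\mathscr{H}_{1'}\otimes\mathscr{H}_{2'})$ into $(\mathscr{H}_1\otimes\mathscr{H}_{1'})\otimes(\mathscr{H}_2\otimes\mathscr{H}_{2'})$ gives $|\mathbb{1}_{12})=|\mathbb{1}_1)\otimes|\mathbb{1}_2)$, hence $\frac{1}{d}|\mathbb{1}_{12})(\mathbb{1}_{12}|=\frac{1}{d_1}|\mathbb{1}_1)(\mathbb{1}_1|\otimes\frac{1}{d_2}|\mathbb{1}_2)(\mathbb{1}_2|$. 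I denote this input state by $\Omega$.

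First I would trace out system $1'$. Since $\mathcal{T}$ acts only on $\mathscr{H}_1\otimes\mathscr{H}_2$ and the identity channel $\mathbb{I}_{1'}$ is trace-preserving, the partial trace $\tr_{1'}$ commutes through the map, so that $\tr_{1'}\Lambda=(\mathcal{T}\otimes\mathbb{I}_{2'})(\tr_{1'}\Omega)$. Using $\tr_{1'}|\mathbb{1}_1)(\mathbb{1}_1|=\mathbb{1}_1$ collapses the first factor to the identity, leaving $\tr_{1'}\Omega=\sigma_1\otimes\frac{1}{d_2}|\mathbb{1}_2)(\mathbb{1}_2|$ with $\sigma_1=\mathbb{1}_1/d_1$ the maximally mixed state. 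Tracing out system $1$ then yields $\Lambda_2=\tr_1[(\mathcal{T}\otimes\mathbb{I}_{2'})(\sigma_1\otimes\frac{1}{d_2}|\mathbb{1}_2)(\mathbb{1}_2|)]$.

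It remains to identify the right-hand side with $\Lambda_{2,\sigma_1}$. The key elementary identity I would verify is that, for any operator $\rho_{22'}$ on $\mathscr{H}_2\otimes\mathscr{H}_{2'}$, one has $(\mathcal{T}_{2,\sigma_1}\otimes\mathbb{I}_{2'})(\rho_{22'})=\tr_1[(\mathcal{T}\otimes\mathbb{I}_{2'})(\sigma_1\otimes\rho_{22'})]$, which holds because $\mathcal{T}$ leaves $\mathscr{H}_{2'}$ untouched, so the trace over system $1$ and the spectator factor $2'$ decouple through the definition $\mathcal{T}_{2,\sigma_1}(\rho_2)=\tr_1[\mathcal{T}(\sigma_1\otimes\rho_2)]$. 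Applying this with $\rho_{22'}=\frac{1}{d_2}|\mathbb{1}_2)(\mathbb{1}_2|$ and comparing with the Choi definition \eqref{eq:Choi-state} of $\Lambda_{2,\sigma_1}$ gives the claim.

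The only genuine obstacle is bookkeeping rather than conceptual: one must be scrupulous about the tensor-factor ordering when passing between the native grouping $(\mathscr{H}_1\otimes\mathscr{H}_2)\otimes(\mathscr{H}_{1'}\otimes\mathscr{H}_{2'})$ on which $\mathcal{T}\otimes\mathbb{I}$ is defined and the grouping $(\mathscr{H}_1\otimes\mathscr{H}_{1'})\otimes(\mathscr{H}_2\otimes\mathscr{H}_{2'})$ in which the Choi state and its reductions are naturally expressed, and one must justify the two commutations used above: that $\tr_{1'}$ slides past $\mathcal{T}\otimes\mathbb{I}$, and that the reduced map tensored with the identity equals prepending $\sigma_1$, applying $\mathcal{T}\otimes\mathbb{I}$, and tracing out system $1$. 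Both follow from the single fact that $\mathcal{T}$ acts trivially on the primed systems and that partial traces commute with maps supported on disjoint tensor factors.
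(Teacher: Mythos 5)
Your proof is correct. It reaches the same essential mechanism as the paper's proof --- namely that tracing the maximally entangled input over one primed factor leaves the maximally mixed state on the corresponding unprimed factor, $\tr_{1'}|\mathbb{1}_1)(\mathbb{1}_1|=\mathbb{1}_1$ --- but your execution is genuinely different in style. The paper fixes a Kraus representation $\mathcal{T}(\rho)=\sum_kE_k\rho E_k^\dag$, expands both $\Lambda_{1,\sigma_2}$ and $\tr_{22'}\Lambda$ in explicit product bases, and compares the resulting sums term by term. You instead work coordinate-free: you factorize $\frac{1}{d}|\mathbb{1}_{12})(\mathbb{1}_{12}|=\frac{1}{d_1}|\mathbb{1}_1)(\mathbb{1}_1|\otimes\frac{1}{d_2}|\mathbb{1}_2)(\mathbb{1}_2|$ after regrouping tensor factors, slide the partial trace over the spectator system $1'$ past $\mathcal{T}\otimes\mathbb{I}$, and then identify the result with the Choi state of the reduced map via the intertwining identity $(\mathcal{T}_{2,\sigma_1}\otimes\mathbb{I}_{2'})(\rho_{22'})=\tr_1[(\mathcal{T}\otimes\mathbb{I}_{2'})(\sigma_1\otimes\rho_{22'})]$, which you correctly reduce to the product-operator case by linearity. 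What your route buys is that it never invokes a Kraus decomposition and in fact applies verbatim to any linear map acting only on $\mathscr{H}_1\otimes\mathscr{H}_2$, not just CPTP ones; what the paper's basis computation buys is that every step is a finite sum one can check by inspection, with no tensor-reordering bookkeeping to police. Both commutation facts you flag as the ``only genuine obstacle'' are indeed the standard ones and hold exactly as you state.
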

\begin{proof}
		Let $\{\ket{i}\}_{i=1,\ldots,d_1}$ and $\{\ket{i'}\}_{i'=1,\ldots,d_1}$ be orthonormal bases of $\mathscr{H}_1$ and $\mathscr{H}_{1'}$, respectively. Analogously, $\{\ket{j}\}_{j=1,\ldots,d_2}$ and $\{\ket{j'}\}_{j'=1,\ldots,d_2}$ for $\mathscr{H}_2$ and $\mathscr{H}_{2'}$, respectively.
		Let $\mathcal{T}(\rho)=\sum_kE_k\rho E_k^\dag$ be a Kraus representation of the CPTP map $\mathcal{T}$.
		Then, on one hand, the reduced CPTP map $\mathcal{T}_{1,\sigma_2}$, which only acts on $\mathscr{H}_1$, is given by
		\begin{equation}
			\mathcal{T}_{1,\sigma_2}(\rho_1)=\sum_k \tr_2[E_k(\rho_1\otimes\sigma_2)E_k^\dag],
		\end{equation}
		and its Choi-Jamio\l{}kowski state reads
		\begin{equation}
			\Lambda_{1,\sigma_2}=\frac{1}{d_1}\sum_{i,i'}\sum_k \tr_2[E_k(\ket{i}\bra{i'}\otimes\sigma_2)E_k^\dag]\otimes\ket{i}\bra{i'}.\label{eq:CPTP_reduced_Choi}
		\end{equation}
		On the other hand, the reduced Choi-Jamio\l{}kowski state $\Lambda_1$ takes the form
		\begin{align}
				\Lambda_{1}&=\frac{1}{d_1d_2}\sum_{i,i',j,j'}\sum_{k} \tr_{22'}[E_k(\ket{i}\bra{i'}\otimes\ket{j}\bra{j'})E_k^\dag\otimes(\ket{i}\bra{i'}\otimes\ket{j}\bra{j'})]\nonumber\\
				&=\frac{1}{d_1d_2}\sum_{i,i',j}\sum_k \tr_2[E_k(\ket{i}\bra{i'}\otimes\ket{j}\bra{j})E_k^\dagger]\otimes\ket{i}\bra{i'}\nonumber\\
				&=\frac{1}{d_1}\sum_{i,i'}\sum_k \tr_2\!\left[
				E_k\left(\ket{i}\bra{i'}\otimes\frac{1}{d_2}\mathbb{1}_2\right)E_k^\dagger
				\right]
				\otimes\ket{i}\bra{i'}.
				\label{eq:reduced_CPTP}
		\end{align}
		Comparing~\eqref{eq:CPTP_reduced_Choi} and~\eqref{eq:reduced_CPTP}, we see that the relation~\eqref{eq:reduced_choi_cond1} holds.
		The relation~\eqref{eq:reduced_choi_cond2} is also confirmed completely in the same way.
\end{proof}
\noindent
Lemma~\ref{LemmaReducedCJ} shows that the reduced Choi-Jamio\l{}kowsi state $\Lambda_{1(2)}$ is the Choi-Jamio\l{}kowski state $\Lambda_{1(2),\sigma_{2(1)}}$ of the reduced map  of system 1(2) when system 2(1) starts in the maximally mixed state $\sigma_{2(1)}=\mathbb{1}_{2(1)}/d_{2(1)}$.
In order to bridge between $\Lambda_{1(2)}$ and $\Lambda_{1(2),\sigma_{2(1)}}$ for a general initial state $\sigma_{2(1)}$, we will use the following two lemmas.
\begin{lemma}\label{lemma:max_mixed}
Let $\mathscr{H}$ be a Hilbert space of dimension $d$ and let $\sigma$ be a density operator on $\mathscr{H}$ with spectral decomposition $\sigma=\sum_i s_i\ket{i}\bra{i}$. Then, the maximally mixed state $\mathbb{1}/d$ on $\mathscr{H}$ can be decomposed as
\begin{equation}
\frac{1}{d}\mathbb{1}
=\frac{1}{d\|\sigma\|_\infty}\sigma+\left(1-\frac{1}{d\|\sigma\|_\infty}\right)\omega,
\end{equation}
with another density operator
\begin{equation}
\omega=\frac{1}{d-1/\|\sigma\|_\infty}
\sum_i\left(1-\frac{s_i}{\|\sigma\|_\infty}\right)\ket{i}\bra{i}.
\label{eqn:omega}
\end{equation}
\end{lemma}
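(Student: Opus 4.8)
The plan is to work entirely in the eigenbasis $\{\ket{i}\}$ of $\sigma$, in which $\sigma$, the identity $\mathbb{1}$, and the proposed $\omega$ are all simultaneously diagonal. Abbreviating $s_{\max}=\|\sigma\|_\infty=\max_i s_i$, the claimed operator identity then reduces to $d$ independent scalar identities among diagonal entries, so I would simply take $\omega$ to be the operator defined in~\eqref{eqn:omega} and verify that, for each $i$, the $i$-th diagonal entry of $\frac{1}{d\|\sigma\|_\infty}\sigma+(1-\frac{1}{d\|\sigma\|_\infty})\omega$ equals $1/d$. Using $d-1/s_{\max}=(ds_{\max}-1)/s_{\max}$, the weight $(1-\frac{1}{ds_{\max}})=\frac{ds_{\max}-1}{ds_{\max}}$ multiplied by $\omega_{ii}=\frac{s_{\max}}{ds_{\max}-1}(1-s_i/s_{\max})$ collapses to $\frac{1}{d}(1-s_i/s_{\max})=\frac{1}{d}-\frac{s_i}{ds_{\max}}$, and adding back $\frac{1}{d\|\sigma\|_\infty}s_i=\frac{s_i}{ds_{\max}}$ returns exactly $1/d$. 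This is a one-line cancellation once the common factor $ds_{\max}-1$ is exposed.

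The substantive content is the bookkeeping that shows $\omega$ is a legitimate density operator and that the two scalar coefficients form a genuine convex combination; here the one fact I would invoke is the elementary bound $\|\sigma\|_\infty\ge 1/d$, which holds because the largest eigenvalue of a unit-trace positive operator is at least the average $\frac{1}{d}\sum_i s_i=1/d$. This immediately gives $0<\frac{1}{d\|\sigma\|_\infty}\le 1$, so both weights lie in $[0,1]$. For positivity of $\omega$ I would note that each diagonal entry is nonnegative, since the numerator satisfies $1-s_i/s_{\max}\ge 0$ (as $s_i\le s_{\max}$) and the denominator $d-1/s_{\max}\ge 0$ by the same bound. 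Normalization $\tr\omega=1$ then follows from $\sum_i s_i=1$, because $\sum_i(1-s_i/s_{\max})=d-1/s_{\max}$ cancels the prefactor precisely.

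The only point requiring a word of care, and the closest thing to an obstacle, is the degenerate case $\|\sigma\|_\infty=1/d$, i.e.\ $\sigma=\mathbb{1}/d$ already maximally mixed. Then $d-1/s_{\max}=0$ and the formula~\eqref{eqn:omega} for $\omega$ is a $0/0$ expression; but simultaneously the weight $1-\frac{1}{d\|\sigma\|_\infty}$ vanishes, so $\omega$ drops out of the decomposition, which holds trivially as $\frac{1}{d}\mathbb{1}=\sigma$. I would therefore state the lemma for $\|\sigma\|_\infty>1/d$, so that $\omega$ is well-defined, and dispatch the boundary case $\|\sigma\|_\infty=1/d$ in a single sentence. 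Apart from this, the argument is pure substitution and needs no machinery beyond the average-eigenvalue bound.
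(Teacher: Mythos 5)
Your proof is correct and follows essentially the same route as the paper's: both work in the eigenbasis of $\sigma$ and reduce the operator identity to a diagonal-entry computation, the only cosmetic difference being that you verify the given $\omega$ by substitution while the paper derives it by choosing the largest admissible weight $p=\frac{1}{d\|\sigma\|_\infty}$. Your explicit treatment of the degenerate case $\|\sigma\|_\infty=1/d$, where \eqref{eqn:omega} is formally $0/0$ but the decomposition holds trivially since the coefficient of $\omega$ vanishes, is a small point of care that the paper's proof passes over silently.
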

\begin{proof}
The orthonormalized vectors $\{\ket{i}\}_{i=1,\ldots,d}$ in the spectral decomposition of $\sigma$
form a basis of $\mathscr{H}$. 
The maximally mixed state $\mathbb{1}/d$ can be expanded in this basis as
\begin{equation}
\frac{1}{d}\mathbb{1}
=\frac{1}{d}\sum_i\ket{i}\bra{i}.
\end{equation}
In order to find a convex decomposition of the maximally mixed state $\mathbb{1}/d$ into the state $\sigma$ and another state $\omega$, let us rewrite the expansion of the maximally mixed state $\mathbb{1}/d$ as
\begin{equation}
\frac{1}{d}\mathbb{1}
=p\sigma+\sum_i\left(\frac{1}{d}-ps_i\right)\ket{i}\bra{i}.
\end{equation}
In order for this to be a convex decomposition, we should have $1/d-ps_i\ge0$, $\forall i$.
That is, we need
\begin{equation}
p\le\frac{1}{ds_i},\quad\forall i.
\end{equation}
Therefore, let us choose
\begin{equation}
p=\frac{1}{d\|\sigma\|_\infty}.
\end{equation}
Then,
\begin{align}
\frac{1}{d}\mathbb{1}
&=\frac{1}{d\|\sigma\|_\infty}\sigma+\frac{1}{d}\sum_i\left(1-\frac{s_i}{\|\sigma\|_\infty}\right)\ket{i}\bra{i}
\nonumber\\
&=\frac{1}{d\|\sigma\|_\infty}\sigma+\left(1-\frac{1}{d\|\sigma\|_\infty}\right)\omega,
\end{align}
with $\omega$ defined in~\eqref{eqn:omega}.
\end{proof}
\begin{lemma}\label{lemma:Choi}
Let $\mathscr{H}_1$ and $\mathscr{H}_{1'}$ be Hilbert spaces of dimension $d_1$, and let $\mathscr{H}_2$ and $\mathscr{H}_{2'}$ be Hilbert spaces of dimension $d_2$. Let $|v)$ denote a normalized vector in the doubled Hilbert space $\mathscr{H}_1\otimes\mathscr{H}_{1'}$, and analogously $|w)$ a normalized vector in $\mathscr{H}_2\otimes\mathscr{H}_{2'}$. Let $\mathcal{T}$ be a CPTP map acting on operators on the bipartite Hilbert space $\mathscr{H}=\mathscr{H}_1\otimes \mathscr{H}_2$. Then, for its reduced Choi-Jamio\l{}kowski states, we have
\begin{align}
(v|\Lambda_{1,\sigma_2}|v)
&\ge
1-d_2\|\sigma_2\|_\infty[1-(v|\Lambda_1|v)],\\
(w|\Lambda_{2,\sigma_1}|w)
&\ge
1-d_1\|\sigma_1\|_\infty[1-(w|\Lambda_2|w)],
\end{align}
and
\begin{align}
\|\Lambda_{1,\sigma_2}\|_\infty
&\ge
1-d_2\|\sigma_2\|_\infty(1-\|\Lambda_1\|_\infty),\\
\|\Lambda_{2,\sigma_1}\|_\infty
&\ge
1-d_1\|\sigma_1\|_\infty(1-\|\Lambda_2\|_\infty).
\end{align}
\end{lemma}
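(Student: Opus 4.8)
The plan is to exploit the \emph{linearity} of the reduced Choi-Jamio\l{}kowski state in the fixed input state of the traced-out subsystem, and to combine it with the convex decomposition of the maximally mixed state supplied by Lemma~\ref{lemma:max_mixed}. I would prove the bounds for system~1; the statements for system~2 follow by interchanging the roles of the two subsystems. First I would observe that the reduced map $\mathcal{T}_{1,\sigma_2}(\rho_1)=\tr_2[\mathcal{T}(\rho_1\otimes\sigma_2)]$ is linear in $\sigma_2$, since $\mathcal{T}$, the tensor product, and the partial trace are all linear; consequently its Choi-Jamio\l{}kowski state $\Lambda_{1,\sigma_2}$ is linear in $\sigma_2$ as well.

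Next I would apply Lemma~\ref{lemma:max_mixed} with $\sigma=\sigma_2$ and $d=d_2$ to write the maximally mixed state as the convex combination $\mathbb{1}_2/d_2=p\,\sigma_2+(1-p)\,\omega_2$, with $p=1/(d_2\|\sigma_2\|_\infty)\in(0,1]$ and $\omega_2$ the density operator from~\eqref{eqn:omega}. Recalling from Lemma~\ref{LemmaReducedCJ} that $\Lambda_1=\Lambda_{1,\mathbb{1}_2/d_2}$, the linearity just established yields the decomposition
\[
\Lambda_1=p\,\Lambda_{1,\sigma_2}+(1-p)\,\Lambda_{1,\omega_2}.
\]
The crucial point is that $\Lambda_{1,\omega_2}$ is again a genuine Choi-Jamio\l{}kowski state of a CPTP map, hence positive semidefinite with unit trace, so that $0\le(v|\Lambda_{1,\omega_2}|v)\le\|\Lambda_{1,\omega_2}\|_\infty\le1$ for every normalized vector $|v)$.

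From here the two inequalities are elementary algebra. For the expectation bound I sandwich the displayed identity between $(v|$ and $|v)$ and bound the nonnegative term $(1-p)(v|\Lambda_{1,\omega_2}|v)$ above by $1-p$, giving $(v|\Lambda_1|v)\le p\,(v|\Lambda_{1,\sigma_2}|v)+(1-p)$; solving for $(v|\Lambda_{1,\sigma_2}|v)$ and substituting $1/p=d_2\|\sigma_2\|_\infty$ produces exactly $(v|\Lambda_{1,\sigma_2}|v)\ge1-d_2\|\sigma_2\|_\infty[1-(v|\Lambda_1|v)]$. For the operator-norm bound I would instead apply the triangle inequality $\|\Lambda_1\|_\infty\le p\|\Lambda_{1,\sigma_2}\|_\infty+(1-p)\|\Lambda_{1,\omega_2}\|_\infty\le p\|\Lambda_{1,\sigma_2}\|_\infty+(1-p)$ and rearrange in the same way.

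I do not anticipate a serious obstacle, since once linearity and the convex decomposition are in place everything reduces to rearranging a single inequality. The only points needing a little care are verifying that $\Lambda_{1,\omega_2}$ is a bona fide normalized positive Choi-Jamio\l{}kowski state so that its expectation values and operator norm are bounded by $1$, and checking that $p\le1$ (equivalently $\|\sigma_2\|_\infty\ge1/d_2$) so that dividing by the positive factor $p$ preserves the direction of the inequalities.
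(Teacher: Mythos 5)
Your proposal is correct and follows essentially the same route as the paper: the convex decomposition of $\mathbb{1}_2/d_2$ from Lemma~\ref{lemma:max_mixed}, linearity of $\Lambda_{1,\sigma_2}$ in $\sigma_2$ together with Lemma~\ref{LemmaReducedCJ}, bounding $(v|\Lambda_{1,\omega_2}|v)\le1$, and rearranging. The only cosmetic difference is in the operator-norm bound, where you apply the triangle inequality to the decomposition while the paper simply evaluates the already-proved expectation inequality at the top eigenvector of $\Lambda_1$; both yield the identical bound.
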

\begin{proof}
We focus on the proof for $\Lambda_{1,\sigma_2}$ on system 1. 
The relations for $\Lambda_{2,\sigma_1}$ on system 2 can be proved in the same way.
By using the convex decomposition of the maximally mixed state from Lemma~\ref{lemma:max_mixed},
\begin{equation}
\frac{1}{d_2}\mathbb{1}_2
=\frac{1}{d_2\|\sigma_2\|_\infty}\sigma_2+\left(1-\frac{1}{d_2\|\sigma_2\|_\infty}\right)\omega_2,
\end{equation}
we have, by linearity and Lemma~\ref{LemmaReducedCJ},
\begin{align}
(v|\Lambda_1|v)
&=\frac{1}{d_2\|\sigma_2\|_\infty}(v|\Lambda_{1,\sigma_2}|v)
+\left(1-\frac{1}{d_2\|\sigma_2\|_\infty}\right)(v|\Lambda_{1,\omega_2}|v)
\nonumber\\
&\le\frac{1}{d_2\|\sigma_2\|_\infty}(v|\Lambda_{1,\sigma_2}|v)
+1-\frac{1}{d_2\|\sigma_2\|_\infty},
\end{align}
and hence,
\begin{equation}
(v|\Lambda_{1,\sigma_2}|v)
\ge
1-d_2\|\sigma_2\|_\infty[1-(v|\Lambda_1|v)],
\end{equation}
for any normalized vector $|v)$.
By choosing $|v)$ to be the normalized eigenvector $|v_\mathrm{max})$ of $\Lambda_1$ belonging to its largest eigenvalue, we get
\begin{align}
\|\Lambda_{1,\sigma_2}\|_\infty
&\ge(v_\mathrm{max}|\Lambda_{1,\sigma_2}|v_\mathrm{max})
\nonumber\\
&\ge
1-d_2\|\sigma_2\|_\infty[1-(v_\mathrm{max}|\Lambda_1|v_\mathrm{max})]
\nonumber\\
&=
1-d_2\|\sigma_2\|_\infty(1-\|\Lambda_1\|_\infty).
\end{align}
\end{proof}

\begin{remark}\label{remark:reduced_pur}
	Lemma~\ref{lemma:Choi} can alternatively be stated in terms of the purities of the Choi-Jamio\l{}kowski states as
	\begin{align}
		\sqrt{\pur(\Lambda_{1,\sigma_2})}&\geq 1-d_2\infnorm{\sigma_2}[1-\pur(\Lambda_1)],\label{eq:reduced_pur_1}\\
		\sqrt{\pur(\Lambda_{2,\sigma_1})}&\geq 1-d_1\infnorm{\sigma_1}[1-\pur(\Lambda_2)].
	\end{align}
	This can easily be verified by using Lemma~\ref{lemma:pur_ev} in Appendix~\ref{appendix:lemmas}, in particular, $\sqrt{\pur(\Lambda_{1,\sigma_2})}\geq\|\Lambda_{1,\sigma_2}\|_\infty$ and $\|\Lambda_1\|_\infty\geq\pur(\Lambda_1)$ (and analogously for $\Lambda_{2,\sigma_1}$).
\end{remark}
With this knowledge about the reduced Choi-Jamio\l{}kowsi states at hand, we are now ready to prove the statements about the random trajectory dynamical decoupling given in Sec.~\ref{sec:results}.

\section{Convergence of the Trajectory Dynamical Decoupling}\label{sec:conergence_tr}
This section provides the proofs of the theorems presented in Secs.~\ref{sec:Tr2} and~\ref{sec:Tr1}. We start by proving the statements about the convergence of the random trajectory evolution on system 2 under the dynamical decoupling pulses, i.e.~Theorems~\ref{thm:Tr2P} and~\ref{thm:Tr2D} presented in Sec.~\ref{sec:Tr2}. We also provide a proof of Proposition~\ref{prop:Tr2Tail}, which gives a bound on the probability of trajectory evolutions on system 2 being far from the unitary $\rme^{-\rmi tH_2}$. Afterwards, we turn our attention to the proofs of the statements about the convergence of the random trajectory evolution on system 1, presented in Sec.~\ref{sec:Tr1}. That is, we prove Theorems~\ref{thm:Tr1P} and~\ref{thm:Tr1D}. 
Finally, we prove Proposition~\ref{thm:dist_unitary_purity}. The corollaries presented in Sec.~\ref{sec:results} are then direct consequences of the theorems and the proposition proved here. Corollary~\ref{cor:Tr2V} follows from Theorems~\ref{thm:Tr2P} and~\ref{thm:Tr2D} using the Bhatia-Davis inequality $\Var[X]\leq(\max[X]-\E[X])(\E[X]-\min[X])$~\cite{BD2000}. In the same way, Corollary~\ref{cor:Tr1V} follows from Theorems~\ref{thm:Tr1P} and~\ref{thm:Tr1D}. In addition, combining Theorem~\ref{thm:Tr1P} with Proposition~\ref{thm:dist_unitary_purity} leads to Corollary~\ref{cor:Tr1D}.

\subsection{Convergence of the Trajectory Evolution of System 2}\label{sec:proofTr2}
Theorem~\ref{thm:Tr2P} gives a bound on the expected purity, the expected operator norm, and the expected fidelity to the unitary $\rme^{-\rmi tH_2}$ of the reduced Choi-Jamio\l{}kowski state of the trajectory evolution on system 2.
\begin{proof}[Proof of Theorem~\ref{thm:Tr2P}]
First, notice that according to Lemma~\ref{lemma:pur_ev} in Appendix~\ref{appendix:lemmas} we have
\begin{equation}
\sqrt{\mathtt{P}\bigl(\Lambda_{2,\sigma_1}^{(j)}(n)\bigr)}
\ge
\|\Lambda_{2,\sigma_1}^{(j)}(n)\|_\infty
\ge\frac{1}{d_2}(\rme^{-\rmi tH_2}|\Lambda_{2,\sigma_1}^{(j)}(n)|\rme^{-\rmi tH_2}).\label{eq:pur_opnorm_fid}
\end{equation}
By Jensen's inequality $\E[f({}\cdots{})]\le f(\E[{}\cdots{}])$ for concave functions $f(x)$, we can bound the expected purity, the expected operator norm, and the expected fidelity of the reduced Choi-Jamio\l{}kowski state of the trajectory evolution on system 2 as
\begin{align}
\sqrt{
\mathbb{E}\!\left[
\mathtt{P}\bigl(\Lambda_{2,\sigma_1}^{(j)}(n)\bigr)
\right]
}
&\ge
\mathbb{E}\!\left[
\sqrt{\mathtt{P}\bigl(\Lambda_{2,\sigma_1}^{(j)}(n)\bigr)}
\right]
\nonumber\\
&\ge
\mathbb{E}\Bigl[\|\Lambda_{2,\sigma_1}^{(j)}(n)\|_\infty\Bigr]
\nonumber\\
&\ge\mathbb{E}\!\left[
\frac{1}{d_2}(\rme^{-\rmi tH_2}|\Lambda_{2,\sigma_1}^{(j)}(n)|\rme^{-\rmi tH_2})
\right]
\nonumber\\
&=
\frac{1}{d_2}(\rme^{-\rmi tH_2}|\Lambda^\mathrm{av}_{2,\sigma_1}(n)|\rme^{-\rmi tH_2})
\nonumber\\
&\ge1-\frac{1}{n}\sqrt{d_1}\,\|\sigma_1\|_2T^2,
\end{align}
where we have used the linearity of the expectation value for the equality and the last inequality follows from Proposition~\ref{prop:Av2P}.
\end{proof}
\noindent
Theorem~\ref{thm:Tr2P} can be used to bound the expected distance of the Choi-Jamio\l{}kowski state of system 2 to the Zeno unitary, stated in Theorem~\ref{thm:Tr2D}.
\begin{proof}[Proof of Theorem~\ref{thm:Tr2D}]
First, observe that for each trajectory $j$ we have by definition
\begin{equation}
\left\|
\Lambda_{2,\sigma_1}^{(j)}(n)
-\frac{1}{d_2}|\rme^{-\rmi tH_2})(\rme^{-\rmi tH_2}|
\right\|_2^2
=
\mathtt{P}\bigl(\Lambda_{2,\sigma_1}^{(j)}(n)\bigr)
-\frac{2}{d_2}
(\rme^{-\rmi tH_2}|
\Lambda_{2,\sigma_1}^{(j)}(n)
|\rme^{-\rmi tH_2})
+1.
\label{eq:distU-pur}
\end{equation}
Now, consider the expected distance to the Zeno unitary. By Jensen's inequality for convex functions, we get
\begin{align}
\left(
\mathbb{E}\!\left[
\left\|
\Lambda_{2,\sigma_1}^{(j)}(n)
-\frac{1}{d_2}|\rme^{-\rmi tH_2})(\rme^{-\rmi tH_2}|
\right\|_2
\right]
\right)^2
&\le
\mathbb{E}\!\left[
\left\|
\Lambda_{2,\sigma_1}^{(j)}(n)
-\frac{1}{d_2}|\rme^{-\rmi tH_2})(\rme^{-\rmi tH_2}|
\right\|_2^2
\right]
\nonumber\\
&=
\mathbb{E}\!\left[
\mathtt{P}\bigl(\Lambda_{2,\sigma_1}^{(j)}(n)\bigr)
\right]
-\frac{2}{d_2}
(\rme^{-\rmi tH_2}|
\Lambda^\mathrm{av}_{2,\sigma_1}(n)
|\rme^{-\rmi tH_2})
+1
\nonumber\\
&\le
2\left(
1-\frac{1}{d_2}
(\rme^{-\rmi tH_2}|
\Lambda^\mathrm{av}_{2,\sigma_1}(n)
|\rme^{-\rmi tH_2})
\right)
\nonumber\\
&\le
\frac{2}{n}\sqrt{d_1}\,\|\sigma_1\|_2T^2,
\end{align}
where we have bounded the purity of the Choi-Jamio\l{}kowski state by 1 and used Proposition~\ref{prop:Av2P}.
This proves the bound~\eqref{eq:DU2tr} on the distance in terms of the Choi-Jamio\l{}kowski state.
The other bound~\eqref{eq:DU2tr_diamond} on the diamond distance is then obtained by using Lemma~\ref{lemma:diamond_choi} in Appendix~\ref{appendix:lemmas} to translate the diamond distance into the trace distance in terms of the Choi-Jamio\l{}kowski state and by using the norm equivalence~\eqref{eq:equivalence_tr_fro} to bound the trace distance by the Frobenius distance available in~\eqref{eq:DU2tr}.
\end{proof}

These proofs are based on the fact that the average dynamical decoupling evolution on system 2 is an arithmetic mean of the trajectory dynamical decoupling evolutions on system 2, and hence, the Choi-Jamio\l{}kowski state of the average evolution is given by a \emph{convex sum} of the Choi-Jamio\l{}kowski states of the trajectory evolutions, i.e.,
\begin{equation}
\Lambda^\mathrm{av}_{2,\sigma_1}(n)
=\sum_j
q_j\Lambda_{2,\sigma_1}^{(j)}(n),\label{eq:convex_choi_2}
\end{equation}
with $q_j$ being the probability of realizing trajectory $j$ (which is actually uniformly distributed).
As proved in Propositions~\ref{prop:Av2D} and~\ref{prop:Av2P}, the average dynamical decoupling evolution on system 2 converges to the Zeno unitary in the decoupling limit $n\rightarrow\infty$. This leads to the results proved in Theorems~\ref{thm:Tr2P} and~\ref{thm:Tr2D}, which show that the trajectory dynamical decoupling evolutions on system 2 converge to the Zeno unitary \emph{on average}. This further ensures that \emph{almost all} trajectory dynamical decoupling evolutions on system 2 converge to the same Zeno unitary, due to the extremality of the pure Choi-Jamio\l{}kowski state of a unitary. This is made explicit in a quantitative way by Proposition~\ref{prop:Tr2Tail}.
\begin{proof}[Proof of Proposition~\ref{prop:Tr2Tail}]
Consider the fidelity of the reduced Choi-Jamiol\l{}kowski state of the average dynamical decoupling evolution on system 2 to the unitary $\rme^{-\rmi tH_2}$. It is expressed, using the convex sum~\eqref{eq:convex_choi_2}, as
\begin{align}
\frac{1}{d_2}(\rme^{-\rmi tH_2}|\Lambda^\mathrm{av}_{2,\sigma_1}(n)|\rme^{-\rmi tH_2})
&=
\sum_j
q_j
\frac{1}{d_2}
(\rme^{-\rmi tH_2}|\Lambda_{2,\sigma_1}^{(j)}(n)|\rme^{-\rmi tH_2}).
\intertext{Let $r_j=\frac{1}{d_2}(\rme^{-\rmi tH_2}|\Lambda_{2,\sigma_1}^{(j)}(n)|\rme^{-\rmi tH_2})$ be the fidelity of trajectory $j$ and fix $r\in[0,1]$. Then,}
\frac{1}{d_2}(\rme^{-\rmi tH_2}|\Lambda^\mathrm{av}_{2,\sigma_1}(n)|\rme^{-\rmi tH_2})
&=
\sum_j
q_j
r_j
\nonumber\\
&=
\sum_{r_j>r}
q_j
r_j
+
\sum_{r_j\le r}
q_j
r_j
\nonumber\\
&\le
\sum_{r_j>r}
q_j
+
\sum_{r_j\le r}
q_j
r
\nonumber\\
&=1-\mathbb{P}(\le r)+\mathbb{P}(\le r)r,
\end{align}
where $\mathbb{P}(\le r)=\sum_{r_j\le r} q_j$ is the probability of finding $r_j\le r$. Notice that we have used $\sum_{r_j>r}q_j=1-\mathbb{P}(\le r)$. Rearranging this inequality and using Proposition~\ref{prop:Av2P}, we get
\begin{align}
\mathbb{P}(\le r)
&\le\frac{1}{1-r}\left(
1-\frac{1}{d_2}(\rme^{-\rmi tH_2}|\Lambda^\mathrm{av}_{2,\sigma_1}(n)|\rme^{-\rmi tH_2})
\right)
\nonumber\\
&\le
\frac{1}{1-r}\frac{1}{n}\sqrt{d_1}\,\|\sigma_1\|_2T^2,
\end{align}
and thereby the statement of the proposition is proved.
\end{proof}

Now that we have proved the convergence bounds for the trajectory dynamical decoupling evolutions on system 2, we move on to proving the convergence of the trajectory dynamical decoupling evolutions on system 1.

\subsection{Convergence of the Trajectory Evolution of System 1}\label{sec:proofTr1}
Recall that for each trajectory $j$ the evolution of the total system $\mathcal{E}_n^{(j)}(t)$ is unitary. Therefore, its Choi-Jamio\l{}kowski state $\Lambda^{(j)}(n)$ is pure. As a consequence, the reduced Choi-Jamio\l{}kowski states $\Lambda_1^{(j)}(n)$ and $\Lambda_2^{(j)}(n)$ share the same spectrum given by the Schmidt coefficients of the full state $\Lambda^{(j)}(n)$. This makes it possible to shift the error bounds on system 2 to system 1. By doing so, we prove Theorem~\ref{thm:Tr1P}, which gives a bound on the expected purity and the expected operator norm of the reduced Choi-Jamio\l{}kowski state of system 1, as well as Theorem~\ref{thm:Tr1D}, which provides a bound on the expected distance of the reduced Choi-Jamio\l{}kowski state of the trajectory evolution of system 1 to a pure state.
\begin{proof}[Proof of Theorem~\ref{thm:Tr1P}]
First, notice that for each trajectory $j$ we can apply Lemma~\ref{lemma:Choi}.
In addition, we use $\|\Lambda_1^{(j)}(n)\|_\infty=\|\Lambda_2^{(j)}(n)\|_\infty$ [see~\eqref{eq:Schmidt_OpNorm}], and get
\begin{align}
\|\Lambda_{1,\sigma_2}^{(j)}(n)\|_\infty
&\ge1-d_2\|\sigma_2\|_\infty(1-\|\Lambda_1^{(j)}(n)\|_\infty)
\nonumber\\
&=1-d_2\|\sigma_2\|_\infty(1-\|\Lambda_2^{(j)}(n)\|_\infty).
\label{eq:OpNorm12}
\end{align}
Now, consider the square root of the expected purity of the Choi-Jamio\l{}kowski state $\Lambda_{1,\sigma_2}^{(j)}(n)$ of the trajectory evolution of system 1, and use Jensen's inequality for the square root (which is concave) to obtain
\begin{align}
\sqrt{
\mathbb{E}\!\left[
\mathtt{P}\bigl(\Lambda_{1,\sigma_2}^{(j)}(n)\bigr)
\right]
}
&\ge
\mathbb{E}\!\left[
\sqrt{\mathtt{P}\bigl(\Lambda_{1,\sigma_2}^{(j)}(n)\bigr)}
\right]
\nonumber\\
&\ge
\mathbb{E}\Bigl[\|\Lambda_{1,\sigma_2}^{(j)}(n)\|_\infty\Bigr]
\nonumber\\
&\ge1-d_2\|\sigma_2\|_\infty\left(
1-\mathbb{E}\Bigl[\|\Lambda_2^{(j)}(n)\|_\infty\Bigr]
\right)
\nonumber\\
&\ge1-\frac{1}{n}d_2\|\sigma_2\|_\infty T^2.
\end{align}
The second inequality follows from Lemma~\ref{lemma:pur_ev} in Appendix~\ref{appendix:lemmas}, the third inequality is due to~\eqref{eq:OpNorm12} with the linearity of the expectation value, and the last bound is a consequence of Theorem~\ref{thm:Tr2P} for $\sigma_1=\mathbb{1}/d_1$ (noting Lemma~\ref{LemmaReducedCJ}), for which we have $\|\sigma_1\|_2=1/\sqrt{d_1}$.
\end{proof}

\begin{proof}[Proof of Theorem~\ref{thm:Tr1D}]
Without loss of generality, let us label the eigenvalues $\{\lambda_k\}$ of $\Lambda_{1,\sigma_2}^{(j)}(n)$ in a descending order as $\lambda_0\ge\lambda_1\ge\cdots\ge\lambda_{d_1^2-1}$. Then, $|v_{1,\sigma_2}^{(j)}(n))$ is the normalized eigenvector of $\Lambda_{1,\sigma_2}^{(j)}(n)$ belonging to the largest eigenvalue $\lambda_0$, and we have
\begin{equation}
\mathbb{E}\!\left[
\left\|
\Lambda_{1,\sigma_2}^{(j)}(n)
-|v_{1,\sigma_2}^{(j)}(n))(v_{1,\sigma_2}^{(j)}(n)|
\right\|_\infty
\right]
=
\mathbb{E}[
\max(1-\lambda_0,\lambda_1)
].
\end{equation}
However, since $\lambda_1\le \sum_{k\ge1}\lambda_k =1-\lambda_0= 1-\|\Lambda_{1,\sigma_2}^{(j)}(n)\|_\infty$, we get
\begin{align}
\mathbb{E}\!\left[
\left\|
\Lambda_{1,\sigma_2}^{(j)}(n)
-|v_{1,\sigma_2}^{(j)}(n))(v_{1,\sigma_2}^{(j)}(n)|
\right\|_\infty
\right]
&= 
1-\mathbb{E}\Bigl[
\|\Lambda_{1,\sigma_2}^{(j)}(n)\|_\infty
\Bigr]\nonumber\\
&\le
\frac{1}{n}d_2\|\sigma_2\|_\infty T^2,
\label{eqn:D1ChoiProof}
\end{align}
where the last inequality is due to Theorem~\ref{thm:Tr1P}. 
\end{proof}

So far, we have presented the convergence bounds in terms of the Choi-Jamio\l{}kowsi states of the reduced evolutions. We also give a bound in terms of the matrix representation of the reduced CPTP maps and a bound in terms of the diamond distance in Corollary~\ref{cor:Tr1D}.
It is done with the help of Proposition~\ref{thm:dist_unitary_purity}, which provides lower and upper bounds on the distance of a CPTP map $\mathcal{T}$ to a unitary $\mathcal{U}$ in terms of the matrix representation of the maps and an upper bound on the diamond distance between $\mathcal{T}$ and $\mathcal{U}$.
\begin{proof}[Proof of Proposition~\ref{thm:dist_unitary_purity}]
The idea of the proof is to use a relation between the Kraus operators $E_k$ of a CPTP map $\mathcal{T}(\rho)=\sum_kE_k\rho E_k^\dag$ and the spectral decomposition of the Choi-Jamio\l{}kowski state $\Lambda=\sum_k \lambda_k |v_k)(v_k|$. If $\mathcal{T}$ is unitary, then its Kraus representation reduces to $\mathcal{T}=E \rho E^\dagger$ with $E^\dagger E=\mathbb{1}$. In this unitary case, the largest eigenvalue of $\Lambda$ is $\lambda_\mathrm{max}=1$. For a nonunitary case, we have $\lambda_\mathrm{max}<1$.
Let us fix a Kraus representation as indicated in~\eqref{Kraus-choice}.
By this, we have $\tr(E_k^\dagger E_\ell)=d\lambda_k \delta_{k\ell}$. Without loss of generality, let $\lambda_0=\lambda_\mathrm{max}$ be the largest eigenvalue of $\Lambda$, and $\lambda_k\le\lambda_0$ for $k=1,\ldots,d^2-1$.
\begin{itemize}
\item Upper bound in~\eqref{eq:dist_unitary_purity}: 
Using the triangle inequality,
		\begin{equation}
		\Fnorm{\hat{\mathcal{T}}-\hat{\mathcal{U}}}\leq \Fnorm{\hat{\mathcal{T}}-E_0\otimes\overline{E_0}}+\Fnorm{E_0\otimes\overline{E_0}-\hat{\mathcal{U}}}.
		\end{equation}
The first term can be estimated in the following way.
		\begin{align}
			\Fnorm{\hat{\mathcal{T}}-E_0\otimes\overline{E_0}}^2
			&=\left\|
			\sum_{k\ge1} E_k\otimes \overline{E_k}
			\right\|_2^2\\
			&=d^2\sum_{k\ge1}\lambda_k^2\\
			&=d^2(\pur -\lambda_0^2)\vphantom{\sum_{k\ge1}}\\
			&\leq d^2(\pur-\pur^2),
			\label{eqn:Distance1st}
		\end{align}
		where we have followed the same steps as~\eqref{eq:Fnorm_purity} and have used Lemma~\ref{lemma:pur_ev} in Appendix~\ref{appendix:lemmas}\@. 
		For the second term, we make use of the polar decomposition of $E_0=U|E_0|$~\cite[Sec.~7.3]{Horn2012}, where $U$ is the closest unitary from $E_0$~\cite[Theorem~IX.7.2]{Bhatia1997}. 
		Exploiting the fact that $\|{}\bullet{}\|_2$ is unitarily invariant, and recalling the Kraus condition $\sum_kE_k^\dag E_k=\mathbb{1}$, which implies $|E_0|^2\le|E_0|\le\mathbb{1}$, we can bound the second term as
		\begin{align}
				\Fnorm{E_0\otimes\overline{E_0}-\hat{\mathcal{U}}}^2
				&=\Fnorm{E_0\otimes\overline{E_0}-U\otimes\overline{U}}^2
				\nonumber\\
				&=\Fnorm{\mathbb{1}\otimes\mathbb{1}-|E_0|\otimes\overline{|E_0|}}^2
				\nonumber\\
				&=\tr\Bigl(
				\mathbb{1}\otimes\mathbb{1}
				-2|E_0|\otimes\overline{|E_0|}
				+|E_0|^2\otimes\overline{|E_0|^2}
				\Bigr)\nonumber
				\\
				&\le\tr\Bigl(\mathbb{1}\otimes\mathbb{1}-|E_0|^2\otimes\overline{|E_0|^2}\Bigr)\nonumber
				\\
				&=d^2(1-\lambda_0^2)\nonumber
				\\
				&\leq d^2(1-\pur^2),
				\label{eqn:Distance2ndPrime}
		\end{align}
		where we have used $\tr(|E_0|^2)=\tr(E_0^\dag E_0)=d\lambda_0$ for the second last line, and Lemma~\ref{lemma:pur_ev} in Appendix~\ref{appendix:lemmas} for the last inequality. 
		Combining~\eqref{eqn:Distance1st} and~\eqref{eqn:Distance2ndPrime}, the upper bound of the proposition is proved.

\item Lower bound in~\eqref{eq:dist_unitary_purity}:
This follows by direct calculation.
		\begin{align}
				\Fnorm{\hat{\mathcal{T}}-\hat{\mathcal{U}}}
				&\geq \Fnorm{U\otimes \overline{U}}-\left\|\sum_{k} E_k\otimes \overline{E_k}\right\|_2\\
				&=d-\sqrt{d^2\sum_k\lambda_k^2}\\
				&=d(1-\sqrt{\pur}).
		\end{align}

\item Bound~\eqref{eq:diamond_dist_choi}:
By the triangle inequality, we can write
		\begin{equation}
			\|\mathcal{T}-\mathcal{U}\|_\diamond
			\le
			\|\mathcal{T}-E_0\bullet E_0^\dagger\|_\diamond
			+
			\|E_0\bullet E_0^\dagger - U\bullet U^\dagger\|_\diamond.\label{eq:triangle_diamond_norm}
		\end{equation}
		Let us bound both terms in~\eqref{eq:triangle_diamond_norm} individually, starting with the first one.
		By using Lemma~\ref{lemma:diamond_choi} and~\eqref{eq:vectorization},
		\begin{align}
			\|\mathcal{T}-E_0\bullet E_0^\dagger\|_\diamond
			&\le
			d\left\|\Lambda-\frac{1}{d}|E_0)(E_0|\right\|_1
			\nonumber\displaybreak[0]\\
			&=
			d\,\Bigl\|
			\Lambda-\lambda_0|v_0)(v_0|
			\Bigr\|_1
			\nonumber\displaybreak[0]\\
			&=
			d\left\|
			\sum_{k\ge1}\lambda_k|v_k)(v_k|
			\right\|_1
			\nonumber\displaybreak[0]\\
			&=
			d\sum_{k\ge1}\lambda_k
			\nonumber\displaybreak[0]\\
			&=
			d(1-\lambda_0)
			\nonumber\displaybreak[0]\\
			&=
			d(1-\|\Lambda\|_\infty),
		\end{align}
		where from the third to the fourth line we used the fact that for a positive matrix $A$ we have $\|A\|_1=\tr A$.
		For the second term in~\eqref{eq:triangle_diamond_norm}, we first recall the definition of the diamond norm
		\begin{equation}
			\|E_0\bullet E_0^\dagger - U\bullet U^\dagger\|_\diamond
			=
			\sup_{\|A\|_1=1}\|(E_0\otimes\mathbb{1})A(E_0^\dagger\otimes\mathbb{1})-(U\otimes\mathbb{1})A(U^\dagger\otimes\mathbb{1})\|_1,
		\end{equation}
		where $\mathbb{1}$ is the $d\times d$ identity matrix. Then, using the triangle inequality and the unitary invariance of the trace norm,
		\begin{equation}
			\|E_0\bullet E_0^\dagger - U\bullet U^\dagger\|_\diamond
			\le
			\sup_{\|A\|_1=1}\left(
			\Bigl\|[(E_0-U)\otimes\mathbb{1}]A(E_0^\dagger\otimes \mathbb{1})\Bigr\|_1
			+
			\Bigl\|A[(E_0^\dagger-U^\dagger)\otimes\mathbb{1}]\Bigr\|_1
			\right).
		\end{equation}
		By using H\"older's inequality twice,
		\begin{align}
			\|E_0\bullet E_0^\dagger - U\bullet U^\dagger\|_\diamond
			&\le
			\sup_{\|A\|_1=1}\left(
			\|(E_0-U)\otimes\mathbb{1}\|_\infty\|A\|_1\|E_0^\dagger\otimes\mathbb{1}\|_\infty
			+
			\|A\|_1\|(E_0^\dagger-U^\dagger)\otimes\mathbb{1}\|_\infty
			\right)\nonumber\\
			&=
			\|E_0-U\|_\infty (1+\|E_0\|_\infty)\nonumber\\
			&=
			\|\mathbb{1}-|E_0|\|_\infty (1+\|E_0\|_\infty),
		\end{align}
		where in the last step we wrote $E_0$ in its polar decomposition $E_0=U|E_0|$ \cite[Sec.~7.3]{Horn2012} with $U$ the closest unitary to $E_0$ \cite[Theorem.~IX.7.2]{Bhatia1997} and $|E_0|=\sqrt{E_0^\dagger E_0}$. Also, we used the unitary invariance of the Frobenius norm. Rewriting this expression while recalling the Kraus condition $\sum_kE_k^\dagger E_k=\mathbb{1}$, which implies $|E_0|^2\le|E_0|\le \mathbb{1}$, yields
		\begin{align}
			\|E_0\bullet E_0^\dagger - U\bullet U^\dagger\|_\diamond
			&\le
			\|\mathbb{1}-|E_0|^2\|_\infty (1+\|\mathbb{1}\|_\infty)\nonumber\\
			&=
			2\|\mathbb{1}-E_0^\dagger E_0\|_\infty\nonumber\\
			&\le
			2\|\mathbb{1}-E_0^\dagger E_0\|_1\nonumber\\
			&=
			2\left\|\sum_{k\ge 1} E_k^\dagger E_k\right\|_1\nonumber\\
			&=
			2\tr\sum_{k\ge1}E_k^\dag E_k\nonumber\\
			&=
			2\sum_{k\ge1}(E_k|E_k)\nonumber\\
			&=
			2d\sum_{k\ge1}\lambda_k\nonumber\\
			&=
			2d(1-\lambda_0)\nonumber\\
			&=
			2d(1-\|\Lambda\|_\infty),
		\end{align}
		which proves the statement. Again, in the step from the fourth to the fifth line, we used the fact that for a positive matrix $A$ we have $\|A\|_1=\tr A$.
\end{itemize}
\end{proof}

\section{Conclusion}\label{sec:conclusion}
In summary, we have provided a framework to unify the random dynamical decoupling with the quantum Zeno effect through the average of dynamical decoupling evolution and a phenomenon we call equitability of system and bath. By this approach, we infer explicit error bounds for the random dynamical decoupling only from an error bound on the quantum Zeno limit.

Our framework makes use of the fact that the average random dynamical decoupling evolution is a manifestation of the quantum Zeno dynamics. In the Zeno limit, we obtain a unitary evolution on the bath through this average protocol. In turn, with a high probability, trajectory random dynamical decoupling evolutions converge to a unitary bath evolution in the decoupling limit as well. This observation can be transferred to the system by studying the Schmidt decomposition of the Choi-Jamio\l{}kowsi state of the full evolution: As it is a pure state for each trajectory evolution, its reduced states for the two systems have the same spectrum. Our discussion shows that the system and the bath can get assigned the same role in the context of quantum control and therefore can be treated equitable: There is no dedicated subsystem for control, i.e., if one subsystem evolves unitarily, the other subsystem will do so as well.

\section*{Acknowledgments}
AH would like to thank Mattias Johnsson for useful advice on numerics. In addition, AH acknowledges support by Leibniz Universit\"at Hannover and in particular by Tobias Osborne, who provided the mandatory infrastructure and was always available for helpful discussions. AH was supported by the Sydney Quantum Academy. DB acknowledges funding by the Australian
Research Council (project numbers FT190100106, DP210101367, CE170100009).
This work was supported in part by the Top Global University Project from the Ministry of Education, Culture, Sports, Science and Technology (MEXT), Japan. KY was supported by the Grants-in-Aid for Scientific Research (C) (No.~18K03470) and for Fostering Joint International Research (B) (No.~18KK0073) both from the Japan Society for the Promotion of Science (JSPS).

\appendix
\section{Basic Lemmas}\label{appendix:lemmas}
In this appendix, we prove five basic lemmas. The first one (Lemma~\ref{lemma:telescope}) is a standard telescope sum identity, which we use to prove the bound on the quantum Zeno limit, i.e.~in the proof of Theorem~\ref{thm:convergence_average}. The second one (Lemma~\ref{lemma:tr1_norm}) and the third one (Lemma~\ref{lemma:reduced_map}) are used in the discussion about the average dynamical decoupling. In particular, we need them in order to prove Proposition~\ref{prop:Av2D}. They are used to convert the norm of a reduced map to the the norm of the matrix representation of the full map. The fourth one (Lemma~\ref{lemma:pur_ev}) is an elementary statement, which relates the purity of a Choi-Jamio\l{}kowski state with its operator norm. The last lemma (Lemma~\ref{lemma:diamond_choi}) gives a norm equivalence between the diamond norm distance between two CPTP maps and the trace distance of their respective Choi-Jamio\l{}kowski states.

\begin{lemma}\label{lemma:telescope}
	For any
		$n\in\mathbb{N}$ and for any operators $A$ and $B$, we have
		\begin{equation}
			A^{n}-B^{n}=\sum_{k=0}^{n-1}A^{k}(A-B)B^{n-1-k}.
		\end{equation}
\end{lemma}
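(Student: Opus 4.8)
The plan is to prove the identity by exhibiting the right-hand side as a genuine telescoping sum, so that all intermediate terms cancel pairwise. The key observation is that each summand factors as a difference of two consecutive products of $A$'s and $B$'s. Concretely, I would set $c_k=A^{k}B^{n-k}$ for $k=0,\ldots,n$ and note that the generic summand rewrites as
\begin{equation}
A^{k}(A-B)B^{n-1-k}
=A^{k+1}B^{n-1-k}-A^{k}B^{n-k}
=c_{k+1}-c_k.
\end{equation}
Summing over $k$ from $0$ to $n-1$ then collapses to $c_n-c_0=A^{n}B^{0}-A^{0}B^{n}=A^{n}-B^{n}$, which is exactly the claim.

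The only point requiring any care is that $A$ and $B$ need not commute, so one must respect the ordering throughout: the factor $(A-B)$ sits between $A^{k}$ on the left and $B^{n-1-k}$ on the right, and this ordering is precisely what makes the rewriting $A^{k}(A-B)B^{n-1-k}=c_{k+1}-c_k$ valid without invoking any commutation relation. Since the argument uses only associativity of operator multiplication together with distributivity over addition, it holds verbatim for arbitrary operators (indeed, for elements of any ring).

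Alternatively, one could proceed by induction on $n$: the base case $n=1$ is the trivial identity $A-B=A^{0}(A-B)B^{0}$, and for the inductive step one writes $A^{n+1}-B^{n+1}=A(A^{n}-B^{n})+(A-B)B^{n}$, applies the induction hypothesis to $A^{n}-B^{n}$, and reindexes the resulting sum (shifting $k\mapsto k+1$) so that the leftover term $(A-B)B^{n}$ is absorbed as the $k=0$ contribution. I expect no genuine obstacle here; the entire content is elementary index bookkeeping, and the telescoping presentation is the cleaner of the two since it makes the cancellation manifest in a single line rather than hiding it inside an inductive reindexing.
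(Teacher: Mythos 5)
Your proof is correct and follows essentially the same route as the paper: the paper also expands each summand as $A^{k+1}B^{n-1-k}-A^{k}B^{n-k}$ and cancels via an index shift, which is exactly your telescoping with $c_k=A^{k}B^{n-k}$. No issues.
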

\begin{proof}
	This lemma can be proved by performing an index shift,
		\begin{align}
		\sum_{k=0}^{n-1}A^{k}(A-B)B^{n-1-k}&=\sum_{k=0}^{n-1}A^{k+1}B^{n-1-k}-\sum_{k=0}^{n-1}A^{k}B^{n-k}\nonumber\\
		&=\sum_{k=0}^{n-1}A^{k+1}B^{n-1-k}-\sum_{k=0}^{n-2}A^{k+1}B^{n-k-1}-B^n\nonumber\\
		&=A^n-B^n.
		\vphantom{\sum_{k=0}^{n-1}}
		\end{align}
\end{proof}

\begin{lemma}\label{lemma:tr1_norm}
Let $A$ be an operator acting on a bipartite Hilbert space $\mathscr{H}_1\otimes\mathscr{H}_2$, and $\ket{u}_1,\ket{v}_1\in\mathscr{H}_1$.
Then,
\begin{equation}
\|{}_1\bra{u}A\ket{v}_1\|_\infty
\le\|A\|_\infty\|\ket{u}_1\|\|\ket{v}_1\|,
\end{equation}
where $\|\bullet\|$ for vectors denotes the Euclidean norm in $\mathscr{H}_1$.
\end{lemma}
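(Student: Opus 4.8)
The plan is to express the operator norm of the partial matrix element ${}_1\bra{u}A\ket{v}_1$, which is itself an operator on $\mathscr{H}_2$, through the variational characterization of the operator norm, and then to recognize each of its matrix elements as a matrix element of the full operator $A$ evaluated on product vectors. This reduces the bound to the elementary fact that $|\bra{U}A\ket{V}|\le\infnorm{A}\,\|\ket{U}\|\,\|\ket{V}\|$.

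First I would recall the variational characterization of the operator norm: for any operator $B$ on $\mathscr{H}_2$,
\begin{equation}
\infnorm{B}=\sup_{\|\ket{\phi}\|=\|\ket{\psi}\|=1}|\bra{\phi}B\ket{\psi}|,
\end{equation}
where the supremum runs over unit vectors $\ket{\phi},\ket{\psi}\in\mathscr{H}_2$. Applying this to $B={}_1\bra{u}A\ket{v}_1$, and using that operators acting on the distinct tensor factors $\mathscr{H}_1$ and $\mathscr{H}_2$ commute, I would rewrite each matrix element of $B$ as a matrix element of the full operator $A$ against the product vectors $\ket{U}=\ket{u}_1\otimes\ket{\phi}_2$ and $\ket{V}=\ket{v}_1\otimes\ket{\psi}_2$:
\begin{equation}
\bra{\phi}\bigl({}_1\bra{u}A\ket{v}_1\bigr)\ket{\psi}=\bra{U}A\ket{V}.
\end{equation}

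The core estimate is then the standard bound $|\bra{U}A\ket{V}|\le\infnorm{A}\,\|\ket{U}\|\,\|\ket{V}\|$, which follows from the Cauchy–Schwarz inequality together with $\|A\ket{V}\|\le\infnorm{A}\,\|\ket{V}\|$. Since the norm of a tensor product of vectors factorizes and $\ket{\phi},\ket{\psi}$ are unit vectors, we have $\|\ket{U}\|=\|\ket{u}_1\|\,\|\ket{\phi}\|=\|\ket{u}_1\|$ and likewise $\|\ket{V}\|=\|\ket{v}_1\|$. Taking the supremum over $\ket{\phi},\ket{\psi}$ then gives precisely $\infnorm{{}_1\bra{u}A\ket{v}_1}\le\infnorm{A}\,\|\ket{u}_1\|\,\|\ket{v}_1\|$. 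I expect no genuine obstacle here, as the statement is elementary; the only care needed is the bookkeeping of which tensor factor each bra and ket acts on, and verifying the product-vector norms factorize as claimed. An equally short alternative would write ${}_1\bra{u}A\ket{v}_1=(\bra{u}_1\otimes\mathbb{1}_2)A(\ket{v}_1\otimes\mathbb{1}_2)$ and invoke submultiplicativity of the operator norm together with $\infnorm{\ket{v}_1\otimes\mathbb{1}_2}=\|\ket{v}_1\|$, which follows from multiplicativity of the operator norm under tensor products; I would note this variant as a remark but carry out the variational argument as the main line, since it is entirely self-contained.
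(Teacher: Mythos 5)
Your proof is correct and follows essentially the same route as the paper: both arguments characterize $\infnorm{{}_1\bra{u}A\ket{v}_1}$ variationally and reduce it to evaluating $A$ on product vectors of the form $\ket{v}_1\otimes\ket{\psi}_2$. The only cosmetic difference is that you use the bilinear form $\sup_{\phi,\psi}|\bra{\phi}B\ket{\psi}|$ with Cauchy--Schwarz, whereas the paper works with the quadratic form $\bra{\psi_\mathrm{max}}B^\dagger B\ket{\psi_\mathrm{max}}$ and the operator inequality $\ket{u}_1\bra{u}\otimes\mathbb{1}_2\le\|\ket{u}_1\|^2\,\mathbb{1}$.
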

\begin{proof}
Note that
\begin{equation}
A_2={}_1\bra{u}A\ket{v}_1
\end{equation}
acts on $\mathscr{H}_2$.
Then, there exists a normalized $\ket{\psi_\mathrm{max}}_2\in\mathscr{H}_2$ such that
\begin{align}
\|A_2\|_\infty^2
&={}_2\bra{\psi_\mathrm{max}}A_2^\dag A_2\ket{\psi_\mathrm{max}}_2
\nonumber\\
&=\Bigl({}_1\bra{v}\otimes{}_2\bra{\psi_\mathrm{max}}\Bigr)\,A^\dag\,\Bigl(\ket{u}_1\bra{u}\otimes\mathbb{1}_2\Bigr)\,A\,\Bigl(\ket{v}_1\otimes\ket{\psi_\mathrm{max}}_2\Bigr)
\nonumber\\
&\le\|\ket{u}_1\|^2
\Bigl({}_1\bra{v}\otimes{}_2\bra{\psi_\mathrm{max}}\Bigr)\,A^\dag A\,\Bigl(\ket{v}_1\otimes\ket{\psi_\mathrm{max}}_2\Bigr)
\nonumber\\
&\le\|\ket{u}_1\|^2\|\ket{v}_1\|^2\|A\|_\infty^2.
\end{align}
\end{proof}

\begin{lemma}\label{lemma:reduced_map}
Let $\sigma_1$ be an operator acting on a Hilbert space $\mathscr{H}_1$ of dimension $d_1$ and let $A_2$ be an operator acting on a Hilbert space $\mathscr{H}_2$. Let $\mathcal{T}$ denote a linear map on operators acting on $\mathscr{H}_1\otimes\mathscr{H}_2$ (not necessarily CPTP). Then,
\begin{equation}
\sup_{\|A_2\|_2=1}\|{\tr_1}[\mathcal{T}(\sigma_1\otimes A_2)]\|_2
=\|
(\mathbb{1}_1|\hat{\mathcal{T}}|\sigma_1)\|_\infty
\le
\sqrt{d_1}\,\|\sigma_1\|_2\|\hat{\mathcal{T}}\|_\infty.
\end{equation}
\end{lemma}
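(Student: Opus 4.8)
The plan is to move everything into the vectorized (superoperator) picture, where the partial trace $\tr_1$ turns into a contraction against $\vecbra{\mathbb{1}_1}$ and the supremum over $A_2$ becomes an operator norm; the inequality then follows from Lemma~\ref{lemma:tr1_norm} applied on the doubled Hilbert space of system~1.

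First I would record that row-vectorization is an isometry from (operators, Frobenius norm) to (vectors, Euclidean norm): since $(X|X)=\tr(X^\dagger X)$, one has $\|X\|_2=\|\vecket{X}\|$. I would then translate the partial trace by checking on product operators $B=\sigma_1\otimes A_2$ (where $\tr_1 B=(\tr\sigma_1)A_2$), which gives the identity $\vecket{\tr_1 B}=(\vecbra{\mathbb{1}_1}\otimes\hat{\mathbb{I}}_{22'})\vecket{B}$, i.e.\ contraction of the doubled system-1 legs against $\vecbra{\mathbb{1}_1}$. Combined with $\vecket{\sigma_1\otimes A_2}=\vecket{\sigma_1}\otimes\vecket{A_2}$ in the leg ordering $(\mathscr{H}_1\otimes\mathscr{H}_{1'})\otimes(\mathscr{H}_2\otimes\mathscr{H}_{2'})$ used for $\hat{\mathcal{D}}$, this yields
\begin{equation}
\vecket{\tr_1[\mathcal{T}(\sigma_1\otimes A_2)]}
=\vecbra{\mathbb{1}_1}\hat{\mathcal{T}}\bigl(\vecket{\sigma_1}\otimes\vecket{A_2}\bigr)
=M\vecket{A_2},\qquad
M:=\vecbra{\mathbb{1}_1}\hat{\mathcal{T}}\vecket{\sigma_1},
\end{equation}
where $M$ is the operator on $\mathscr{H}_2\otimes\mathscr{H}_{2'}$ obtained by contracting the doubled system-1 legs of $\hat{\mathcal{T}}$.

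Taking Euclidean norms and using the isometry gives $\|\tr_1[\mathcal{T}(\sigma_1\otimes A_2)]\|_2=\|M\vecket{A_2}\|$. Because $\|A_2\|_2=\|\vecket{A_2}\|$, the supremum over $\|A_2\|_2=1$ ranges over all unit vectors $\vecket{A_2}$, so by the variational characterization of the operator norm it equals $\|M\|_\infty$; this is the asserted equality. For the inequality I would apply Lemma~\ref{lemma:tr1_norm} with its roles of $\mathscr{H}_1$ and $\mathscr{H}_2$ played by the doubled spaces $\mathscr{H}_1\otimes\mathscr{H}_{1'}$ and $\mathscr{H}_2\otimes\mathscr{H}_{2'}$, and with $A=\hat{\mathcal{T}}$, $\ket{u}=\vecket{\mathbb{1}_1}$, $\ket{v}=\vecket{\sigma_1}$, obtaining $\|M\|_\infty\le\|\hat{\mathcal{T}}\|_\infty\,\|\vecket{\mathbb{1}_1}\|\,\|\vecket{\sigma_1}\|$. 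Finally I evaluate the vector norms $\|\vecket{\sigma_1}\|=\|\sigma_1\|_2$ and $\|\vecket{\mathbb{1}_1}\|=\|\mathbb{1}_1\|_2=\sqrt{\tr\mathbb{1}_1}=\sqrt{d_1}$, which delivers the bound.

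The only real work is the bookkeeping in the second step: confirming that $\tr_1$ corresponds exactly to contracting with $\vecbra{\mathbb{1}_1}$ on the doubled system-1 space, and that the input factorizes as $\vecket{\sigma_1}\otimes\vecket{A_2}$ under the chosen leg ordering. Once these vectorization translations are pinned down, identifying the supremum with $\|M\|_\infty$ and applying Lemma~\ref{lemma:tr1_norm} are both immediate.
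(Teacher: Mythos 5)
Your proposal is correct and follows essentially the same route as the paper's proof: both pass to the vectorized picture, identify the supremum with the operator norm of the contracted superoperator $(\mathbb{1}_1|\hat{\mathcal{T}}|\sigma_1)$, and then apply Lemma~\ref{lemma:tr1_norm} on the doubled Hilbert spaces. The only difference is that you spell out the vectorization bookkeeping (the isometry $\|X\|_2=\|\vecket{X}\|$ and the identity $\vecket{\tr_1 B}=(\vecbra{\mathbb{1}_1}\otimes\hat{\mathbb{I}}_{22'})\vecket{B}$) that the paper delegates to Eq.~\eqref{eq:22norm_infnorm} and a citation.
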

\begin{proof}
The Frobenius norm of the image of an operator under a linear map can be related to the operator norm of the matrix representation of the map according to~\eqref{eq:22norm_infnorm}. The partial trace on the level of matrix representations of linear maps is discussed in Ref.~\cite[Sec.~VD]{Wood2015}. Combining these two facts gives
\begin{align}
\sup_{\|A_2\|_2=1}\|{\tr_1}[\mathcal{T}(\sigma_1\otimes A_2)]\|_2
&=
\|
(\mathbb{1}_1|\hat{\mathcal{T}}|\sigma_1)\|_\infty,
\intertext{which can be bounded by Lemma~\ref{lemma:tr1_norm} as} 
\|(\mathbb{1}_1|\hat{\mathcal{T}}|\sigma_1)\|_\infty
&\le
\sqrt{(\mathbb{1}_1|\mathbb{1}_1)}
\sqrt{(\sigma_1|\sigma_1)}
\|\hat{\mathcal{T}}\|_\infty
\nonumber\\
&=
\sqrt{d_1}\,\|\sigma_1\|_2\|\hat{\mathcal{T}}\|_\infty.
\end{align}
\end{proof}

\begin{lemma}\label{lemma:pur_ev}
Let $\Lambda$ be a density operator with spectral decomposition $\Lambda=\sum_k\lambda_k|v_k)(v_k|$ and purity $\pur=\sum_k\lambda_k^2$. Then,
\begin{equation}
\|\Lambda\|_\infty^2
\leq
\pur
\leq
\|\Lambda\|_\infty.
\end{equation}
\end{lemma}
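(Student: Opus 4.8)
The plan is to argue entirely at the level of the eigenvalues $\{\lambda_k\}$, using that $\Lambda$ is a density operator. This means $\lambda_k\ge0$ for all $k$ and $\sum_k\lambda_k=1$. The first step I would take is to record that for a positive semidefinite operator the operator norm coincides with the largest eigenvalue, $\|\Lambda\|_\infty=\max_k\lambda_k=:\lambda_{\max}$ (this is immediate from the definition of $\|\cdot\|_\infty$ as the largest singular value, which for $\Lambda\ge0$ equals the largest eigenvalue). This reduces both claimed inequalities to one-line estimates on the finite nonnegative sequence $(\lambda_k)$.

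For the lower bound $\|\Lambda\|_\infty^2\le\pur$, I would simply observe that the purity $\pur=\sum_k\lambda_k^2$ contains the single term $\lambda_{\max}^2$, and all the remaining terms are nonnegative. Discarding them gives $\pur\ge\lambda_{\max}^2=\|\Lambda\|_\infty^2$. For the upper bound $\pur\le\|\Lambda\|_\infty$, I would use the termwise estimate $\lambda_k^2\le\lambda_{\max}\lambda_k$, valid because $0\le\lambda_k\le\lambda_{\max}$; summing over $k$ and invoking the trace normalization $\sum_k\lambda_k=1$ yields $\pur\le\lambda_{\max}\sum_k\lambda_k=\lambda_{\max}=\|\Lambda\|_\infty$.

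There is no real obstacle here: the only ingredients are nonnegativity of the spectrum, the trace normalization $\tr\Lambda=1$, and the identification of $\|\Lambda\|_\infty$ with the top eigenvalue. It may be worth remarking that both inequalities are saturated precisely when $\Lambda$ is pure, i.e.\ when exactly one $\lambda_k$ equals $1$ and the rest vanish, since this is exactly the regime (convergence to a unitary) in which the lemma is applied throughout the main text.
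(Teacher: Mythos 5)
Your proof is correct and follows essentially the same route as the paper's: identifying $\|\Lambda\|_\infty$ with the largest eigenvalue $\lambda_{\max}$, bounding $\lambda_{\max}^2$ by the full sum of squares for the lower bound, and using $\lambda_k^2\le\lambda_{\max}\lambda_k$ together with $\sum_k\lambda_k=1$ for the upper bound. No differences worth noting.
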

\begin{proof}
	The statement follows from the definition of $\pur$ and $\infnorm{\Lambda}$ as well as the fact that $0\leq\lambda_k\leq 1$, $\forall k$, and $\sum_k\lambda_k=1$. Denoting by $\lambda_\mathrm{max}$ the largest eigenvalue among $\{\lambda_k\}$,
\begin{equation}
\|\Lambda\|_\infty^2
=\lambda_\mathrm{max}^2
\leq \sum_k \lambda_k^2
=\pur
\leq\sum_k\lambda_\mathrm{max}\lambda_k
=\|\Lambda\|_\infty.
\end{equation}
\end{proof}

\begin{lemma}\label{lemma:diamond_choi}
	Let $\mathcal{S}$ and $\mathcal{T}$ be linear maps acting on a $d$-dimensional quantum system. Let $\Lambda(\mathcal{S})$ and $\Lambda(\mathcal{T})$ be their Choi-Jamio\l{}kowski states. Then,
	\begin{equation}
		\frac{1}{d}\|
		\mathcal{S}-\mathcal{T}
		\|_\diamond
		\le
		\|
		\Lambda(\mathcal{S})-\Lambda(\mathcal{T})
		\|_1
		\le
		\|
		\mathcal{S}-\mathcal{T}
		\|_\diamond.
	\end{equation}
	\begin{proof}
	\mbox{}
	\begin{itemize}
	\item Upper bound:
		By linearity,
		\begin{align}
			\|
			\Lambda(\mathcal{S})-\Lambda(\mathcal{T})
			\|_1
			&=
			\left\|
			[(\mathcal{S}-\mathcal{T})\otimes\mathbb{I}]\!\left(\frac{1}{d}|\mathbb{1})(\mathbb{1}|
		\right)
			\right\|_1\nonumber\\
			&\le
			\sup_{\|A\|_1=1}
			\|
			[(\mathcal{S}-\mathcal{T})\otimes\mathbb{I}](A)
			\|_1\nonumber\\
			&=
			\|
			\mathcal{S}-\mathcal{T}
			\|_\diamond.
		\end{align}
		\item Lower bound:
		By Ref.~\cite[Proposition~3.38]{Watrous2018}, the supremum in the diamond norm distance is reached by a rank 1 operator, say $A=|X)(Y|$ with $\Fnorm{X}=1=\Fnorm{Y}$, i.e.,
		\begin{equation}
			\|
			\mathcal{S}-\mathcal{T}
			\|_\diamond
			=
			\left\|
			[
			(\mathcal{S}-\mathcal{T})\otimes\mathbb{I}
			]
			\bigl(|X)(Y|\bigr)
			\right\|_1.
		\end{equation}
		Using Roth's lemma~\eqref{Roth}, we can write $|X)=|\mathbb{1}X)=(\mathbb{1}\otimes X\transpose)|\mathbb{1})$ (and analogously for $Y$), and hence
		\begin{align}
			\|
			\mathcal{S}-\mathcal{T}
			\|_\diamond
			&=
			d\left\|
			[
			(\mathcal{S}-\mathcal{T})\otimes\mathbb{I}
			]
			\left((\mathbb{1}\otimes X\transpose)\frac{1}{d}|\mathbb{1})(\mathbb{1}|(\mathbb{1}\otimes Y\transpose)^{\dagger}\right)
			\right\|_1\nonumber\\
			&=
			d\|
			(\mathbb{1}\otimes X\transpose)
			[
			\Lambda(\mathcal{S})-\Lambda(\mathcal{T})
			]
			(\mathbb{1}\otimes Y\transpose)^{\dagger}
			\|_1.
\intertext{By H\"older's inequality,}
			\|
			\mathcal{S}-\mathcal{T}
			\|_\diamond
			&\le
			d\|
			\mathbb{1}\otimes X\transpose
			\|_\infty
			\|
			\mathbb{1}\otimes \overline{Y}
			\|_\infty
			\|
			\Lambda(\mathcal{S}) - \Lambda(\mathcal{T})
			\|_1\nonumber\\
			&\le
			d\|
			X\transpose
			\|_2
			\|
			\overline{Y}
			\|_2
			\|
			\Lambda(\mathcal{S}) - \Lambda(\mathcal{T})
			\|_1\nonumber\\
			&=
			d\|
			\Lambda(\mathcal{S}) - \Lambda(\mathcal{T})
			\|_1.
		\end{align}
	\end{itemize}
	\end{proof}
\end{lemma}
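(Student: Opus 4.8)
The plan is to prove the claimed norm equivalence by establishing its two inequalities separately. The upper bound $\|\Lambda(\mathcal{S})-\Lambda(\mathcal{T})\|_1\le\|\mathcal{S}-\mathcal{T}\|_\diamond$ is essentially immediate from the definition of the diamond norm, while the lower bound requires a bit more machinery and is where the real work lies.

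For the upper bound, I would start from the definition~\eqref{eq:Choi-state} of the Choi-Jamio\l{}kowski state and use linearity to write $\Lambda(\mathcal{S})-\Lambda(\mathcal{T})=[(\mathcal{S}-\mathcal{T})\otimes\mathbb{I}]\bigl(\tfrac{1}{d}|\mathbb{1})(\mathbb{1}|\bigr)$. Since $\tfrac{1}{d}|\mathbb{1})(\mathbb{1}|$ is a normalized maximally entangled state with unit trace norm, it is an admissible test operator in the supremum defining $\|\mathcal{S}-\mathcal{T}\|_\diamond$. Hence its trace norm after applying the map is bounded above by that supremum, which is exactly the upper bound.

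For the lower bound $\|\mathcal{S}-\mathcal{T}\|_\diamond\le d\,\|\Lambda(\mathcal{S})-\Lambda(\mathcal{T})\|_1$, the key step is to invoke the known fact (e.g.~\cite[Proposition~3.38]{Watrous2018}) that the supremum in the diamond norm is attained on a rank-one operator, which I would write as $A=|X)(Y|$ with $\Fnorm{X}=\Fnorm{Y}=1$. The central idea is then to convert this test operator into the maximally entangled state via Roth's lemma~\eqref{Roth}: writing $|X)=(\mathbb{1}\otimes X\transpose)|\mathbb{1})$ and analogously for $Y$ lets me factor $|X)(Y|=d\,(\mathbb{1}\otimes X\transpose)\bigl(\tfrac{1}{d}|\mathbb{1})(\mathbb{1}|\bigr)(\mathbb{1}\otimes\overline{Y})$. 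Because $\mathcal{S}-\mathcal{T}$ acts nontrivially only on the first tensor factor, the operators $\mathbb{1}\otimes X\transpose$ and $\mathbb{1}\otimes\overline{Y}$ commute past the map and end up sandwiching the difference of Choi states. Applying H\"older's inequality $\|ABC\|_1\le\|A\|_\infty\|B\|_1\|C\|_\infty$ and bounding $\|\mathbb{1}\otimes X\transpose\|_\infty=\|X\|_\infty\le\Fnorm{X}=1$ (and likewise for $Y$) then produces precisely the factor $d$ and the desired inequality.

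I expect the main obstacle to be the lower bound, and specifically the reliance on the rank-one optimality of the diamond-norm supremum: without this reduction one cannot represent an arbitrary test operator through the maximally entangled state, and the Roth's-lemma factorization that makes the Choi state appear would not go through. A secondary point demanding care is the bookkeeping of the $1/d$ factor from the normalization of $|\mathbb{1})(\mathbb{1}|$, which is exactly what generates the asymmetric constant $d$ distinguishing the two sides of the equivalence.
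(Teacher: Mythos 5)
Your proposal is correct and follows essentially the same route as the paper's proof: the upper bound via linearity and the admissibility of the normalized maximally entangled state in the diamond-norm supremum, and the lower bound via the rank-one optimality from Watrous together with Roth's lemma and H\"older's inequality, including the same bound $\|\mathbb{1}\otimes X\transpose\|_\infty\le\Fnorm{X}=1$. No substantive differences to report.
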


\section{Specification of the Models in the Numerical Simulations}\label{appendix:numerics}
This appendix is dedicated to the numerical simulations in the main text of this paper. For transparency reasons, we here specify all the models and the parameters used in this paper. Figures~\ref{fig:Zeno_full} and~\ref{fig:DD_full} are both based on the same model and parameters.

The model we study consists of one system qubit and one bath qubit, i.e.~$d_1=d_2=2$, and therefore $d=d_1d_2=4$. A Hamiltonian $H$ was chosen generically by the following procedure. First, we sampled a $d\times d$ matrix with complex entries, where both real and imaginary parts of each entry were randomly drawn between $-1$ and $1$. Then, we made it Hermitian by adding to this matrix its Hermitian conjugate. Afterwards, we made it traceless by subtracting $(\tr H/d)\mathbb{1}$. This Hamiltonian was then normalized, such that $\|H\|_\infty=1$. Last, we rounded each entry of the resulting matrix to two decimal places. This procedure gave rise to the matrix
\begin{equation}
	H=
	\left(
\begin{array}{cccc}
 -0.10 & -0.03-0.35\,\rmi & -0.22-0.36\,\rmi & 0.13  +0.21\,\rmi \\
 -0.03+0.35\,\rmi & -0.29 & 0.20  +0.27\,\rmi & -0.02-0.04\,\rmi \\
 -0.22+0.36\,\rmi & 0.20  -0.27\,\rmi & -0.33 & 0.30  +0.40\,\rmi \\
 0.13  -0.21\,\rmi & -0.02+0.04\,\rmi & 0.30  -0.40\,\rmi & 0.72 \\
\end{array}
	\right).
\end{equation}
This matrix can be written in the Pauli basis and its decomposition of the form~\eqref{Hamilton_decomp} is given by
\begin{align}
	H_1={}&{-0.120}\,X+0.200\,Y-0.195\,Z,\\
	H_2={}&0.135\,X-0.025\,Y-0.215\,Z,\\
	H_{12}={}&0.165\,X\otimes X + 0.030\,X\otimes Y -0.100\,X\otimes Z\nonumber\\
	&{}-0.240\,Y\otimes X + 0.035\,Y\otimes Y + 0.160\,Y\otimes Z\nonumber\\
	&{}-0.165\,Z\otimes X + 0.375\,Z\otimes Y + 0.310\,Z\otimes Z.
\end{align}
From this $H$, we obtained the matrix representation $\hat{\mathcal{H}}$ of the generator of the dynamics by
\begin{equation}
	\hat{\mathcal{H}}=H\otimes \mathbb{1} - \mathbb{1}\otimes H\transpose,
\end{equation}
where $\mathbb{1}$ is the $d\times d$ identity matrix. For all purposes, we chose $t=1/\|\mathcal{\hat{H}}\|_\infty$. Notice that our bounds only depend on the product $T=t\infnorm{\hat{\mathcal{H}}}$. Therefore, the norm of the Hamiltonian can effectively be rescaled by changing the free evolution time of the system. In our case, by construction we have $T=t\|\hat{\mathcal{H}}\|_\infty=1$.

Figure~\ref{fig:Zeno_full} compares the bound on the quantum Zeno limit from Theorem~\ref{thm:convergence_average} with a numerical simulation. The projection we use is the group average projection from the average dynamical decoupling introduced in~\eqref{DDprotocol}. It has a matrix representation
\begin{equation}
	\hat{\mathcal{D}}
	=
	\frac{1}{d_1}\vecket{\mathbb{1}_1}\vecbra{\mathbb{1}_1}\otimes\hat{\mathbb{I}}_{22'}
	=
	\left(
\begin{array}{cccccccccccccccc}
 \frac{1}{2} & 0 & 0 & 0 & 0 & 0 & 0 & 0 & 0 & 0 & \frac{1}{2} & 0 & 0 & 0 & 0 & 0 \\
 0 & \frac{1}{2} & 0 & 0 & 0 & 0 & 0 & 0 & 0 & 0 & 0 & \frac{1}{2} & 0 & 0 & 0 & 0 \\
 0 & 0 & 0 & 0 & 0 & 0 & 0 & 0 & 0 & 0 & 0 & 0 & 0 & 0 & 0 & 0 \\
 0 & 0 & 0 & 0 & 0 & 0 & 0 & 0 & 0 & 0 & 0 & 0 & 0 & 0 & 0 & 0 \\
 0 & 0 & 0 & 0 & \frac{1}{2} & 0 & 0 & 0 & 0 & 0 & 0 & 0 & 0 & 0 & \frac{1}{2} & 0 \\
 0 & 0 & 0 & 0 & 0 & \frac{1}{2} & 0 & 0 & 0 & 0 & 0 & 0 & 0 & 0 & 0 & \frac{1}{2} \\
 0 & 0 & 0 & 0 & 0 & 0 & 0 & 0 & 0 & 0 & 0 & 0 & 0 & 0 & 0 & 0 \\
 0 & 0 & 0 & 0 & 0 & 0 & 0 & 0 & 0 & 0 & 0 & 0 & 0 & 0 & 0 & 0 \\
 0 & 0 & 0 & 0 & 0 & 0 & 0 & 0 & 0 & 0 & 0 & 0 & 0 & 0 & 0 & 0 \\
 0 & 0 & 0 & 0 & 0 & 0 & 0 & 0 & 0 & 0 & 0 & 0 & 0 & 0 & 0 & 0 \\
 \frac{1}{2} & 0 & 0 & 0 & 0 & 0 & 0 & 0 & 0 & 0 & \frac{1}{2} & 0 & 0 & 0 & 0 & 0 \\
 0 & \frac{1}{2} & 0 & 0 & 0 & 0 & 0 & 0 & 0 & 0 & 0 & \frac{1}{2} & 0 & 0 & 0 & 0 \\
 0 & 0 & 0 & 0 & 0 & 0 & 0 & 0 & 0 & 0 & 0 & 0 & 0 & 0 & 0 & 0 \\
 0 & 0 & 0 & 0 & 0 & 0 & 0 & 0 & 0 & 0 & 0 & 0 & 0 & 0 & 0 & 0 \\
 0 & 0 & 0 & 0 & \frac{1}{2} & 0 & 0 & 0 & 0 & 0 & 0 & 0 & 0 & 0 & \frac{1}{2} & 0 \\
 0 & 0 & 0 & 0 & 0 & \frac{1}{2} & 0 & 0 & 0 & 0 & 0 & 0 & 0 & 0 & 0 & \frac{1}{2} \\
\end{array}
\right).
\end{equation}

In Fig.~\ref{fig:DD_full}, we compare the bounds from Sec.~\ref{sec:results} for random trajectory dynamical decoupling with a numerical simulation. The applied unitaries are chosen randomly from $\mathscr{V}=\{\mathbb{1},X,Y,Z\}$ with $X,Y,Z$ the Pauli matrices and $\mathbb{1}$ the $2\times 2$ identity matrix. Even though the set $\{\mathbb{1},X,Y,Z\}$ does not form a group it suffices to draw the decoupling operations from it to obtain the action of the group $\langle\mathbb{1},X,Y,Z\rangle=\{\pm\mathbb{1}, \pm X, \pm Y, \pm Z, \pm\rmi \mathbb{1}, \pm\rmi X, \pm\rmi Y, \pm\rmi Z\}$ generated by $\mathbb{1},X,Y,Z$. This is due to the fact that we only act via the adjoint representation.

For Fig.~\ref{fig:probability_full}, we used exactly the same model as before in order to show how the probability of ``bad'' random trajectories through $\mathscr{V}$ behaves. The numerical data for $\mathbb{E}\bigl[\pur\bigl(\Lambda_{1,\ket{0}\bra{0}}^{(j)}\bigr)\bigr]$ in Fig.~\ref{fig:typical_atypical_trajectory} coincide with those in Fig.~\ref{fig:Purity}. For the purity of $\Lambda_{1,\ket{0}\bra{0}}^\mathrm{typical}$ in Fig.~\ref{fig:typical_atypical_trajectory} we used the following typical random sequence of unitaries
\begin{align}
	j_\mathrm{typical}=
	\{&Z, Z, Y, Y, Y, Y, X, X, \mathbb{1}, \mathbb{1}, \mathbb{1}, Y, \mathbb{1}, \mathbb{1}, Y, Y, Y, Z, Z, X, Z, Y, Y, Y, Y,\nonumber\\
&Z, X, Y, Z, \mathbb{1}, X, Y, \mathbb{1}, Z, Y, X, Y, Y, Z, \mathbb{1}, Z, Z, X, Y, \mathbb{1}, Y, Z, X, \mathbb{1}, \mathbb{1},\nonumber\\
&Z, X, Y, Y, Y, X, Y, Z, Y, Y, X, Y, Y, Y, Y, \mathbb{1}, Z, Y, X, Y, Z, Z, X, \mathbb{1}, X,\nonumber\\
&\mathbb{1}, X, Y, Y, Z, X, Y, Z, X, \mathbb{1}, X, Z, Z, \mathbb{1}, Z, Y, X, X, \mathbb{1}, \mathbb{1}, Z, Y, Y, X, Y, \mathbb{1}\},
\end{align}
whereas for the purity of $\Lambda_{1,\ket{0}\bra{0}}^\mathrm{atypical}$ we used the following atypical random sequence of unitaries
\begin{align}
j_\mathrm{atypical}=
\{&\mathbb{1}, X, Y, \mathbb{1}, \mathbb{1}, \mathbb{1}, \mathbb{1}, \mathbb{1}, \mathbb{1}, \mathbb{1}, \mathbb{1}, \mathbb{1}, \mathbb{1}, \mathbb{1}, \mathbb{1}, \mathbb{1}, \mathbb{1}, \mathbb{1}, \mathbb{1}, \mathbb{1}, \mathbb{1}, \mathbb{1}, \mathbb{1}, \mathbb{1}, X,\nonumber\\
 &\mathbb{1}, X, \mathbb{1}, \mathbb{1}, \mathbb{1}, \mathbb{1}, \mathbb{1}, \mathbb{1}, \mathbb{1}, \mathbb{1}, \mathbb{1}, \mathbb{1}, \mathbb{1}, \mathbb{1}, \mathbb{1}, \mathbb{1}, \mathbb{1}, \mathbb{1}, \mathbb{1}, \mathbb{1}, \mathbb{1}, \mathbb{1}, \mathbb{1}, \mathbb{1}, \mathbb{1},\nonumber\\
 &\mathbb{1}, Z, Z, \mathbb{1}, \mathbb{1}, \mathbb{1}, \mathbb{1}, \mathbb{1}, \mathbb{1}, \mathbb{1}, Z, \mathbb{1}, \mathbb{1}, \mathbb{1}, \mathbb{1}, \mathbb{1}, \mathbb{1}, \mathbb{1}, \mathbb{1}, \mathbb{1}, \mathbb{1}, \mathbb{1}, \mathbb{1}, \mathbb{1}, \mathbb{1},\nonumber\\
 &\mathbb{1}, \mathbb{1}, \mathbb{1}, \mathbb{1}, Z, \mathbb{1}, \mathbb{1}, \mathbb{1}, \mathbb{1}, \mathbb{1}, \mathbb{1}, \mathbb{1}, \mathbb{1}, \mathbb{1}, \mathbb{1}, \mathbb{1}, \mathbb{1}, \mathbb{1}, \mathbb{1}, \mathbb{1}, \mathbb{1}, \mathbb{1}, \mathbb{1}, \mathbb{1}, \mathbb{1}, \mathbb{1}\},
\end{align}
where the identity $\mathbb{1}$ appears much more frequently than others.

\bibliography{BibDDEfficiencyPaper.bib}
\end{document}